\documentclass[letter,11pt]{article}%
\usepackage[latin9]{inputenc}
\usepackage{geometry}
\usepackage{color}
\usepackage{amsmath}
\usepackage{amsthm}
\usepackage{amssymb}
\usepackage{graphicx}
\usepackage{setspace}
\usepackage{esint}
\usepackage{amsfonts}
\usepackage{comment}
\usepackage{booktabs}
\usepackage{lscape}
\usepackage{chngpage}
\usepackage{array}
\usepackage[titletoc,title]{appendix}
\usepackage[labelsep=period]{caption}
\usepackage{epstopdf}
\usepackage{natbib}
\usepackage[colorlinks,citecolor=blue,urlcolor=blue,filecolor=blue]{hyperref}
\usepackage{diagbox}%
\usepackage{mathtools}
\usepackage{threeparttable}
\usepackage{multirow}
\usepackage{float}
\usepackage{bm}
\usepackage{siunitx}
\DeclarePairedDelimiter\ceil{\lceil}{\rceil}
\DeclarePairedDelimiter\floor{\lfloor}{\rfloor}
\setcounter{MaxMatrixCols}{30}
\providecommand{\U}[1]{\protect\rule{.1in}{.1in}}
\providecommand{\U}[1]{\protect\rule{.1in}{.1in}}
\providecommand{\U}[1]{\protect\rule{.1in}{.1in}}

\makeatletter
\theoremstyle{plain}

\theoremstyle{definition}

\theoremstyle{plain}

\RequirePackage[colorlinks,citecolor=blue,urlcolor=blue]{hyperref}
\RequirePackage{hypernat}
\providecommand{\U}[1]{\protect\rule{.1in}{.1in}}
\geometry{scale=0.8}
\markboth{\hfil  \hfil } {\hfil \hfil}
\newtheorem{lemma}{Lemma}
\newtheorem{theorem}{Theorem}

\newtheorem{assumption}{Assumption}
\newtheorem{remark}{Remark}

\providecommand{\definitionname}{Definition}
\providecommand{\propositionname}{Proposition}
\providecommand{\theoremname}{Theorem}
\providecommand{\definitionname}{Definition}
\providecommand{\propositionname}{Proposition}
\providecommand{\theoremname}{Theorem}
\providecommand{\definitionname}{Definition}
\providecommand{\propositionname}{Proposition}
\providecommand{\theoremname}{Theorem}
\makeatother
\providecommand{\definitionname}{Definition}
\providecommand{\propositionname}{Proposition}
\providecommand{\theoremname}{Theorem}

\graphicspath{{figures/}}

\usepackage{natbib}
 \bibpunct[, ]{(}{)}{,}{a}{}{,}%

\interdisplaylinepenalty=0
\begin{document}

\title{\bf Monte-Carlo Estimation of CoVaR}

\author{
Weihuan Huang\\[-6pt]
{\small School of Management \& Engineering, Nanjing University, Nanjing 210093, China}\\[6pt]
Nifei Lin\\[-6pt]
{\small School of Management, Fudan University, Shanghai 200433, China}\\[6pt]
L. Jeff Hong\\[-6pt]
{\small School of Management and School of Data Science, Fudan University, Shanghai 200433, China}
}

\footnotetext[1]{L. Jeff Hong (corresponding author), email: hong\_liu@fudan.edu.cn }

\date{ }
\maketitle

\singlespacing
\begin{abstract}
\noindent ${\rm CoVaR}$ is one of the most important measures of financial systemic risks. It is defined as the risk of a financial portfolio conditional on another financial portfolio being at risk. In this paper we first develop a Monte-Carlo simulation-based batching estimator of CoVaR and study its consistency and asymptotic normality. We show that the optimal rate of convergence of the batching estimator is $n^{-1/3}$, where $n$ is the sample size. We then develop an importance-sampling inspired estimator under the delta-gamma approximations to the portfolio losses, and we show that the rate of convergence of the estimator is $n^{-1/2}$. Numerical experiments support our theoretical findings and show that both estimators work well. 
\end{abstract}




\onehalfspacing
\section{Introduction}

Financial institutions are inter-connected because they may be counter-parties of the same financial contracts or they may hold the same financial assets. These connections create potential channels to propagate or even to amplify financial risks. Such risks are known as financial systemic risks, and they bring challenges to financial risk management. When financial systems are stable, financial institutions may measure their own risks in isolation, assuming that others will not default on their contracts or sell off their assets. In financial crisis, however, the assumption may not hold. The time that a financial institution is in great need of cash inflow may also be the time that its counter-parties cannot honor their contracts and its assets are devalued significantly. For instance, in the 2007-2009 financial crisis, the risks spread from structured investment vehicles to commercial banks and then to investment banks and hedge funds, led otherwise healthy financial institutions to default and finally caused a global financial crisis. Therefore, to manage financial systemic risk, we need to take into account the inter-connections among financial institutions and better understand how the risk of one financial institution affects the risk of another.

To manage financial systemic risk, the first step is to measure it. The most common risk measure used by financial institutions is the value-at-risk (${\rm VaR}$), which is defined as the upper quantile of an institution's portfolio loss distribution when it is considered in isolation \citep{Jorion}. However, it does not consider the inter-connections among financial institutions and is not an appropriate measure of financial systemic risk. \cite{CoVaR2016} propose a systemic risk measure ${\rm CoVaR}$, which is defined as the VaR of a portfolio loss conditional on another portfolio being at its VaR. They show that ${\rm CoVaR}$ well captures the cross-sectional tail-dependency between the whole financial system and a particular institution, and could predict the 2007-2009 crisis. Since then, ${\rm CoVaR}$ has become one of the most important measures of financial systemic risk. 

However, estimating CoVaR is a challenging problem because CoVaR is a conditional quantile conditioned on a probability zero event, which cannot be observed directly in the data. There is an emerging literature that handles this problem by assuming the loss distributions to follow certain structural models, where the model parameters may be estimated using financial data. The CoVaR may then be calculated given the models. Based on a linear factor model, i.e., the two portfolio losses are in linear relationship, \cite{CoVaR2016} propose a quantile regression approach to estimate CoVaR. They and \cite{girardi2013systemic} also  use GARCH models to capture the dynamic evolution of systemic risk contributions. Copula models are also popular in CoVaR estimations because they are convenient in modeling the dependence between the portfolio losses. For instance, \cite{mainik2014dependence} present analytical results for CoVaR using copulas. \cite{oh2018time} use a new class of copula-based dynamic models for high-dimensional conditional distributions, facilitating the estimation of CoVaR. \cite{karimalis2018measuring} also provide a simple closed-form expression of CoVaR for a broad range of copula families and allows time-varying exposures. One can also make distributional assumptions and use maximum likelihood techniques to estimate CoVaR. For instance, \cite{bernardi2013multivariate} estimate CoVaR using a multivariate Markov switching model with a student-t distribution accounting for heavy tails and nonlinear dependence, and \cite{cao2013multi} estimates a multivariate student-t distribution to calculate the CoVaRs across firms. Recently, \cite{bianchi2022non} show how to develop CoVaR estimators based on models where heavy tails, negative skew, asymmetric dependence and volatility clustering are taken into consideration. \cite{nolde2022extreme} develop a methodology to estimate CoVaR semi-parametrically within the framework of multivariate extreme value theory. 

These model-based approaches are efficient if the models are appropriately specified. Otherwise, they introduce bias that cannot be easily removed. In practice the portfolios of financial institutions are typically quite complicated, including many assets (e.g., financial derivatives) that are nonlinear in underlying risk factors. The aforementioned models may not be able to capture their dependence and may lead to significantly biased CoVaR estimators.

Monte-Carlo simulation is in general a flexible modeling technique that may capture the complex structures and dynamics in portfolio losses. It is widely used in financial engineering and risk management \citep{glasserman2004monte,hull2012risk}, and its usages in estimating and optimizing traditional risk measures, such as VaR and expected shortfalls, are well studied in the literature (see, for instance, \cite{Hong2014} for a comprehensive review of the topic). However, to the best of our knowledge, Monte-Carlo estimation of CoVaR has not been studied carefully. In this paper our goal is to fill this gap and to develop CoVaR estimators that can take advantage of the modeling flexibility of Monte-Carlo simulation and have provable convergence including the rate of convergence and the asymptotically valid confidence intervals. 

To develop Monte-Carlo estimators of CoVaR, we also need to handle the difficulty of conditioning on a probability zero event. We first propose a batching estimator. The idea is to divide the Monte-Carlo observations into multiple batches and use each batch to generate an observation from the conditional distribution where the condition holds approximately true. Once we have the (approximate) observations from the conditional distribution, the conditional quantile (i.e., the CoVaR) may be estimated. The idea of batching in handling the conditioning on a probability zero event is not new. \cite{Hong09} uses the same idea in estimating quantile sensitivity, which is a conditional expectation, instead of a conditional quantile as in CoVaR, conditioning on a probability zero event. To understand the large-sample behaviors of the batching estimator of CoVaR, we analyze its consistency and asymptotic normality. We show that the estimator is strongly consistent if both of the number of batches and the number of observations within a batch go to infinity as the total sample size $n$ goes to infinity. We also show that it is asymptotically normally distributed under mild conditions and the optimal rate of convergence is $n^{-1/3}$. The asymptotic normal distribution of the CoVaR estimator may be used to construct an asymptotically valid confidence interval of the CoVaR.

Although the batching estimator is strongly consistent, its optimal rate of convergence is only $n^{-1/3}$ and it is slower than $n^{-1/2}$, the typical rate of convergence of Monte-Carlo estimators. This slower rate is caused by the conditioning on a probability zero event, and it also implies that a large sample size is often needed to obtain an accurate estimate of CoVaR. To reduce the variance of the estimator and to improve the rate of convergence, we consider a special case where the two portfolio losses are modelled by delta-gamma approximations. Notice that delta-gamma approximations model the portfolio losses as quadratic functions of underlying risk factors, and they are commonly used in approximating losses of complicated financial portfolios \citep{hull2012risk}. Under the delta-gamma approximations, \cite{glasserman2000variance} and \cite{glasserman2002portfolio} use importance sampling (IS) techniques to reduce the variance of the VaR estimator.
In this paper we propose an IS scheme to the last dimension of the risk factors in the delta-gamma approximations so that the probability-zero condition in the definition of CoVaR holds approximately. Utilizing the structure of quadratic functions, we further show that {\it there exists a proper limit of the IS estimator that does not depend on the IS distribution of the last risk factor at all}. Therefore, we arrive at a new CoVaR estimator that is under the original probability distributions instead of the IS distributions. For this reason, we call the new estimator the ``IS-inspired CoVaR estimator". The new IS-inspired estimator not only reduces the estimation variance but also improves the rate of convergence to $n^{-1/2}$, successfully bypassing the difficulty of conditioning on a probability zero event. The idea of using IS to improve the rate of convergence has also been explored by \cite{liu2015simulating} in studying credit risk contributions. The difference is that he estimates conditional expectations under a linear copula model for portfolios of loans, while we estimate conditional quantiles of more complicated quadratic portfolios for portfolios of financial derivatives. This brings different structures and new challenges in developing IS-inspired estimators. We also prove the asymptotic normality of the estimator and develop an asymptotically valid confidence interval.

The rest of this paper is organized as follows: The problem is defined in Section \ref{sec2}. We then introduce the batching estimator and study its asymptotic properties in Section \ref{sec3}. In Section \ref{sec4} we introduce the delta-gamma approximations to portfolio losses and develop the IS-inspired estimator and its asymptotic properties. In Section \ref{sec5} we conduct numerical experiments to understand the performances of the two estimators on different types of problems, followed by conclusions in Section \ref{sec6}.

\section{Problem Definition}\label{sec2}

Let $X$ and $Y$ be two continuous random variables on a probability space $(\Omega, \mathcal{F}, {\rm Pr})$.  In the context of financial systemic risk management, $X$ and $Y$ may denote the losses of two financial portfolios. For example, $X$ and $Y$ may be the losses of the portfolios of two financial institutions, or $X$ may be the loss of the portfolio of a financial institution and $Y$ may be the loss of an index (which also represents a portfolio). Let ${\rm VaR}_\alpha(X)$ be the $\alpha$-VaR (i.e., $\alpha$-quantile) of $X$ with $\alpha\in(0,1)$. It satisfies 
\begin{equation}\label{VaRX}
	{\rm Pr}\left\{ X\leq {\rm VaR}_\alpha(X)\right\} = \alpha. 
\end{equation}
Notice that ${\rm VaR}_\alpha(X)$ means that we are $100\times\alpha\%$ confident that the random loss $X$ will not exceed  ${\rm VaR}_\alpha(X)$ and it measures the tail risk of the portfolio $X$. The concept of VaR was first proposed by J.P. Morgan in early 1990s and has become a widely adopted risk measure in global financial industries since then \citep{Jorion,DuffiePan}. However, VaRs cannot measure financial systemic risk, which caused significant losses and led to the collapse of major financial institutions in the 2007-2009 financial crisis. 

In the wake of the crisis, \cite{CoVaR2016} propose ${\rm CoVaR}_{\alpha,\beta}$ as a measure of financial systemic risk, which satisfies
\begin{equation}\label{defCoVaR}
	{\rm Pr}\left\{ Y\leq {\rm CoVaR}_{\alpha,\beta} | X= {\rm VaR}_\alpha(X)\right\} = \beta, 
\end{equation}
where $\alpha,\beta\in(0,1)$. It is the $\beta$-quantile of the conditional loss distribution of $Y$ conditioning on $ X= {\rm VaR}_\alpha(X)$, and it measures the tail risk of the portfolio $Y$ when the portfolio $X$ is at risk. Notice that ${\rm CoVaR}_{\alpha,\beta}={\rm VaR}_\beta(Y)$ if the two losses $X$ ad $Y$ are independent. However, the losses of financial portfolios are typically positively dependent. Then, ${\rm CoVaR}_{\alpha,\beta}$ is typically significantly larger than ${\rm VaR}_\beta(Y)$, indicating that the tail risk at the time of financial distress is significantly higher than that at the normal time.

Notice that both $X$ and $Y$ may be losses of complicated portfolios and their dependence may be quite difficult to capture using a simple parametric model. Monte-Carlo simulation models may be developed to simulate the dynamics of the portfolios and to generate observations of the losses  \citep{glasserman2004monte,hull2012risk}. Suppose that we have observed an independent and identically distributed (i.i.d.) sample of the losses from the simulation model, denoted by $(X_1,Y_1), (X_2,Y_2), \ldots, (X_n, Y_n)$. In this paper our goal is to develop an estimator of ${\rm CoVaR}_{\alpha,\beta}$ using the sample. Furthermore, because we can often afford a relatively large sample size in Monte-Carlo studies, in this paper we also want to understand the asymptotic properties of the estimator as the sample size $n$ goes to infinity.

Let $f(x,y)$ denote the joint density function of $(X,Y)$. Let $f_X(x)=\int_\mathbb{R}f(x,y){\rm d}y$ and $F_X(x)=\int_{-\infty}^x f_X(u){\rm d}u$ be the marginal density function and cumulative distribution function of $X$, respectively. Based on \cite{durrett2019probability}, we let
\begin{equation}\label{defofCP}
	F_{Y|X}(y|x)={\rm Pr}\{ Y\leq y | X=x \}  = \lim\limits_{\varepsilon\rightarrow 0}\frac{{\rm Pr}\{Y\leq y, |X- x|\leq \varepsilon\}}{{\rm Pr}\left\{ |X- x|\leq \varepsilon\right\}}
	= \int_{-\infty}^{y} \frac{f(x,v)}{f_X(x)} \mathrm{d}v
\end{equation}
be the conditional distribution function. To facilitate the development and the analysis of the CoVaR estimator, we make the following assumption on the distribution of $(X,Y)$.

\begin{assumption}\label{assu:dist}
	Let $\mathcal{X}\subset \mathbb{R}$ be a neighborhood of $x= {\rm VaR}_\alpha(X)$ and $\mathcal{Y}\subset \mathbb{R}$ be a neighborhood of $y= {\rm CoVaR}_{\alpha,\beta}$. Then, $f_X(x)$ and $f(x,y)$ are continuously differentiable and positive in $\mathcal{X}$ and $\mathcal{X}\times\mathcal{Y}$, respectively, and for any $y\in\mathcal{Y}$, $F_{Y|X}(y|x)$ is a twice differentiable function of $x$ in $\mathbb{R}$.
\end{assumption}

Notice that the assumption of continuous and positive density in a neighborhood of the VaR is common in analyzing the properties of VaRs, see \cite{Hong09}. By Assumption \ref{assu:dist}, it is clear that ${\rm VaR}_\alpha(X)$ is the unique value satisfying Equation \eqref{VaRX} and ${\rm VaR}_\alpha(X)= F_X^{-1}(\alpha)$. Furthermore, because  $f(x,y)$ is positive in $\mathcal{X}\times\mathcal{Y}$,  Assumption \ref{assu:dist} also guarantees that ${\rm CoVaR}_{\alpha,\beta}$ is the unique solution of Equation \eqref{defCoVaR}. In fact, we know that, for $x\in\mathcal{X}$, $F_{Y|X}(y|x)$ is a differentiable function of $y$ and the conditional density $f_{Y|X}(y|x)$ satisfies
\begin{equation*}\label{Fpiany}
	f_{Y|X}(y|x) = \frac{\partial}{\partial y} F_{Y|X}(y|x) = \frac{f(x,y)}{f_X(x)} > 0, \text{ for } y\in\mathcal{Y}. 
\end{equation*}
Therefore, for any $x\in\mathcal{X}$, we have an inverse function $y=F_{Y|X}^{-1}(\beta|x)$, and 
\begin{equation}\label{ConQuan}
	{\rm CoVaR}_{\alpha, \beta}= F^{-1}_{Y|X}(\beta\,|\,{\rm VaR}_\alpha(X)). 
\end{equation} 

In this paper, we use the notation $Y_n=O_{\rm Pr}(X_n)$ to denote that, for any $\varepsilon>0$, there exists $c>0$ such that ${\rm Pr}\{|Y_n/X_n|>c\} \leq \varepsilon$ for all $n\in\mathbb{N}$, use the notation w.p.1 to denote ``with probability 1" (also known as ``almost surely"), and use the notation $X_n\Rightarrow X$ to denote that $X_n$ converges in distribution to $X$.

\section{Batching Estimation}\label{sec3}

As pointed out in the Introduction, the difficulty in estimating CoVaR lies in the fact that it is a conditional quantile that conditions on a probability zero event $\{X={\rm VaR}_\alpha(X)\}$. In this section we propose a batching estimator to address this difficulty. The basic idea is to divide the data into multiple batches, use each batch to obtain an observation from the approximate conditional distribution, and then use the multiple observations to estimate the CoVaR. The estimator is straight-forward, but analyzing its asymptotic properties is quite challenging. We present the estimator in detail in Section \ref{subsec:BE:est} and show its strong consistency and asymptotic normality in Sections \ref{subsec:BE:con} and \ref{subsec:BE:nor}, respectively.

\subsection{The Estimator}\label{subsec:BE:est}

We have an i.i.d.\ sample $\{(X_1,Y_1), (X_2,Y_2), \ldots, (X_n,Y_n)\}$ with the sample size $n$, and we take the following three-step approach to estimate the CoVaR. 
\begin{description}
\item[Step 1.] We divide the data into $k$ batches and each batch has $m$ observations with $n=k\times m$, and denote the observations in the $i$-th batch as $\{(X_{i,j}, Y_{i,j})\}_{j=1}^m$, $i=1,2,\ldots,k$. 

\item[Step 2.] For each batch (say $i$-th batch), we sort $X_{i,1},\ldots,X_{i,m}$ from lowest to highest, denoted by $X_{i,(1)}\leq X_{i,(2)}\leq \cdots \leq X_{i,(m)}$, where $X_{i,(j)}$ denotes the $j$-th smallest value, which is also the $j$-th order statistic of the batch. Then, by \cite{Serfling1980}, $X_{i,(\ceil{\alpha m})}$ is a strongly consistent estimator of ${\rm VaR}_\alpha(X)$. Let $\hat Y_i = Y_{i,(\ceil{\alpha m})}$, where $Y_{i,(\ceil{\alpha m})}$ is the corresponding observation of $X_{i,(\ceil{\alpha m})}$. It is important to note that $Y_{i,(j)}$ is {\it not} the $j$-th order statistic of $\{Y_{i,1},\ldots,Y_{i,m}\}$, $(j)$ represents the order of $\{X_{i,1},\ldots,X_{i,m}\}$. Furthermore, let $\hat Y =(Y|X= X_{(\ceil{\alpha m})})$ be the conditional random variable. Notice that $\{\hat Y_1,\ldots,\hat Y_k\}$ is an i.i.d.\ sample of $\hat Y$.

\item[Step 3.] We sort $\hat Y_1,\ldots,\hat Y_k$ from lowest to highest, denoted by $\hat Y_{(1)}\le\hat Y_{(2)}\le\cdots\le\hat Y_{(k)}$. Then, we define the batching estimator of ${\rm CoVaR}_{\alpha,\beta}$ as
\begin{equation*}\label{BE}
	\hat{Y}^{\rm BE}= \hat{Y}_{(\ceil{\beta k})}.
\end{equation*}
\end{description}

Notice that ${\rm CoVaR}_{\alpha,\beta}$ is the conditional quantile of $Y|X={\rm VaR}_\alpha(X)$. To estimate it, the major difficulty is that $\{X= {\rm VaR}_\alpha(X)\}$ is a probability zero event and it cannot be observed in the data. To solve this problem, the batching estimator approximates the set $\{X= {\rm VaR}_\alpha(X)\}$ by the set $\{X= X_{(\ceil{\alpha m})}\}$, which guarantees to have an observation of $Y|X= X_{(\ceil{\alpha m})}$ in each batch. Once there are (approximate) observations, the conditional quantile can be estimated approximately. In the rest of this section we analyze the asymptotic properties of the batching estimator as the sample size $n$ goes to infinity and provide guidelines on how to select the parameters $k$ and $m$.

\subsection{Strong Consistency}\label{subsec:BE:con}

Notice that, by \cite{Serfling1980}, the batching estimator $\hat{Y}^{\rm BE}$ is the $\beta$-inverse of the empirical distribution function of $\hat Y_1,\ldots,\hat Y_k$, defined by
\begin{equation*}
	\hat{F}_{k}(y) = \frac1k \sum_{i=1}^k I\{\hat{Y}_i\leq y\}.
\end{equation*}
To understand the consistency of $\hat{Y}^{\rm BE}$, we first analyze the convergence of $\hat{F}_{k}(y)$ to the conditional distribution function $F_{Y|X}(y\,|\,{\rm VaR}_\alpha(X))$. We divide the error of $\hat{F}_{k}(y)- F_{Y|X}(y\,|\,{\rm VaR}_\alpha(X))$ into two parts: 
\begin{equation}
	\label{2parts}
	\hat{F}_{k}(y)- F_{Y|X}(y\,|\,{\rm VaR}_\alpha(X))\ =\ \underbrace{\vphantom{\frac11}\hat{F}_{k}(y)- {\rm E}\big[ I\{\hat{Y}\leq y\} \big]}_{\rm across-batch~error} \ +\ \underbrace{\vphantom{\frac11}{\rm E}\big[I\{\hat{Y}\leq y\} \big]- F_{Y|X}(y\,|\,{\rm VaR}_\alpha(X)) }_{\rm within-batch~error}. 
\end{equation}
We see that the across-batch error is caused by the variance of $\hat{F}_{k}(y)$, and the within-batch error is the bias of $\hat{F}_{k}(y)$. 
Notice that, the within-batch error only depends on $m$, the number of observations in each batch,  while the across-batch error depends on both $m$ and $k$, the number of batches. 

We follow Equation \eqref{2parts} to analyze the convergence of the two terms separately. In the following two lemmas, we prove that both terms have the desired convergence.

\begin{lemma}\label{WBC}
	Suppose Assumption \ref{assu:dist} holds. Then, ${\rm E}\big[I\{\hat{Y}\leq y\} \big]\rightarrow F_{Y|X}(y\,|\,{\rm VaR}_\alpha(X))$ for any $y\in\mathcal{Y}$ as $m\rightarrow\infty$. 
\end{lemma}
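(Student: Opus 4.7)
The plan is to first rewrite $\mathrm{E}[I\{\hat{Y}\leq y\}]$ as an expectation of the conditional distribution function evaluated at the order statistic, and then pass to the limit using almost-sure convergence of the order statistic together with bounded convergence.

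For the first step, observe that within a single batch the pair consisting of the $\ceil{\alpha m}$-th smallest $X$ and its corresponding $Y$ can be written as $(X_{J},Y_{J})$, where $J$ is the (random) index $j\in\{1,\ldots,m\}$ for which $X_j=X_{(\ceil{\alpha m})}$. By the i.i.d.\ structure of the pairs $(X_j,Y_j)$, conditionally on $(X_1,\ldots,X_m)$ the random variable $Y_J$ has conditional distribution $F_{Y|X}(\,\cdot\,|\,X_{(\ceil{\alpha m})})$. Taking expectations with respect to the $X$'s then yields
\[
\mathrm{E}\bigl[I\{\hat{Y}\leq y\}\bigr]\;=\;\mathrm{E}\bigl[F_{Y|X}(y\,|\,X_{(\ceil{\alpha m})})\bigr].
\]
This reduces the lemma to a statement about the convergence of an expectation involving only the order statistic.

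Next, the sample-quantile result of \cite{Serfling1980}, already invoked in Step 2 of the estimator description, gives $X_{(\ceil{\alpha m})}\to {\rm VaR}_\alpha(X)$ w.p.1 as $m\to\infty$. By Assumption \ref{assu:dist}, $F_{Y|X}(y\,|\,x)$ is twice differentiable, and in particular continuous, in $x\in\mathbb{R}$ for every $y\in\mathcal{Y}$, so the continuous mapping theorem gives
\[
F_{Y|X}(y\,|\,X_{(\ceil{\alpha m})})\;\longrightarrow\;F_{Y|X}(y\,|\,{\rm VaR}_\alpha(X))\qquad\text{w.p.1.}
\]
Since $F_{Y|X}(y\,|\,x)\in[0,1]$ uniformly in $x$, the bounded convergence theorem lets us exchange limit and expectation, which delivers the claim.

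No step here is particularly deep; the main point requiring care is the conditioning identity at the outset, because $\hat{Y}$ is defined through an informal conditional-random-variable notation, and one must verify that, after integrating out the $X$'s, the distribution function of $\hat{Y}$ is indeed $\mathrm{E}[F_{Y|X}(y\,|\,X_{(\ceil{\alpha m})})]$. Once this identity is in place, the remainder of the argument is a routine application of almost-sure convergence of the sample quantile combined with bounded convergence.
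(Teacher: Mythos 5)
Your proposal is correct and follows essentially the same route as the paper's proof: rewrite ${\rm E}[I\{\hat{Y}\leq y\}]$ as ${\rm E}[F_{Y|X}(y\,|\,X_{(\ceil{\alpha m})})]$, invoke the almost-sure convergence of the sample quantile together with the continuity of $F_{Y|X}(y\,|\,\cdot)$ guaranteed by Assumption \ref{assu:dist}, and conclude by bounded (dominated) convergence. Your justification of the opening identity via conditioning on $(X_1,\ldots,X_m)$ and the measurability of the selecting index is in fact a bit more explicit than the paper's one-line appeal to the law of total expectation, but it is the same argument.
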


\begin{proof}
By the law of total expectation, we have 
\begin{equation}\label{eqn:WBC1}
	{\rm E}\left[ I\{\hat{Y}\leq y\} \right] = {\rm E}\left[{\rm E}\left[ I\{\hat{Y}\leq y\}\, |\, X_{(\ceil{\alpha m})} \right]\right] = {\rm E}\left[{\rm Pr}\left\{ Y\leq y\, |\, X= X_{(\ceil{\alpha m})} \right\}\right] = {\rm E}\left[F_{Y|X}\left( y\, |\, X_{(\ceil{\alpha m})} \right)\right]. 
\end{equation}

Notice that, by Assumption \ref{assu:dist}, for any  $y\in\mathcal{Y}$, $F_{Y|X}(y|x)$ is a continuous function of $x\in \mathbb{R}$. Furthermore, $X_{(\ceil{\alpha m})}\rightarrow {\rm VaR}_\alpha(X)$ w.p.1 as $m\rightarrow \infty$ \citep{Serfling1980}. Then, by the continuous mapping theorem \citep{van2000}, we have $F_{Y|X}\left( y\, |\, X_{(\ceil{\alpha m})} \right) \rightarrow F_{Y|X}(y\,|\,{\rm VaR}_\alpha(X))$ w.p.1 as $m\rightarrow\infty$. 

Furthermore, because $0\leq F_{Y|X}\left( y\, |\, X_{(\ceil{\alpha m})} \right)\leq 1$ for all $m$. Then, by the dominated convergence theorem \citep{durrett2019probability}, we have 
\begin{equation*}\label{DCT}
	\lim\limits_{m\rightarrow\infty}{\rm E}\left[ F_{Y|X}\left( y\, |\, X_{(\ceil{\alpha m})} \right) \right] = F_{Y|X}(y\,|\,{\rm VaR}_\alpha(X)).
\end{equation*}
Then, by Equation (\ref{eqn:WBC1}), we have ${\rm E}\big[I\{\hat{Y}\leq y\} \big]\rightarrow F_{Y|X}(y\,|\,{\rm VaR}_\alpha(X))$ for any $y\in\mathcal{Y}$ as $m\rightarrow\infty$. 
\end{proof}

\begin{lemma}\label{ABC}
	Suppose that Assumption \ref{assu:dist} holds. Then, we have $\hat{F}_{k}(y)\rightarrow {\rm E}\big[ I\{\hat{Y}\leq y\} \big]$ w.p.1 as $k \rightarrow\infty$. 
\end{lemma}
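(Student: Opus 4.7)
The plan is to recognize this lemma as a direct consequence of Kolmogorov's Strong Law of Large Numbers (SLLN). The key observation is that $\hat{Y}_1, \ldots, \hat{Y}_k$ form an i.i.d.\ sample of $\hat{Y}$: the $k$ batches are constructed from disjoint blocks of the original i.i.d.\ sample $\{(X_i, Y_i)\}$, so they are mutually independent, and because each batch is produced by the same measurable procedure applied to an identically distributed block of size $m$, the pair $(X_{i,(\ceil{\alpha m})}, Y_{i,(\ceil{\alpha m})})$ has the same joint distribution for every $i$. In particular, the $\hat{Y}_i$'s are i.i.d.\ copies of $\hat{Y}$, a fact already noted in Step 2 of Section \ref{subsec:BE:est}.

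Given this, for any fixed $y \in \mathcal{Y}$, the indicators $Z_i := I\{\hat{Y}_i \leq y\}$ are i.i.d.\ Bernoulli random variables taking values in $\{0,1\}$, so they are trivially integrable with common mean $E[I\{\hat{Y} \leq y\}]$. Applying the SLLN to $\{Z_i\}_{i=1}^k$ immediately gives
\[
\hat{F}_k(y) \;=\; \frac{1}{k}\sum_{i=1}^k I\{\hat{Y}_i \leq y\} \;\longrightarrow\; E\big[I\{\hat{Y} \leq y\}\big] \quad \text{w.p.1 as } k \to \infty,
\]
which is exactly the claimed conclusion.

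There is no real obstacle here; the only detail to spell out carefully is why the batches are i.i.d., which amounts to the measurability argument above. It is worth emphasizing that this lemma holds for \emph{any fixed} $m$: the sample size $m$ within each batch plays no role in this argument and is only pushed to infinity together with $k$ when Lemma \ref{WBC} and Lemma \ref{ABC} are combined through the decomposition \eqref{2parts} to obtain overall strong consistency of $\hat{F}_k(y)$.
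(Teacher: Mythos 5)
Your proof is correct for the lemma as literally stated, and it takes a genuinely different (and more off-the-shelf) route: you invoke the classical SLLN for the i.i.d.\ bounded indicators $I\{\hat{Y}_i\le y\}$, whereas the paper derives the exponential concentration bound ${\rm Pr}\{|\hat{F}_k(y)-{\rm E}[I\{\hat{Y}\le y\}]|\ge\varepsilon\}\le 2e^{-2k\varepsilon^2}$ from Hoeffding's inequality and then applies Borel--Cantelli. The difference matters downstream in two ways. First, the Hoeffding bound is not discarded after the lemma: it is reused verbatim as Equation \eqref{eq15} in the proof of Theorem \ref{thm3}, so the paper's route produces a quantitative tool that your argument does not supply. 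Second, and more substantively, the SLLN applies to a single i.i.d.\ sequence with $m$ held fixed --- as you yourself note --- but the lemma is invoked in Theorems \ref{consistencyofF} and \ref{thm3} in a triangular-array regime where $m=m(n)$ grows together with $k$, so the law of $\hat{Y}_i$ (and hence the target ${\rm E}[I\{\hat{Y}\le y\}]$) changes with $n$; the classical SLLN does not automatically cover that case. The Hoeffding bound, by contrast, is uniform in $m$, so summing $2e^{-2k\varepsilon^2}$ over $k$ and applying Borel--Cantelli yields almost-sure convergence of the difference $\hat{F}_k(y)-{\rm E}[I\{\hat{Y}\le y\}]$ even along the moving-$m$ sequence. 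If you keep the SLLN route, you should either explicitly restrict the lemma to fixed $m$ and strengthen the argument at the point where it is combined with Lemma \ref{WBC}, or supply a concentration-plus-Borel--Cantelli argument of exactly the paper's type to handle the joint limit.
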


\begin{proof}
For $m\geq 1$ and $i\geq 1$, we have $0\leq I\{\hat{Y}_i\leq y\}\leq 1$. Then, by Hoeffding's inequality \citep{Serfling1980}, for any $\varepsilon>0$, 
\begin{equation}\label{eq15}
	{\rm Pr}\left\{ \left| \hat{F}_{k}(y)- {\rm E}\left[ I\{\hat{Y}\leq y\} \right]  \right|\geq \varepsilon \right\} 
	\leq 2e^{-2k\varepsilon^2}. 
\end{equation}
Therefore, $\sum_{k=1}^\infty {\rm Pr}\left\{ \big| \hat{F}_{k}(y)- {\rm E}\big[ I\{\hat{Y}\leq y\} \big]  \big|\geq \varepsilon \right\}<\infty$. 
Hence, by the Borel-Cantelli Lemma \citep{Serfling1980}, we conclude the lemma.  
\end{proof}

Lemmas \ref{WBC} and \ref{ABC} basically show that both the within-batch and across-batch errors converge to zero as $m$ and $k$ both go to infinity. In the proofs of both lemmas, we take advantage of the boundedness of both the indicator function and empirical distribution function, which allows us to prove the strong consistency through the dominated convergence theorem and the Hoeffdling's inequality, without any additional assumptions. Combining these two lemmas with Equation (\ref{2parts}), we have the strong consistency of $\hat{F}_k(y)$ in the following theorem. 

\begin{theorem}\label{consistencyofF}
	Suppose that Assumption \ref{assu:dist} holds and $m\rightarrow\infty$ and $k\rightarrow\infty$ as $n\rightarrow\infty$.  Then, for any $y\in\mathcal{Y}$, we have $\hat{F}_{k}(y) \rightarrow F_{Y|X}(y\,|\,{\rm VaR}_\alpha(X))$ w.p.1 as $n\rightarrow\infty$. 
\end{theorem}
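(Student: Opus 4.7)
The plan is to exploit the error decomposition displayed in Equation \eqref{2parts}, which reduces the theorem almost immediately to Lemmas \ref{WBC} and \ref{ABC}. Writing
\begin{equation*}
\hat{F}_k(y) - F_{Y|X}(y\,|\,{\rm VaR}_\alpha(X)) = \underbrace{\hat{F}_k(y) - {\rm E}\big[I\{\hat Y \le y\}\big]}_{\text{across-batch error}} + \underbrace{{\rm E}\big[I\{\hat Y \le y\}\big] - F_{Y|X}(y\,|\,{\rm VaR}_\alpha(X))}_{\text{within-batch error}},
\end{equation*}
I would bound the two pieces separately and then combine them through the triangle inequality.

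For the within-batch error, which depends only on $m$ and is a purely deterministic sequence (the Monte-Carlo randomness having been integrated out by the expectation), Lemma \ref{WBC} already gives convergence to $0$ as $m\to\infty$. Since the hypothesis $m=m_n\to\infty$ is in force, this term is eventually smaller than any prescribed $\varepsilon/2$, and no probabilistic argument is needed at this stage.

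For the across-batch error, Lemma \ref{ABC} --- or, more precisely, the Hoeffding-based argument in its proof --- supplies a tail bound of the form $2e^{-2k\varepsilon^2}$ that does not involve $m$. I would then invoke the first Borel--Cantelli lemma along the sequence of sample sizes $n$ to conclude that this error vanishes w.p.1 as $k=k_n\to\infty$. Combining this almost-sure control with the deterministic bound on the within-batch bias via the triangle inequality yields $\hat F_k(y)\to F_{Y|X}(y\,|\,{\rm VaR}_\alpha(X))$ w.p.1 as $n\to\infty$, which is the claimed strong consistency.

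The main subtlety worth highlighting, and the point I would take most care over, is that the ``target'' ${\rm E}[I\{\hat Y\le y\}]$ in Lemma \ref{ABC} is itself a moving object: it depends on $m=m_n$, so Lemma \ref{ABC} cannot be applied as a black-box statement about convergence to a fixed limit. What rescues the argument is precisely that the Hoeffding bound in the proof of Lemma \ref{ABC} is uniform in $m$, which permits a Borel--Cantelli step along any coupled sequence $(k_n,m_n)$ with $k_n$ growing to infinity at a sufficient rate. Once that observation is made, the theorem follows as a clean bookkeeping consequence of Equation \eqref{2parts} together with the two preceding lemmas.
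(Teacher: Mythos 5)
Your proposal is correct and takes essentially the same route as the paper, which obtains the theorem simply by combining Lemma \ref{WBC} and Lemma \ref{ABC} through the decomposition in Equation \eqref{2parts}. Your additional observation---that the intermediate target ${\rm E}[I\{\hat Y\le y\}]$ moves with $m=m_n$, and that it is the uniformity in $m$ of the Hoeffding bound \eqref{eq15} that lets the Borel--Cantelli step go through along the coupled sequence $(k_n,m_n)$---is a point the paper leaves implicit, and you handle it correctly.
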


Let $\hat{F}^{-1}_{k}(z)= \inf\{y\in\mathbb{R}: \hat{F}_{k}(y)\geq z\}$  for any  $z\in[0,1]$. Based on the property of quantile estimator \citep{Serfling1980}, we have $\hat{Y}^{\rm BE}= \hat{F}^{-1}_{k}(\beta)$. Furthermore, as shown in Equation \eqref{ConQuan}, we have ${\rm CoVaR}_{\alpha, \beta}= F^{-1}_{Y|X}(\beta\,|\,{\rm VaR}_\alpha(X))$. In the following theorem, we use the convergence of $\hat{F}_{k}(y)$ to $F_{Y|X}(y\,|\,{\rm VaR}_\alpha(X))$ established in Theorem \ref{consistencyofF} to show that the inverse $\hat{F}^{-1}_{k}(\beta)$ converges to $F^{-1}_{Y|X}(\beta\,|\,{\rm VaR}_\alpha(X))$ as well, which implies the convergence of the batching estimator to the CoVaR.

\begin{theorem}\label{thm3}
	Suppose that Assumption \ref{assu:dist} holds and $m\rightarrow\infty$ and $k\rightarrow\infty$ as $n\rightarrow\infty$. Then, we have $\hat{Y}^{\rm BE} \rightarrow {\rm CoVaR}_{\alpha,\beta}$ w.p.1 as $n\rightarrow\infty$. 
\end{theorem}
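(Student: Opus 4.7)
The plan is to promote the w.p.1 pointwise convergence of $\hat{F}_k(y)$ established in Theorem \ref{consistencyofF} into w.p.1 convergence of the generalized inverse $\hat{F}_k^{-1}(\beta) = \hat{Y}^{\rm BE}$. The key structural input I would exploit is that, by Assumption \ref{assu:dist}, the conditional density $f_{Y|X}(y\,|\,{\rm VaR}_\alpha(X)) = f({\rm VaR}_\alpha(X),y)/f_X({\rm VaR}_\alpha(X))$ is strictly positive on a neighborhood $\mathcal{Y}$ of ${\rm CoVaR}_{\alpha,\beta}$, so $F_{Y|X}(y\,|\,{\rm VaR}_\alpha(X))$ is strictly increasing in $y$ on $\mathcal{Y}$ with value $\beta$ at $y={\rm CoVaR}_{\alpha,\beta}$.

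First, I fix $\varepsilon>0$ small enough that both $y^- = {\rm CoVaR}_{\alpha,\beta}-\varepsilon$ and $y^+ = {\rm CoVaR}_{\alpha,\beta}+\varepsilon$ lie in $\mathcal{Y}$. By strict monotonicity, there exists $\delta(\varepsilon)>0$ such that
\begin{equation*}
F_{Y|X}(y^-\,|\,{\rm VaR}_\alpha(X)) \le \beta-\delta(\varepsilon) < \beta < \beta+\delta(\varepsilon) \le F_{Y|X}(y^+\,|\,{\rm VaR}_\alpha(X)).
\end{equation*}
Applying Theorem \ref{consistencyofF} at the two specific points $y^-$ and $y^+$, I get $\hat{F}_k(y^-) \to F_{Y|X}(y^-\,|\,{\rm VaR}_\alpha(X))$ and $\hat{F}_k(y^+) \to F_{Y|X}(y^+\,|\,{\rm VaR}_\alpha(X))$ w.p.1 as $n\to\infty$. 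On the intersection of these two probability-one events, which is itself a probability-one event, for all sufficiently large $n$ we have $\hat{F}_k(y^-) < \beta$ and $\hat{F}_k(y^+) > \beta$.

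Next, I translate this into a bound on $\hat{Y}^{\rm BE} = \hat{F}_k^{-1}(\beta) = \inf\{y:\hat{F}_k(y)\ge \beta\}$. Since $\hat{F}_k$ is nondecreasing, $\hat{F}_k(y^-)<\beta$ forces $\hat{Y}^{\rm BE}> y^-$, while $\hat{F}_k(y^+)\ge\beta$ forces $\hat{Y}^{\rm BE}\le y^+$. Hence $|\hat{Y}^{\rm BE}-{\rm CoVaR}_{\alpha,\beta}|\le\varepsilon$ for all large $n$, on the aforementioned probability-one event. To upgrade from a fixed $\varepsilon$ to full almost sure convergence I take a countable sequence $\varepsilon_j\downarrow 0$, intersect the countably many probability-one events (still probability one), and conclude $\hat{Y}^{\rm BE}\to{\rm CoVaR}_{\alpha,\beta}$ w.p.1.

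The only subtlety, and thus the main obstacle to articulate carefully, is the interplay between almost sure convergence and the two-sided squeeze: Theorem \ref{consistencyofF} delivers pointwise-in-$y$ a.s.\ convergence rather than uniform a.s.\ convergence, so I must explicitly localize the argument at the two points $y^\pm$ and then aggregate countably many null sets, rather than quoting a Glivenko--Cantelli-type statement. Everything else is a standard quantile-continuity argument, leveraging that the strict positivity of $f_{Y|X}$ at ${\rm CoVaR}_{\alpha,\beta}$ guarantees continuity of $F^{-1}_{Y|X}(\cdot\,|\,{\rm VaR}_\alpha(X))$ at $\beta$.
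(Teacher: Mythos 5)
Your proposal is correct and follows essentially the same route as the paper: localize at ${\rm CoVaR}_{\alpha,\beta}\pm\varepsilon$, use the strict monotonicity of $F_{Y|X}(\cdot\,|\,{\rm VaR}_\alpha(X))$ on $\mathcal{Y}$ to sandwich $\beta$ between the values at the two points, and convert $\hat{F}_{k}(y^-)<\beta<\hat{F}_{k}(y^+)$ into $y^-<\hat{Y}^{\rm BE}\le y^+$ via the generalized inverse. The only difference is one of packaging: the paper re-derives a quantitative Hoeffding-type bound at the two points and applies the Borel--Cantelli lemma directly to the quantile deviation event, whereas you invoke the already-established a.s.\ convergence of Theorem \ref{consistencyofF} at $y^\pm$ and then intersect countably many probability-one events over $\varepsilon_j\downarrow 0$; both arguments are valid.
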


\begin{proof}
From Lemma \ref{WBC}, for any $\varepsilon>0$ and $y\in\mathcal{Y}$, there exists $M>0$ such that when $m>M$ we have 
\begin{equation}\label{eqn:thm3.2.11}
\Big| {\rm E}[I\{\hat{Y}\leq y \}] - F_{Y|X}(y\,|\, {\rm VaR}_\alpha(X)) \Big| < {\varepsilon\over 2}. 
\end{equation}
Then, when $m>M$, we have 
\begin{eqnarray}
	\lefteqn{ {\rm Pr}\left\{ \left|  \hat{F}_{k}(y) - F_{Y|X}(y\,|\, {\rm VaR}_\alpha(X)) \right| \geq \varepsilon \right\} } \nonumber \\
	& \leq & {\rm Pr}\left\{ \left|  \hat{F}_{k}(y) - {\rm E}[I\{\hat{Y}\leq y \}] \right| + \left| {\rm E}[I\{\hat{Y}\leq y \}] - F_{Y|X}(y\,|\, {\rm VaR}_\alpha(X)) \right| \geq \varepsilon \right\} \nonumber \\
	& \leq & {\rm Pr} \left\{ \left| \hat{F}_{k}(y) - {\rm E}[I\{\hat{Y}\leq y \}] \right|\geq \frac\varepsilon2 \right\} \label{eqn:thm3.2.10} \\
	& \leq & 2e^{-\frac12 k\varepsilon^2}, \label{eqn:thm3.2.1}
\end{eqnarray}
where Equation \eqref{eqn:thm3.2.10} follows Equation \eqref{eqn:thm3.2.11} and Equation \eqref{eqn:thm3.2.1} follows Equation \eqref{eq15}. 

For any small enough $\tilde{\varepsilon}>0$, we have both ${\rm CoVaR}_{\alpha,\beta} -\tilde{\varepsilon}$ and ${\rm CoVaR}_{\alpha,\beta} +\tilde{\varepsilon}$ are in $\mathcal{Y}$. By the definition \eqref{defCoVaR} of ${\rm CoVaR}_{\alpha,\beta}$, we have  
\begin{equation}\label{eq16}
	F_{Y|X}({\rm CoVaR}_{\alpha,\beta} -\tilde{\varepsilon}\,|\, {\rm VaR}_\alpha(X))< \beta <   F_{Y|X}({\rm CoVaR}_{\alpha,\beta} +\tilde{\varepsilon}\,|\, {\rm VaR}_\alpha(X)). 
\end{equation}
Let $\varepsilon_1 = \beta-F_{Y|X}({\rm CoVaR}_{\alpha,\beta} -\tilde{\varepsilon}\,|\, {\rm VaR}_\alpha(X))$, $\varepsilon_2 = F_{Y|X}({\rm CoVaR}_{\alpha,\beta} +\tilde{\varepsilon}\,|\, {\rm VaR}_\alpha(X))-\beta$, and $\varepsilon = \min\{\varepsilon_1, \varepsilon_2\}$, there exists $\tilde{M}>0$ such that when $m>\tilde{M}$, we have Equation \eqref{eqn:thm3.2.1} holds for both $y={\rm CoVaR}_{\alpha,\beta} -\tilde{\varepsilon}$ and $y={\rm CoVaR}_{\alpha,\beta} +\tilde{\varepsilon}$. Notice that $\big| \hat{F}_{k}({\rm CoVaR}_{\alpha,\beta} - \tilde{\varepsilon}) - F_{Y|X}({\rm CoVaR}_{\alpha,\beta} - \tilde{\varepsilon}\,|\, {\rm VaR}_\alpha(X)) \big|<\varepsilon$ implies $\hat{F}_{k}({\rm CoVaR}_{\alpha,\beta} -\tilde{\varepsilon})< \beta$, and $\big| \hat{F}_{k}({\rm CoVaR}_{\alpha,\beta} + \tilde{\varepsilon}) - F_{Y|X}({\rm CoVaR}_{\alpha,\beta} + \tilde{\varepsilon}\,|\, {\rm VaR}_\alpha(X)) \big|<\varepsilon$ implies $\hat{F}_{k}({\rm CoVaR}_{\alpha,\beta} +\tilde{\varepsilon})>\beta$. Then, when $m>\tilde{M}$, we have 
\begin{eqnarray}
	\lefteqn{{\rm Pr} \left\{ \hat{F}_{k}({\rm CoVaR}_{\alpha,\beta} -\tilde{\varepsilon})< \beta <   \hat{F}_{k}({\rm CoVaR}_{\alpha,\beta} +\tilde{\varepsilon}) \right\}}\nonumber \\
	&\geq& {\rm Pr} \left\{ \left\{ \left| \hat{F}_{k}({\rm CoVaR}_{\alpha,\beta} - \tilde{\varepsilon}) - F_{Y|X}({\rm CoVaR}_{\alpha,\beta} - \tilde{\varepsilon}\,|\, {\rm VaR}_\alpha(X)) \right|<\varepsilon \right\}\right. \nonumber \\
	&&~~~~ \left. \cap \
	\left\{ \left| \hat{F}_{k}({\rm CoVaR}_{\alpha,\beta} + \tilde{\varepsilon}) - F_{Y|X}({\rm CoVaR}_{\alpha,\beta} + \tilde{\varepsilon}\,|\, {\rm VaR}_\alpha(X)) \right|<\varepsilon \right\} \right\} \nonumber \\
	&\geq & 1-{\rm Pr}\left\{ \left| \hat{F}_{k}({\rm CoVaR}_{\alpha,\beta} - \tilde{\varepsilon}) - F_{Y|X}({\rm CoVaR}_{\alpha,\beta} - \tilde{\varepsilon}\,|\, {\rm VaR}_\alpha(X)) \right|\geq \varepsilon \right\} \nonumber \\
	& &~~~~ -\ {\rm Pr}\left\{ \left| \hat{F}_{k}({\rm CoVaR}_{\alpha,\beta} + \tilde{\varepsilon}) - F_{Y|X}({\rm CoVaR}_{\alpha,\beta} + \tilde{\varepsilon}\,|\, {\rm VaR}_\alpha(X)) \right|\geq \varepsilon \right\} \label{Bonferroni}\\
	&\geq& 1- 4e^{-\frac12 k \varepsilon^2}, \label{citeeq15} 
\end{eqnarray}
where Equation \eqref{Bonferroni} follows the Bonferroni inequality and Equation \eqref{citeeq15} follows Equation \eqref{eqn:thm3.2.1}. 

Moreover, we have $\hat{F}_{k}({\rm CoVaR}_{\alpha,\beta} -\tilde{\varepsilon})< \beta <   \hat{F}_{k}({\rm CoVaR}_{\alpha,\beta} +\tilde{\varepsilon})$ if and only if ${\rm CoVaR}_{\alpha,\beta} -\tilde{\varepsilon}< \hat{F}^{-1}_{k}(\beta) =\hat{Y}^{\rm BE} <  {\rm CoVaR}_{\alpha,\beta} +\tilde{\varepsilon}$, see Lemma 1.1.4 in \cite{Serfling1980}. Hence, we have 
$$
\sum_{k=1}^\infty {\rm Pr} \left\{ |\hat{Y}^{\rm BE} - {\rm CoVaR}_{\alpha,\beta}| \geq \tilde{\varepsilon} \right\} \leq \sum_{k=1}^\infty  4e^{-\frac12 k \varepsilon^2}  <\infty. 
$$
Therefore, we conclude the theorem  by the Borel-Cantelli Lemma. 
\end{proof}

As pointed out in the Introduction, \cite{Hong09} also applies the batching idea to estimate the quantile sensitivity, which is a conditional expectation instead of a conditional quantile. However, their estimator is only weakly consistent instead of strongly consistent. The strong consistency established by Theorem \ref{thm3} depends critically on the facts that the batching estimator is the inverse of an empirical distribution and the empirical distribution is strongly consistent (i.e., Theorem \ref{consistencyofF}) due to its boundedness.

\subsection{Asymptotic Normality}\label{subsec:BE:nor}

The strong consistency established in Theorem \ref{thm3} neither explains how fast is the convergence nor gives guidelines on how to choose $m$ and $k$. To solve these problems we need to analyze the rate of convergence of the batching estimator and to study its asymptotic distributions. We follow the same analysis framework used in Section \ref{subsec:BE:con}, first analyzing the rates of convergence of the two error terms in Equation \eqref{2parts} and then using the inverse empirical distribution function to derive the asymptotic distribution of the batching estimator $\hat{Y}^{\rm BE}$.

In the following two lemmas, we establish the rates of convergence of the within-batch and across-batch errors in Equation \eqref{2parts} respectively. 

\begin{lemma}\label{RCWB}
	Suppose that Assumption \ref{assu:dist} holds and, there exists $M>0$ such that  $|\frac{\partial}{\partial x}F_{Y|X}(y|{\rm VaR}_\alpha(X))|\leq M$ for all $y\in\mathcal{Y}$ and $|\frac{\partial^2}{\partial x^2}F_{Y|X}(y|x)|\leq M$ for all $(x,y)\in  \mathbb{R} \times\mathcal{Y}$. Then, we have $$\sup_{y\in\mathcal{Y}} \left| {\rm E}\left[I\{\hat{Y}\leq y\} \right]- F_{Y|X}(y\,|\,{\rm VaR}_\alpha(X))\right| = O(m^{-1})$$ as $m\rightarrow\infty$. 
\end{lemma}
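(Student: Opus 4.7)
The plan is to reduce the target quantity to moments of the sample $\alpha$-quantile via a second-order Taylor expansion, then control those moments using classical facts about order statistics.

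First, I would recall from the proof of Lemma \ref{WBC} the identity ${\rm E}[I\{\hat{Y}\leq y\}] = {\rm E}[F_{Y|X}(y\,|\,X_{(\ceil{\alpha m})})]$, so the quantity to bound is ${\rm E}[F_{Y|X}(y\,|\,X_{(\ceil{\alpha m})})]-F_{Y|X}(y\,|\,x_0)$ with $x_0={\rm VaR}_\alpha(X)$. Since $F_{Y|X}(y\,|\,\cdot)$ is twice differentiable on $\mathbb{R}$, Taylor's theorem with Lagrange remainder at $x_0$ gives
$$F_{Y|X}(y\,|\,X_{(\ceil{\alpha m})}) - F_{Y|X}(y\,|\,x_0) = \frac{\partial F_{Y|X}(y\,|\,x_0)}{\partial x}(X_{(\ceil{\alpha m})}-x_0) + \frac{1}{2}\frac{\partial^2 F_{Y|X}(y\,|\,\tilde{x})}{\partial x^2}(X_{(\ceil{\alpha m})}-x_0)^2,$$
for some random $\tilde{x}$ between $X_{(\ceil{\alpha m})}$ and $x_0$. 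Taking expectations, applying the triangle inequality, and using the two hypothesized uniform bounds by $M$ yields, uniformly in $y\in\mathcal{Y}$,
$$\Bigl|{\rm E}[F_{Y|X}(y\,|\,X_{(\ceil{\alpha m})})]-F_{Y|X}(y\,|\,x_0)\Bigr| \le M\,\bigl|{\rm E}[X_{(\ceil{\alpha m})}]-x_0\bigr| + \frac{M}{2}\,{\rm E}\bigl[(X_{(\ceil{\alpha m})}-x_0)^2\bigr].$$

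It then suffices to show that both the bias ${\rm E}[X_{(\ceil{\alpha m})}]-x_0$ and the mean-squared error ${\rm E}[(X_{(\ceil{\alpha m})}-x_0)^2]$ of the sample $\alpha$-quantile are $O(m^{-1})$. Under the probability integral transform $U_i=F_X(X_i)$, these reduce to moments of the $\ceil{\alpha m}$-th order statistic of i.i.d.\ uniforms, which is Beta-distributed with mean and variance of order $m^{-1}$. Because Assumption \ref{assu:dist} makes $f_X$ continuously differentiable and positive near $x_0$, $F_X^{-1}$ is twice continuously differentiable near $\alpha$, so a local Taylor expansion of $F_X^{-1}$ around $\alpha$ converts the Beta moments into the required $O(m^{-1})$ bounds on the two moments of $X_{(\ceil{\alpha m})}-x_0$.

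The main obstacle I anticipate is that the local Taylor expansion of $F_X^{-1}$ is valid only on an event where $U_{(\ceil{\alpha m})}$ stays close to $\alpha$, whereas the order statistic could in principle lie far from $\alpha$ with small probability. I would handle this by a truncation argument: on the event $\{|U_{(\ceil{\alpha m})}-\alpha|\le\delta\}$ the local expansion delivers the desired moment bounds, while its complement contributes only exponentially small probability (by a Hoeffding-type bound on the empirical CDF of $X$ at $x_0\pm\delta$, essentially the ingredient already used in Lemma \ref{ABC}), which decays faster than $1/m$ provided a mild integrability condition on the tails of $F_X^{-1}$ holds. Apart from this tail bookkeeping, the rest of the argument is a routine combination of Taylor expansion and order-statistic moment calculation, and the uniformity in $y\in\mathcal{Y}$ is automatic since the right-hand side of the displayed bound depends on $y$ only through the constants already controlled by $M$.
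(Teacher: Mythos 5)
Your argument is correct and follows essentially the same route as the paper: the identity ${\rm E}[I\{\hat Y\le y\}]={\rm E}\big[F_{Y|X}(y\,|\,X_{(\ceil{\alpha m})})\big]$, a second-order Taylor expansion at ${\rm VaR}_\alpha(X)$, and the two uniform derivative bounds reduce the claim to showing that the bias and the mean-squared error of $X_{(\ceil{\alpha m})}$ are both $O(m^{-1})$. The only difference is that the paper simply cites Lemma 2 of \cite{Hong09} for these two order-statistic moment bounds, whereas you sketch a direct proof of them (probability integral transform, Beta moments of the uniform order statistic, local expansion of $F_X^{-1}$, and a truncation step for the tail); that sketch is sound, and the mild tail-integrability caveat you flag is precisely the regularity implicitly inherited through the cited lemma.
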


\begin{proof}
From Assumption \ref{assu:dist}, by Taylor's expansion, for $y\in \mathcal{Y}$, we have 
\begin{eqnarray*}
	\lefteqn{ F_{Y|X}(y|X_{(\ceil{\alpha m})}) - F_{Y|X}(y|{\rm VaR}_\alpha(X)) } \\
	&=& \frac{\partial}{\partial x}F_{Y|X}(y\,|\,{\rm VaR}_\alpha(X)) \cdot  \left[ X_{(\ceil{\alpha m})} -  {\rm VaR}_\alpha(X) \right] + \frac{\partial^2}{\partial x^2}F_{Y|X}(y|Z)\cdot \left[ X_{(\ceil{\alpha m})} -  {\rm VaR}_\alpha(X) \right]^2, 
\end{eqnarray*}
for some random variable $Z$. By Equation \eqref{eqn:WBC1}, we have $ {\rm E}\big[ I\{\hat{Y}\leq y\} \big] = {\rm E}\left[F_{Y|X}\left( y\, |\, X_{(\ceil{\alpha m})} \right)\right]$. Then, by the assumptions $|\frac{\partial}{\partial x}F_{Y|X}(y|{\rm VaR}_\alpha(X))|\leq M$ for all $y\in\mathcal{Y}$ and $|\frac{\partial^2}{\partial x^2}F_{Y|X}(y|x)|\leq M$ for all $(x,y)\in\mathbb{R}\times\mathcal{Y}$, we have 
\begin{eqnarray}
	\lefteqn{  \left|{\rm E}\left[ I\{\hat{Y}\leq y\} \right]- F_{Y|X}(y\,|\,{\rm VaR}_\alpha(X))\right|\ =\ \left|{\rm E}\left[ F_{Y|X}(y\,|\,X_{(\ceil{\alpha m})}) - F_{Y|X}(y\,|\,{\rm VaR}_\alpha(X)) \right] \right|  }\nonumber\\
	&\leq &  \left|\frac{\partial}{\partial x}F_{Y|X}(y\,|\,{\rm VaR}_\alpha(X))\right| \cdot \left| {\rm E}\left[ X_{(\ceil{\alpha m})} -  {\rm VaR}_\alpha(X) \right]\right| +  {\rm E}\left[\left| \frac{\partial^2}{\partial x^2}F_{Y|X}(y|Z) \right| \cdot| X_{(\ceil{\alpha m})} -  {\rm VaR}_\alpha(X)|^2 \right] \nonumber \\
	&\leq &  M \cdot \left| {\rm E}\left[ X_{(\ceil{\alpha m})} -  {\rm VaR}_\alpha(X) \right]\right| + M\cdot {\rm E}\left[| X_{(\ceil{\alpha m})} -  {\rm VaR}_\alpha(X)|^2 \right]. \nonumber
\end{eqnarray}
Because both ${\rm E}\left[ X_{(\ceil{\alpha m})} -  {\rm VaR}_\alpha(X) \right]$ and ${\rm E}\left[| X_{(\ceil{\alpha m})} -  {\rm VaR}_\alpha(X)|^2 \right]$ are of $O(m^{-1})$ (see Lemma 2 of \cite{Hong09}), so we have  $\sup_{y\in\mathcal{Y}} \left|{\rm E}\left[I\{\hat{Y}\leq y\} \right]- F_{Y|X}(y\,|\,{\rm VaR}_\alpha(X))\right|$  is of $O(m^{-1})$ as well. 
\end{proof}

Lemma \ref{RCWB} shows that the within-batch error converges to $0$ as $m\rightarrow\infty$ uniformly on $y\in\mathcal{Y}$, and the rate of convergence is of order of $m^{-1}$. 

\begin{lemma}\label{AsymAB}
	Suppose that Assumption \ref{assu:dist} holds. Then, we have 
	\begin{equation*}\label{eq222}
		\sup_{t\in\mathbb{R}} \left| {\rm Pr}\left\{ \frac{ \sqrt{k}  }{\hat\sigma(y)}\left( \hat{F}_{k}(y)- {\rm E}\left[I\{\hat{Y}\leq y\} \right] \right) \leq t \right\} -\Phi(t) \right| \leq \frac{33}{4}\cdot \frac{1}{\hat\sigma^3(y) k^{1/2}}, 
	\end{equation*}
	where $\hat\sigma(y) = \sqrt{{\rm Var}(I\{\hat{Y}\leq y \})}$ and $\Phi(t)$ is the cumulative distribution function of the standard normal distribution. 
\end{lemma}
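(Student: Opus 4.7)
The plan is to recognize this as a direct application of the classical Berry-Esseen inequality to the i.i.d.\ sequence of indicator random variables $Z_i := I\{\hat{Y}_i \leq y\}$, $i=1,\ldots,k$. Since Step 2 of the batching scheme produces an i.i.d.\ sample $\hat{Y}_1,\ldots,\hat{Y}_k$ of $\hat{Y}$, the random variables $Z_1,\ldots,Z_k$ are also i.i.d., each taking values in $\{0,1\}$ with common mean $\mu := {\rm E}[I\{\hat{Y}\leq y\}]$ and common variance $\hat{\sigma}^2(y) = {\rm Var}(I\{\hat{Y}\leq y\})$. Note that $\hat{F}_k(y) = k^{-1}\sum_{i=1}^k Z_i$, so the standardized sum on the left-hand side is exactly $k^{-1/2}\hat{\sigma}(y)^{-1}\sum_{i=1}^k(Z_i - \mu)$, which is the canonical object to which Berry-Esseen applies.

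Next I would control the third absolute central moment. Because $Z_i \in \{0,1\}$ and $\mu \in [0,1]$, we have $|Z_i - \mu| \leq 1$ almost surely, so $|Z_i - \mu|^3 \leq 1$ and hence $\rho := {\rm E}[|Z_i - \mu|^3] \leq 1$. This bound is the key estimate that removes the third-moment factor from the final inequality and leaves only the $\hat{\sigma}^{-3}(y) k^{-1/2}$ dependence claimed in the lemma.

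Finally, I would invoke the Berry-Esseen theorem in its explicit-constant form (as stated, for instance, in Serfling, 1980, Chapter 1), which yields
\begin{equation*}
\sup_{t\in\mathbb{R}} \left| {\rm Pr}\left\{ \frac{\sqrt{k}}{\hat{\sigma}(y)}\bigl(\hat{F}_k(y) - \mu\bigr) \leq t \right\} - \Phi(t) \right| \leq \frac{C\,\rho}{\hat{\sigma}^3(y)\, k^{1/2}},
\end{equation*}
where $C = 33/4$ is the Berry-Esseen constant associated with the version cited. Combining with $\rho \leq 1$ gives the desired bound. Since the Berry-Esseen inequality is directly quotable and all the i.i.d.\ and boundedness hypotheses are automatic from the construction, there is no substantive obstacle in this proof; the only care needed is matching the normalization conventions and the constant in the cited version of Berry-Esseen.
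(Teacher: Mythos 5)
Your proposal is correct and matches the paper's own argument essentially line for line: both identify $\hat{F}_k(y)$ as the sample mean of the i.i.d.\ bounded indicators $I\{\hat{Y}_i\le y\}$, bound the third absolute central moment by $1$, and apply the Berry--Ess\'een theorem with the explicit constant $33/4$ from \cite{Serfling1980}. No differences worth noting.
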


\begin{proof}
From Berry-Ess\'een Theorem \citep{Serfling1980}, we have 
\begin{eqnarray}
	\lefteqn{\sup_{t\in\mathbb{R}} \left| {\rm Pr}\left\{ \frac{ \sqrt{k} }{\hat\sigma(y)} \left( \hat{F}_{k}(y)- {\rm E}\left[I\{\hat{Y}\leq y\} \right] \right) \leq t \right\} -\Phi(t) \right| } \nonumber \\
	& = & \sup_{t\in\mathbb{R}} \left| {\rm Pr}\left\{ \frac{ \sum_{i=1}^k I\{ \hat{Y}_i\leq y \} - {\rm E}[\sum_{i=1}^k I\{ \hat{Y}_i\leq y \}] }{\sqrt{{\rm Var}(\sum_{i=1}^k I\{ \hat{Y}_i\leq y \})}} \leq t \right\} -\Phi(t) \right| \nonumber \\
	& \leq & \frac{33}{4} \frac{\sup_m {\rm E}[| I\{ \hat{Y}\leq y \}- {\rm E}[ I\{ \hat{Y}\leq y \}]|^3]}{\hat\sigma^3(y) k^{1/2}}\ \leq \ \frac{33}{4} \frac{1}{\hat\sigma^3(y) k^{1/2}}. \nonumber
\end{eqnarray}
This concludes the proof of the lemma. 
\end{proof}

Lemma \ref{AsymAB} is developed based on Berry-Ess\'een Theorem. It directly implies that the across-batch error follows an asymptotic normal distribution when scaled by $\sqrt{k}$ and, therefore, its rate of convergence is $k^{-1/2}$. However, Lemma \ref{AsymAB} presents a result that is much stronger than the convergence in distribution. The probability bound established in the lemma is critical in establishing the asymptotic normality and the rate of convergence of the batching estimator $\hat{Y}^{\rm BE}$, stated in following theorem. The proof of the theorem is long and we include it in the appendix.

\begin{theorem}\label{CLTBEE}
	Suppose that Assumption \ref{assu:dist} holds, there exists $M>0$ such that $|\frac{\partial}{\partial x}F_{Y|X}(y|{\rm VaR}_\alpha(X))|\leq M$ for all $y\in\mathcal{Y}$ and $|\frac{\partial^2}{\partial x^2}F_{Y|X}(y|x)|\leq M$ for all  $(x,y)\in\mathbb{R}\times\mathcal{Y}$, and $k\rightarrow\infty$ and $m\rightarrow\infty$ as $n\rightarrow\infty$. When $\sqrt{k}/m \to c$ as $n\to\infty$ for some constant $c\ne 0$, 
	\[\hat{Y}^{\rm BE} - {\rm CoVaR}_{\alpha,\beta} = O_{\rm Pr}\left(n^{-1/3}\right)\]
	as $n\rightarrow\infty$. When $\sqrt{k}/m \to 0$ as $n\to\infty$,
	\begin{equation}\label{eq8}
		\sqrt{k} \left(\hat{Y}^{\rm BE} - {\rm CoVaR}_{\alpha,\beta} \right) \Rightarrow \frac{\sqrt{\beta(1-\beta)}}{f_{Y|X}({\rm CoVaR}_{\alpha,\beta}\, |\,{\rm VaR}_\alpha(X))} \cdot N(0,1)
	\end{equation}
	as $n\rightarrow\infty$. 
\end{theorem}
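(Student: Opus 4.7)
The plan is to translate the convergence of $\hat{F}_k(y)$ established in Lemmas \ref{RCWB} and \ref{AsymAB} into a statement about the inverse $\hat{F}_k^{-1}(\beta)=\hat{Y}^{\rm BE}$, using the standard sublevel-set identity that, at continuity points, $\{\hat{Y}^{\rm BE}\le y\}=\{\hat{F}_k(y)\ge\beta\}$. Writing $\theta={\rm CoVaR}_{\alpha,\beta}$, $F(y)=F_{Y|X}(y\,|\,{\rm VaR}_\alpha(X))$, $f(y)=f_{Y|X}(y\,|\,{\rm VaR}_\alpha(X))$, and $p_m(y)={\rm E}[I\{\hat{Y}\le y\}]$, the goal reduces to showing
\[
  {\rm Pr}\bigl\{\sqrt{k}(\hat{Y}^{\rm BE}-\theta)\le t\bigr\}
  \;=\;{\rm Pr}\bigl\{\hat{F}_k(\theta+t/\sqrt{k})\ge\beta\bigr\}
  \;\longrightarrow\;\Phi\!\Bigl(f(\theta)\,t/\sqrt{\beta(1-\beta)}\Bigr),
\]
which is exactly the CDF of $\sqrt{\beta(1-\beta)}\,f(\theta)^{-1}N(0,1)$ evaluated at $t$.

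The next step is to decompose $\sqrt{k}(\hat{F}_k(\theta+t/\sqrt{k})-\beta)$ into a mean-zero empirical piece and a deterministic bias piece:
\[
  \sqrt{k}\bigl(\hat{F}_k(y_k)-\beta\bigr)
  \;=\;\hat\sigma(y_k)\cdot T_k(y_k)\;+\;\sqrt{k}\bigl(p_m(y_k)-\beta\bigr),
  \qquad y_k:=\theta+t/\sqrt{k},
\]
where $T_k(y)=\sqrt{k}\bigl(\hat{F}_k(y)-p_m(y)\bigr)/\hat\sigma(y)$. For the deterministic piece I would split $p_m(y_k)-\beta=\bigl(p_m(y_k)-F(y_k)\bigr)+\bigl(F(y_k)-F(\theta)\bigr)$. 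Lemma \ref{RCWB} controls the first summand by $O(m^{-1})$, and a Taylor expansion under Assumption \ref{assu:dist} gives $F(y_k)-F(\theta)=f(\theta)\,t/\sqrt{k}+o(1/\sqrt{k})$; multiplying by $\sqrt{k}$ yields $f(\theta)\,t+O(\sqrt{k}/m)+o(1)$, which tends to $f(\theta)\,t$ exactly when $\sqrt{k}/m\to0$. For the empirical piece, Lemma \ref{AsymAB} gives a uniform Berry--Ess\'een bound, and since $y_k\to\theta$ together with $p_m(y_k)\to\beta$ makes $\hat\sigma(y_k)\to\sqrt{\beta(1-\beta)}>0$, the bound of order $1/(\hat\sigma^3(y_k)\sqrt{k})$ still vanishes, so $T_k(y_k)\Rightarrow N(0,1)$. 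Combining, $\sqrt{k}(\hat{F}_k(y_k)-\beta)\Rightarrow \sqrt{\beta(1-\beta)}\,Z+f(\theta)\,t$ with $Z\sim N(0,1)$, and ${\rm Pr}\{\sqrt{\beta(1-\beta)}\,Z+f(\theta)\,t\ge 0\}=\Phi(f(\theta)t/\sqrt{\beta(1-\beta)})$, giving \eqref{eq8}.

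The rate statement for the case $\sqrt{k}/m\to c\ne 0$ then follows from the same decomposition: the empirical piece is $O_{\rm Pr}(k^{-1/2})$ and the bias piece is $O(m^{-1})$, so $\hat{F}_k(\theta)-\beta=O_{\rm Pr}(k^{-1/2})+O(m^{-1})$; inverting through a standard Bahadur-style argument (dividing by $f(\theta)>0$) transfers this order to $\hat{Y}^{\rm BE}-\theta$. Under $n=km$ the two errors are balanced precisely when $k\asymp m^2$, i.e.\ $m\asymp n^{1/3}$ and $k\asymp n^{2/3}$, giving the optimal $n^{-1/3}$ rate.

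The step I expect to be the main technical obstacle is justifying that the Berry--Ess\'een bound of Lemma \ref{AsymAB} continues to yield a normal limit when it is applied at the \emph{drifting} point $y_k=\theta+t/\sqrt{k}$ rather than at a fixed $y$. Concretely, one must check that $\hat\sigma(y_k)\to\sqrt{\beta(1-\beta)}$ (which uses Lemma \ref{WBC} together with continuity of $F$), and one must convert the pointwise bound into a statement about the event $\{T_k(y_k)\ge -\sqrt{k}(p_m(y_k)-\beta)/\hat\sigma(y_k)\}$ whose threshold itself depends on $k$; this requires handling the Berry--Ess\'een remainder and the drifting threshold simultaneously, which is the reason the authors defer the full argument to the appendix.
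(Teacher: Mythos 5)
Your proposal is correct and follows essentially the same route as the paper's appendix proof: invert via $\{\hat{Y}^{\rm BE}\le y_k\}=\{\hat F_k(y_k)\ge\beta\}$ at the drifting point $y_k=\theta+O(k^{-1/2})$, apply the Berry--Ess\'een bound of Lemma \ref{AsymAB} there while checking $\hat\sigma(y_k)\to\sqrt{\beta(1-\beta)}$, and split the centering term into the $O(\sqrt{k}/m)$ within-batch bias from Lemma \ref{RCWB} plus the $f_{Y|X}(\theta\,|\,{\rm VaR}_\alpha(X))\,t$ drift, so that the normal limit holds when $\sqrt{k}/m\to0$ and only tightness at rate $k^{-1/2}=O(n^{-1/3})$ survives when $\sqrt{k}/m\to c\neq0$. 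The only differences are cosmetic (the paper absorbs the bias into a threshold $c_{m,k}(t)$ and rescales by the constant $A$ up front rather than at the end).
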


Theorem \ref{CLTBEE} is an interesting result. First, it shows that the optimal rate of convergence of the batching estimator is $n^{-1/3}$, which is slower than $n^{-1/2}$ of typical quantile estimators. This is because ${\rm CoVaR}_{\alpha,\beta}$ is the $\beta$-quantile of a conditional distribution that conditions on a probability-zero event $\{X={\rm VaR}_\alpha(X)\}$ and ${\rm VaR}_\alpha(X)$ needs to be estimated. Second, it shows that, if $\sqrt{k}/m \to 0$ as $n\to\infty$, the asymptotic normal distribution has mean zero and has the exactly same form as if ${\rm VaR}_\alpha(X)$ is known. This is because, when $\sqrt{k}/m \to 0$ as $n\to\infty$, the bias converges faster than the variance and the bias caused by the ${\rm VaR}_\alpha(X)$ estimator may be ignored.

The asymptotic normal distribution established in Theorem \ref{CLTBEE} is useful in developing a confidence interval of the batching estimator.  Notice that the condition $\sqrt{k}/m \to 0$ as $n\to\infty$ implies that the we may ignore the variation of $X_{(\ceil{\alpha m})}$ and treat it as ${\rm VaR}_\alpha(X)$. Then, by Section 2.6 in \cite{Serfling1980}, we can use the distribution-free approach to build a $100(1-\gamma)\%$ ($0<\gamma<1$) confidence interval $ \left( \hat{Y}_{(\floor{K_1})},\hat{Y}_{(\ceil{K_2})} \right)$ based on two order statistics, where
\begin{equation*}\label{APCI}
	K_1 = k\left( \beta - \frac{z_{1-\gamma/2}[\beta(1-\beta)]^{1/2}}{k^{1/2}}\right), \quad
	K_2 = k\left( \beta + \frac{z_{1-\gamma/2}[\beta(1-\beta)]^{1/2}}{k^{1/2}}\right), 
\end{equation*}
where $z_{1-\gamma/2}$ is the $(1-\gamma/2)$-quantile of the standard normal distribution. The distribution-free confidence interval does not need to estimate the conditional density $f_{Y|X}$ on the right-hand side of Equation \eqref{eq8} and, therefore, is easy to use in practice.

\section{Importance-Sampling Inspired Estimation}\label{sec4}

Financial institutions' holdings are typically not only complicated but also large, e.g., their portfolios have hundreds or even more of financial assets. For such large portfolios, simulating their losses may require re-valuations of a large number of financial assets, e.g., derivatives, and it is known that such simulation may be very time consuming (\citealt{gordy2010nested}, \citealt{Hong2017}). Furthermore, the batching estimator of CoVaR has a slower rate of convergence than the canonical rate of  $n^{-1/2}$, indicating that it may need a large number of simulation observations to achieve a desired accuracy. To obtain a fast estimator of CoVaR, we adopt two ideas. First, we use the delta-gamma approximations to approximate the portfolio losses. The delta-gamma approximation is essentially a second-order Taylor expansion and it is commonly used to approximate portfolio losses to speed up the simulation  in financial risk management (\citealt{hull2012risk} and \citealt{glasserman2004monte}). Second, we propose an importance-sampling inspired estimator to further improve the efficiency of the estimation. Importance sampling has also been used widely in estimation of risk measures  (\citealt{glasserman2000variance}, \citealt{sun2010asymptotic}, \citealt{chu2012confidence}). In most of these works, importance sampling reduces the variances of the estimators but does not improve the rates of the convergence. In this section, however, we show that the IS-inspired estimator of ours not only has a smaller variance but also achieves a better rate of convergence than the batching estimator.  We briefly introduce the delta-gamma approximation in Section \ref{subsec:IS:app}, describe the estimator in Section \ref{ISalg} and then prove its consistency and asymptotic normality in Sections \ref{C&ANofIS} and \ref{subsec:IS:nor}, respectively. 

\subsection{Delta-Gamma Approximation}\label{subsec:IS:app}

The following introduction of the delta-gamma approximation and its simplification is based on Chapter 9 of \cite{glasserman2004monte}. Suppose that we have a portfolio whose value is determined by a vector of risk factors, such as stock prices, commodity prices or index values. The delta-gamma approximation uses the changes of the risk factors to approximate the changes of the portfolio value through a second-order Taylor expansion rooted in It\^o's Lemma (\citealt{hull2012risk}). Let $V(t)$ denote the value of a portfolio at time $t$, and let ${\rm S}(t)= (S_1(t), \ldots, S_d(t))^\top$ denote the values of the $d$ risk factors at time $t$. Then, the delta-gamma approximation approximates $V(\Delta t)$ with a small $\Delta t>0$ by
\[
V(\Delta t) \approx V(0)+\bar\Theta \Delta  t + \bar{\rm \delta}^{\top} {\rm \Delta S}+\frac{1}{2} {\rm \Delta S}^{\top}  \bar{\rm \Gamma} {\rm \Delta S},
\]
where ${\rm \Delta S} = {\rm S}(\Delta t)-{\rm S}(0)$,
\[
\bar\Theta =\left. \frac{\partial V(t)}{\partial t}\right|_{t=0},\quad \bar{\rm \delta}_i =\left. \frac{\partial V(t)}{\partial S_i}\right|_{t=0},\quad {\rm and} \quad \bar{\rm \Gamma}_{ij}=\left. \frac{\partial^2 V(t)}{\partial S_i\partial S_j}\right|_{t=0}
\]
for all $i,j=1,2,\ldots,d$. Then, the loss of the portfolio from time $0$ to $\Delta t$, denoted by $L$, can be approximated by
\begin{equation}\label{eqn:dg_loss}
	L = V(0) - V(\Delta t) \approx  -\bar\Theta \Delta  t - \bar{\rm \delta}^{\top} {\rm \Delta S} - \frac{1}{2} {\rm \Delta S}^{\top}  \bar{\rm \Gamma} {\rm \Delta S}.
\end{equation}

Following \cite{glasserman2000variance}, we assume that ${\rm \Delta S}$ follows a multivariate normal distribution with mean ${\bf 0}$ and covariance matrix $\Sigma$, denoted by ${\rm \Delta S} \sim {\bf N}({\bf 0},\Sigma)$. Let ${\rm \tilde C}$ be any matrix that satisfies ${ \tilde C} { \tilde C}^{\top}=\Sigma$. Notice that ${ \tilde C}$ may be obtained through Cholesky factorization. Then, it is easy to see that $A=-{1\over 2} { \tilde C}^{\top}\, \bar{\rm \Gamma}\, { \tilde C}$ is a symmetric matrix. Then, we can represent $A$ by its eigen-decomposition $A=UBU^{\top}$ where $U$ is a matrix formed by the eigenvectors of $A$ with $UU^{\top}=I$, where $I$ is the $d$-dimensional identity matrix, and $B={\rm diag}(\gamma_1,\ldots,\gamma_d)$, where $\gamma_1,\ldots,\gamma_d$ are the eigenvalues of $A$. Because $A$ is a symmetric matrix, all eigenvalues are real numbers. Let $C=\tilde C U$. It is easy to see that $C C^{\top} ={ \tilde C} { \tilde C}^{\top}=\Sigma$. Let ${\rm Z}=(Z_1,\ldots,Z_d)^\top$ be a vector of independent and identically distributed (i.i.d.) standard normal random variables. Then, it is clear that $C\,{\rm Z}\sim {\bf N}({\bf 0},\Sigma)$ and it has the same distribution as ${\rm \Delta S}$.

By Equation (\ref{eqn:dg_loss}), we may write
\begin{eqnarray*}
	L &\approx& -\bar\Theta \Delta  t - (C^\top \bar{\rm \delta})^{\top} {\rm Z} - \frac{1}{2} {\rm Z}^\top C^\top  \bar{\rm \Gamma} C {\rm Z}\\
	&=& -\bar\Theta \Delta  t - (C^\top \bar{\rm \delta})^{\top} {\rm Z} - \frac{1}{2} {\rm Z}^\top U^\top {\tilde C}^\top  \bar{\rm \Gamma} {\tilde C} U {\rm Z}\\
	&=&  -\bar\Theta \Delta  t - (C^\top \bar{\rm \delta})^{\top} {\rm Z} + {\rm Z}^\top B {\rm Z}.
\end{eqnarray*}
Furthermore, let $c=-\bar\Theta \Delta  t$ and let $(\delta_1,\ldots,\delta_d)^\top = -C^\top \bar{\rm \delta}$. Then, we obtain a much simpler form of the delta-gamma approximation of the loss:
\begin{equation}\label{eqn:dg_simple}
	L \approx c + \sum_{j=1}^d \left(\delta_j Z_j + \gamma_j Z_j^2\right).
\end{equation}
Equation (\ref{eqn:dg_simple}) shows that the randomness of the loss comes from the $d$ standard normal random variables, which may be viewed as the driving force behind the risk factors. When there are multiple portfolios underlying the same risk factors, the parameters $c$, $\delta_j$ and $\gamma_j$ may be different, but these portfolios share the same $Z_1,\ldots,Z_d$. Also, compared to Equation (\ref{eqn:dg_loss}), Equation (\ref{eqn:dg_simple}) is much simpler to simulate and it also sets up a stage for an easier understanding of the importance-sampling scheme that we introduce in next subsection. 

\subsection{The IS Representation and the Estimator}\label{ISalg}

Suppose that we have two portfolios underlying the same $d$ risk factors. Following the delta-gamma approximation introduced in Section \ref{subsec:IS:app}, we may write their losses in the following way (here we assume that the approximations are exact):
\begin{eqnarray}
	X &=& c_1 + \sum_{j=1}^d \left(\delta_{1j} Z_j + \gamma_{1j} Z_j^2\right), \label{eqn:is:X}\\
	Y &=& c_2 + \sum_{j=1}^d \left(\delta_{2j} Z_j + \gamma_{2j} Z_j^2\right). \label{eqn:is:Y}
\end{eqnarray}
Notice that the two portfolio losses are dependent through the same risk factors $Z_1,\ldots,Z_d$. Our goal is to estimate the  ${\rm CoVaR}_{\alpha,\beta}$ that satisfies $ {\rm Pr}\left\{ Y\leq {\rm CoVaR}_{\alpha,\beta} | X= {\rm VaR}_\alpha(X)\right\} = \beta$.

\subsubsection{Representation of the Conditional Probability}

We start by analyzing the conditional probability ${\rm Pr}\{Y \leq y\, |\, X=x\}$. Notice that, by Equation (\ref{defofCP}),
\[
{\rm Pr}\{Y \leq y\, |\, X=x\} =\lim\limits_{\varepsilon\rightarrow 0}\frac{{\rm Pr}\{Y\leq y, |X- x|\leq \varepsilon\}}{{\rm Pr}\{ |X- x|\leq \varepsilon\}}. 
\]
This motivates us to think whether we can use importance sampling to land the majority of the observations, if not all, in the set $\{|X-x|\le\varepsilon\}$. To do that, we consider to change the distribution of $Z_d$ after observing $Z_1,\ldots,Z_{d-1}$. Let $ P_{\varepsilon}={\rm Pr}\left\{|X-x| \leq \varepsilon\, |\, Z_1, \ldots, Z_{d-1}\right\}$. Notice that it is possible to satisfy $\{|X-x|\le\varepsilon\}$ only if $P_\varepsilon>0$ after observing $Z_1,\ldots,Z_{d-1}$. Then, we have $\{|X-x|\le\varepsilon\}=\{|X-x|\le\varepsilon\}\cap \{P_{\varepsilon}>0\}$ w.p.1 and 
\begin{equation}\label{eq13}
	{\rm Pr}\{Y \leq y\, |\, X=x\} = \lim _{\varepsilon \rightarrow 0} \frac{{\rm E}\left[I\{Y \leq y\} \cdot I\{|X-x| \leq \varepsilon\}\cdot I\{P_{\varepsilon} > 0\}\right]}{{\rm E}\left[I\{|X-x| \leq \varepsilon\}\cdot I\{P_{\varepsilon}>0\}\right]}.
\end{equation}

We apply an importance sampling to change the measure of the last dimension $Z_d$. Let $f_d(z)$ denote the density function of $Z_d$ conditional on $Z_1,\ldots,Z_{d-1}$. Because $Z_d$ is independent of $Z_1,\ldots,Z_{d-1}$, $f_d(z)$ is the density of the standard normal random variable. Let $\tilde{f}_{d,\varepsilon}(z)$ denote the importance-sampling distribution of $Z_d$. Conditional on $Z_1,\ldots,Z_{d-1}$, we let
\begin{equation*}
	\tilde{f}_{d,\varepsilon}(z)=
	\begin{cases}
		\frac{f_{d}(z)}{P_{\varepsilon}}\cdot I\{|X-x| \leq \varepsilon\},& \text{if $P_\varepsilon > 0$}\\
		f_{d}(z),& \text{if $P_\varepsilon = 0$}
	\end{cases},
\end{equation*}
where notice that $P_{\varepsilon}$ is a function of $Z_1,\ldots,Z_{d-1}$ and $X$ is a function of $Z_1,\ldots,Z_{d-1},z$. It is easy to verify that $\tilde{f}_{d,\varepsilon}$ is a density function and  all the simulation observations of $X$ will fall in the important region $\{|X-x| \leq \varepsilon\}$ when $P_\varepsilon> 0$. Therefore, by Equation (\ref{eq13}), we have
\begin{equation}\label{changemeasure}
	{\rm Pr}\{Y \leq y\, |\, X=x\} 
	\ =\ \lim _{\varepsilon \rightarrow 0} \frac{\tilde{\rm E}\left[I\{Y \leq y\} \cdot P_{\varepsilon}\cdot I\{P_{\varepsilon}> 0\}\right]}{\tilde{\rm E}\left[P_{\varepsilon}\cdot I\{P_{\varepsilon}> 0\}\right]}
	\ =\ \lim _{\varepsilon \rightarrow 0} \frac{\tilde{\rm E}\left[I\{Y \leq y\} \cdot {P_{\varepsilon}\over 2\varepsilon}\right]}{\tilde{\rm E}\left[{P_{\varepsilon}\over 2\varepsilon}\right]}, 
\end{equation}
where $\tilde{\rm E}$ denotes the expectation under the importance-sampling distribution and the last equality holds because $P_{\varepsilon}\cdot I\{P_{\varepsilon}> 0\}= P_{\varepsilon}$ w.p.1. 

Conditional on $Z_1,\ldots,Z_{d-1}$, by Equation (\ref{eqn:is:X}), we have
\begin{equation}\label{quadraticeq}
	X =\xi_{1}+ \delta_{1d}  Z_{d}+\gamma_{1d} Z_d^2,
\end{equation} 
where $\xi_1=c_1+\sum_{j=1}^{d-1}\big(\delta_{1j}Z_j+\gamma_{1j}Z_j^2\big)$. To study the event $\{|X-x| \leq \varepsilon\}$, we define 
\begin{equation}\label{eqn:is:g}
	g(z)=\xi_{1}+ \delta_{1d}z +\gamma_{1d} z^2.
\end{equation}

\begin{figure}[ht]
	\captionsetup{labelfont=bf}
	\caption{The illustration of $|g(z)-x|\le\varepsilon$ when $x>g^*$}
	\centering    
	\includegraphics[scale = 0.65]{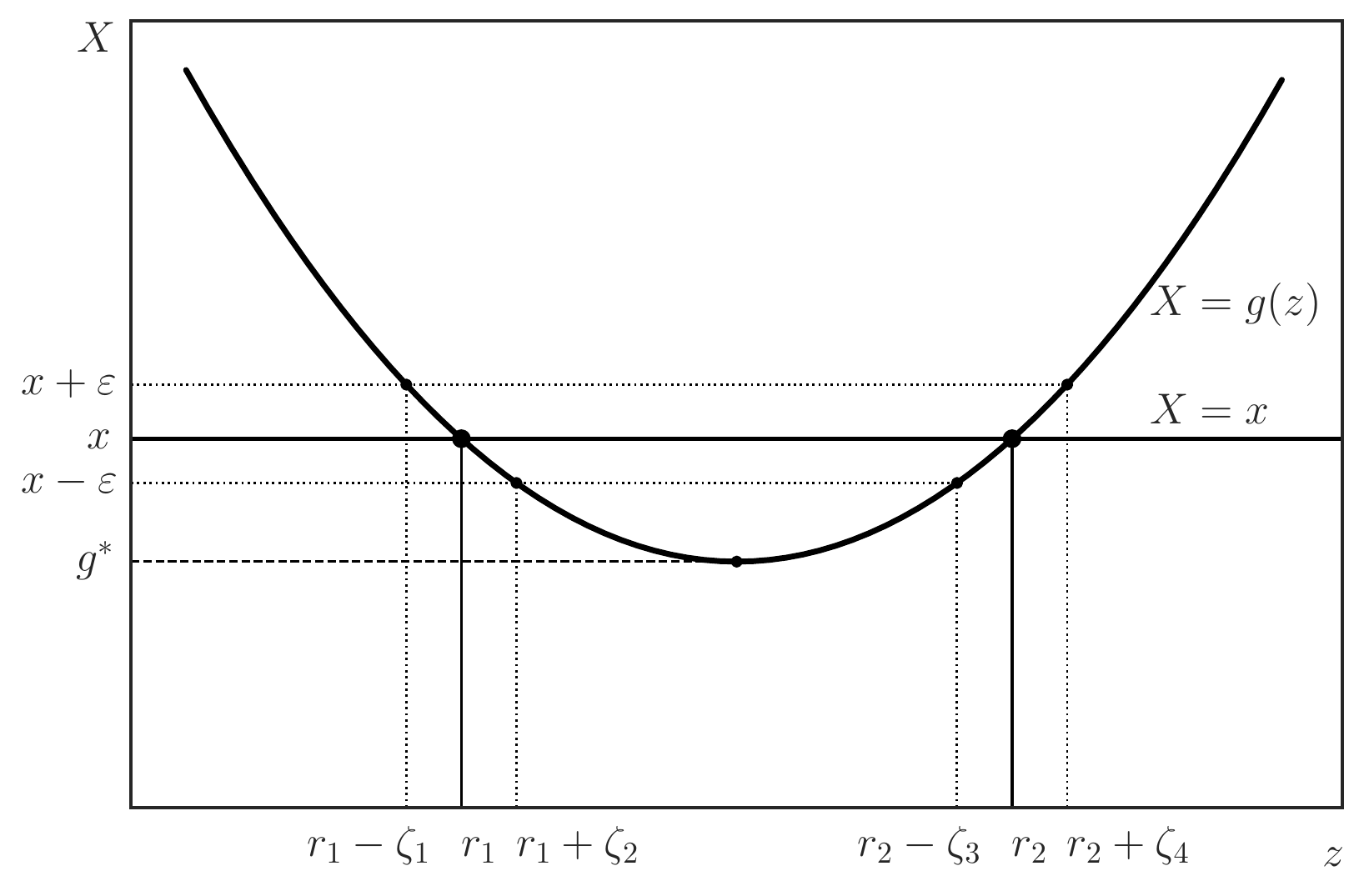}
	\captionsetup{labelfont=bf}
	\label{quadraticfunction}
\end{figure}

Assuming $\gamma_{1d}>0$ (which will be discussed in Remark \ref{remark1}), $g(z)$ is a quadratic function as plotted in Figure \ref{quadraticfunction}. Let 
\[
g^\ast=\min_{z\in\mathbb{R}} g(z)=\xi_1- {\delta_{1d}^2\over 4\gamma_{1d}}
\] 
be the minimum of the function $g$. If $x<g^*$, then there exists no real root that can make $g(z)=x$ and thus $P_\varepsilon=0$ for $\varepsilon$ is small enough. If $x=g^*$, then there exists only one real root. This is a probability-zero event and we may ignore it. If $x>g^*$ (see Figure \ref{quadraticfunction}), there are two real roots to $g(z)=x$ and we denote them as $r_{1}$ and $r_{2}$ with $r_{1} < r_{2}$. Let $\lambda_{1}=g^{\prime}\left(r_{1}\right)$ and $\lambda_{2}=g^{\prime}\left(r_{2}\right)$ be the slopes of the function $g(z)$ at $r_1$ and $r_2$ respectively, where $g^{\prime}(z)=\delta_{1d}+2 \gamma_{1d} z$. Notice that $\lambda_1=-\lambda_2$ and we denote $\lambda=|\lambda_1|=|\lambda_2|>0$. By taking a close look at Figure \ref{quadraticfunction}, we find that $|g(z)-x|\le\varepsilon$ is equivalent to $z\in [r_1-\zeta_1, r_1+\zeta_2]\cup[r_2-\zeta_3, r_2+\zeta_4]$, where $r_1-\zeta_1$, $r_1+\zeta_2$, $r_2-\zeta_3$ and $r_2+\zeta_4$ are the four roots of $g(z)=x+\varepsilon$ and $g(z)=x-\varepsilon$. When $\varepsilon$ is small enough, by Taylor's first-order approximation, $[r_1-\zeta_1, r_1+\zeta_2]\cup[r_2-\zeta_3, r_2+\zeta_4]$ is approximately $\left[r_1-{\varepsilon\over\lambda}, r_1+{\varepsilon\over\lambda}\right]\cup \left[r_2-{\varepsilon\over\lambda}, r_2+{\varepsilon\over\lambda}\right]$. Then, based on this intuition, we have the following lemma on the convergence of $P_\varepsilon/(2\varepsilon)$. 

\begin{lemma}\label{q_derivation}
	Suppose $\gamma_{1d}>0$. Let $q_i=f_d(r_i)/\lambda$ for $i=1,2$ when $x>g^*$, and $q_i=0$ for $i=1,2$ when $x\le g^*$. Then, we have
	\begin{equation}\label{PE/E}
	\lim _{\varepsilon \rightarrow 0}\frac{P_\varepsilon}{2\varepsilon} = q_1+q_2 \quad {\rm w.p.1}.
	\end{equation}
\end{lemma}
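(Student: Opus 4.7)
My plan is to condition on $Z_1,\dots,Z_{d-1}$, which freezes $\xi_1$ (and hence $g$ and $g^*$) and reduces the computation of $P_\varepsilon$ to a one-dimensional density calculation: $P_\varepsilon = {\rm Pr}\{Z_d \in A_\varepsilon\}$ where $A_\varepsilon = \{z : |g(z)-x| \le \varepsilon\}$ and $Z_d$ has density $f_d$ (standard normal). Since $\xi_1$ has a continuous distribution unconditionally, the event $\{x = g^*\} = \{\xi_1 = x + \delta_{1d}^2/(4\gamma_{1d})\}$ has probability zero, so the ``w.p.1'' qualifier in the lemma lets me discard this boundary case and treat only the two strict scenarios $x < g^*$ and $x > g^*$ separately.

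The case $x < g^*$ is immediate: once $\varepsilon < g^* - x$ we have $g(z) - x \ge g^* - x > \varepsilon$ for every $z$, so $A_\varepsilon$ is empty and $P_\varepsilon = 0$ identically, matching $q_1 + q_2 = 0$ by definition. The substantive case is $x > g^*$, where $g(z)=x$ has two distinct real roots $r_1 < r_2$ satisfying $g'(r_1) = -\lambda$ and $g'(r_2) = \lambda$ with $\lambda > 0$. By the inverse function theorem I would choose disjoint open neighborhoods $U_1$ of $r_1$ and $U_2$ of $r_2$, both bounded away from the apex $-\delta_{1d}/(2\gamma_{1d})$, on which $g$ is a $C^1$ diffeomorphism onto an open interval containing $x$. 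For all $\varepsilon$ sufficiently small this yields the disjoint decomposition $A_\varepsilon = g_{U_1}^{-1}([x-\varepsilon, x+\varepsilon]) \cup g_{U_2}^{-1}([x-\varepsilon, x+\varepsilon])$, and the change of variables $u = g(z)$ on each branch gives
\[
P_\varepsilon = \int_{x-\varepsilon}^{x+\varepsilon}\frac{f_d(g_{U_1}^{-1}(u))}{\bigl|g'(g_{U_1}^{-1}(u))\bigr|}\,du + \int_{x-\varepsilon}^{x+\varepsilon}\frac{f_d(g_{U_2}^{-1}(u))}{\bigl|g'(g_{U_2}^{-1}(u))\bigr|}\,du.
\]

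To finish I would divide by $2\varepsilon$ and take $\varepsilon \to 0$: both integrands are continuous in $u$ and, by continuity of $f_d$, $g'$ and $g_{U_i}^{-1}$, converge as $u \to x$ to $f_d(r_1)/\lambda$ and $f_d(r_2)/\lambda$ respectively, so the mean value theorem for integrals (equivalently, the Lebesgue differentiation theorem) identifies the two averages' limits as $q_1$ and $q_2$, summing to the claimed value. The main obstacle is essentially bookkeeping in the nondegenerate case: establishing a single $\varepsilon_0 > 0$ that simultaneously keeps the two preimage components disjoint and contained within the diffeomorphism regions. The strict inequality $x > g^*$ together with $g'(r_i) \ne 0$ and continuity of $g'$ guarantees such an $\varepsilon_0$ exists, after which the change of variables and continuity arguments are routine.
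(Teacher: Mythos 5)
Your proposal is correct and follows essentially the same route as the paper's proof: discard the probability-zero boundary case $\{x=g^\ast\}$, note $P_\varepsilon=0$ eventually when $x<g^\ast$, and when $x>g^\ast$ linearize $g$ locally at the two roots to extract the densities $f_d(r_i)/\lambda$. The only difference is cosmetic: where the paper inserts a first-order Taylor expansion of $g$ directly inside the probability, you carry out the equivalent computation via an inverse-function-theorem change of variables, which is a somewhat more careful rendering of the same step.
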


\begin{proof}
 Notice that the randomness of both sides of Equation \eqref{PE/E} comes from $Z_1, \ldots, Z_{d-1}$. Because $g^\ast$ is a continuous random variable, so $\{g^\ast=x\}$ is a probability zero event and we can ignore this case. When $x< g^\ast$, $P_\varepsilon=0$ for $\varepsilon$ is small enough, and the right hand side of Equation \eqref{PE/E} is also zero, hence Equation \eqref{PE/E} holds. When $x> g^\ast$, there are two real roots $r_1<r_2$. In this case, $\lambda>0$, and then when $\varepsilon$ is small enough, we have
\begin{eqnarray}
	P_\varepsilon &=& {\rm Pr}\left\{|g(Z_d)-x|\leq \varepsilon\,|\,Z_1, \ldots, Z_{d-1}\right\} \nonumber\\
	&=& {\rm Pr}\left\{ \left|g'(r_1)(Z_d-r_1)+o(|Z_d-r_1|)\right|\leq \varepsilon\,|\,Z_1, \ldots, Z_{d-1}\right\}  \nonumber\\
	&& +\ {\rm Pr}\left\{ \left|g'(r_2)(Z_d-r_2)+o(|Z_d-r_2|)\right|\leq \varepsilon\,|\,Z_1, \ldots, Z_{d-1}\right\}  \nonumber\\
	&=& {\rm Pr}\left\{ \left|Z_d-r_1\right|\leq {\varepsilon\over\lambda}+o(\varepsilon)\,|\,Z_1, \ldots, Z_{d-1}\right\}  \nonumber\\
	&& +\ {\rm Pr}\left\{ \left|Z_d-r_2\right|\leq {\varepsilon\over\lambda}+o(\varepsilon)\,|\,Z_1, \ldots, Z_{d-1}\right\} \nonumber\\
	&=& \left[ \frac{f_d(r_1)}{\lambda}2\varepsilon + \frac{f_d(r_2)}{\lambda}2\varepsilon+o(\varepsilon)\right]. \nonumber
\end{eqnarray}
Hence, Equation \eqref{PE/E} holds. 
\end{proof}

Furthermore, when $x>g^\ast$ and as $\varepsilon\to 0$, $Z_d$ basically only has two choices, $r_1$ and $r_2$, each with probability $q_1/(q_1+q_2)$ and  $q_2/(q_1+q_2)$, respectively. Then, by Equation (\ref{eqn:is:Y}), $Y$ can only take two values $Y_1$ and $Y_2$ conditioned on $Z_1,\ldots,Z_{d-1}$ and $x> g^\ast$, i.e.,
\begin{equation*}\label{computeY}
	Y_{\ell}= c_2 + \sum_{j=1}^{d-1}\left(\delta_{2j} Z_j + \gamma_{2j}Z_j^2\right) +  \delta_{2d} r_\ell + \gamma_{2d} r_\ell^2,~ \ell=1,2, 
\end{equation*}
each with probability $q_1/(q_1+q_2)$ and  $q_2/(q_1+q_2)$, respectively. When $x\leq g^\ast$, we define $Y_\ell=+\infty$, $\ell=1,2$. Then, we have the following theorem, which is the main result of this subsection, and its proof is included in the appendix. 

\begin{theorem}\label{thm:IS}
	Suppose $\gamma_{1d}>0$ and ${\rm E}\left[|g^\ast-x|^{-1/2}\right]<\infty$. Then,
	\begin{equation*}
		{\rm Pr}\{ Y \leq y\, |\, X=x\}  = \frac{{\rm E}\left[I\left\{Y_{1} \leq y\right\} q_1 + I\left\{Y_{2} \leq y\right\} q_2\right]}{{\rm E}\left[q_1+q_2\right]},
	\end{equation*}
	where $\rm E[\cdot]$ is the expectation with respect to $(Z_1,\ldots,Z_{d-1})^\top$.
\end{theorem}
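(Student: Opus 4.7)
The starting point is the representation in Equation~\eqref{eq13} written directly under the original measure,
\[
{\rm Pr}\{Y\leq y\mid X=x\}=\lim_{\varepsilon\to 0}\frac{{\rm E}\bigl[I\{Y\leq y\}\,I\{|X-x|\le\varepsilon\}\,I\{P_\varepsilon>0\}\bigr]}{{\rm E}\bigl[I\{|X-x|\le\varepsilon\}\,I\{P_\varepsilon>0\}\bigr]}.
\]
The plan is to condition on $W=(Z_1,\ldots,Z_{d-1})^\top$ via the tower property in both numerator and denominator (after dividing by $2\varepsilon$), identify the pointwise $\varepsilon\downarrow 0$ limits of the inner conditional expectations, and then pass the limit through the outer expectation by dominated convergence. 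The pointwise statement for the denominator is exactly Lemma~\ref{q_derivation}, so the main work lies in the analogous statement for the numerator and in producing an integrable envelope to justify the interchange.

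For the pointwise limit of the numerator, I would fix $W$ with $x>g^\ast$ and reuse the geometry of Figure~\ref{quadraticfunction}: the event $\{|X-x|\le\varepsilon\}$ equals $\{Z_d\in[r_1-\zeta_1,r_1+\zeta_2]\cup[r_2-\zeta_3,r_2+\zeta_4]\}$. Solving the quadratic $\zeta\lambda+\gamma_{1d}\zeta^2=\varepsilon$ gives $\zeta_j=\varepsilon/\lambda+O(\varepsilon^2)$, so each interval shrinks to the point $r_\ell$ at rate $\varepsilon/\lambda$. On $[r_\ell-\zeta,r_\ell+\zeta]$ the map $z\mapsto Y(z)$ is continuous and converges to $Y_\ell$, hence a first-order expansion of the integral of $I\{Y(z)\leq y\}f_d(z)$ over the two intervals yields
\[
\frac{1}{2\varepsilon}\,{\rm E}\!\left[I\{Y\le y\}\,I\{|X-x|\le\varepsilon\}\mid W\right]\;\xrightarrow[\varepsilon\to 0]{\text{w.p.1}}\;I\{Y_1\le y\}q_1+I\{Y_2\le y\}q_2.
\]
On the complementary event $\{x\le g^\ast\}$ both sides are zero for $\varepsilon$ small enough, which is consistent with the convention $Y_\ell=+\infty$ and $q_\ell=0$.

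The hardest step is the dominated-convergence bound. From the quadratic $\zeta\lambda+\gamma_{1d}\zeta^2=\varepsilon$ with $\zeta\ge 0$ and $\lambda>0$ I get the uniform bound $\zeta_j\le\varepsilon/\lambda$, so for every $\varepsilon>0$,
\[
\frac{P_\varepsilon}{2\varepsilon}\;\le\;\frac{4\,\|f_d\|_\infty}{2\lambda}\;=\;\frac{C}{\sqrt{\gamma_{1d}(x-g^\ast)}}\,I\{x>g^\ast\}
\]
where I used $\lambda=2\sqrt{\gamma_{1d}(x-g^\ast)}$ and $\|f_d\|_\infty=(2\pi)^{-1/2}$. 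The same bound dominates the numerator because $I\{Y\le y\}\le 1$. By the hypothesis ${\rm E}[|g^\ast-x|^{-1/2}]<\infty$ this envelope is integrable in $W$, so dominated convergence lets me move the outer $\lim_{\varepsilon\to 0}$ inside the expectation, giving
\[
\lim_{\varepsilon\to 0}\frac{1}{2\varepsilon}\,{\rm E}[I\{|X-x|\le\varepsilon\}]={\rm E}[q_1+q_2],\qquad\lim_{\varepsilon\to 0}\frac{1}{2\varepsilon}\,{\rm E}[I\{Y\le y\}I\{|X-x|\le\varepsilon\}]={\rm E}[I\{Y_1\le y\}q_1+I\{Y_2\le y\}q_2].
\]
Taking the ratio yields the claimed identity. (The indicators $I\{P_\varepsilon>0\}$ are absorbed because $P_\varepsilon=0$ forces the other indicators to vanish almost surely.) The delicate point is really the envelope: it is the reason the moment condition $\mathrm{E}[|g^\ast-x|^{-1/2}]<\infty$ enters the theorem, and dropping it would force the argument to rely on a more refined monotone-type control that is not obviously available.
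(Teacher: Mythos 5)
Your overall strategy (condition on $(Z_1,\ldots,Z_{d-1})$, identify the pointwise limits of $P_\varepsilon/(2\varepsilon)$ and of its analogue for the numerator, then interchange limit and expectation) is the same as the paper's, but the step you yourself identify as the hardest one --- the integrable envelope --- is where the proposal breaks down. Your bound $\zeta_j\le\varepsilon/\lambda$ comes from solving $\lambda\zeta+\gamma_{1d}\zeta^2=\varepsilon$, which is only the equation for the \emph{outer} endpoints $r_1-\zeta_1$ and $r_2+\zeta_4$; the inner endpoints solve $\lambda\zeta-\gamma_{1d}\zeta^2=\varepsilon$, whose root $\zeta=2\varepsilon\big/\bigl(\lambda+\sqrt{\lambda^2-4\gamma_{1d}\varepsilon}\bigr)$ is \emph{at least} $\varepsilon/\lambda$ and ceases to exist altogether once $x-g^\ast<\varepsilon$ (the two intervals merge into one). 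More fatally, your envelope carries the factor $I\{x>g^\ast\}$ and hence vanishes on the event $\{g^\ast-\varepsilon<x\le g^\ast\}$, which has positive probability for every fixed $\varepsilon>0$; there the set $\{|g(Z_d)-x|\le\varepsilon\}$ is a nonempty interval of length of order $\sqrt{\varepsilon/\gamma_{1d}}$, so $P_\varepsilon/(2\varepsilon)$ can be as large as order $\varepsilon^{-1/2}$ while your dominating function is $0$. So the claimed domination is false, and dominated convergence cannot be invoked as written. This is precisely the obstruction the paper flags in the remark following the theorem (``$P_\varepsilon/(2\varepsilon)$ may not be uniformly integrable due to the complication caused by the situation where $x$ is in the neighborhood of $g^\ast$'').

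The gap is repairable, and it is worth seeing how. The paper's route is to split $P_\varepsilon$ into three pieces according to $\{x>g^\ast+\varepsilon\}$, $\{|x-g^\ast|\le\varepsilon\}$ and $\{x<g^\ast-\varepsilon\}$: on the first piece the total length of the two intervals is $(\sqrt{\Delta_1}-\sqrt{\Delta_2})/\gamma_{1d}=8\varepsilon/(\sqrt{\Delta_1}+\sqrt{\Delta_2})$, which yields the valid envelope $(\pi\gamma_{1d}|x-g^\ast|)^{-1/2}$ and lets dominated convergence go through under ${\rm E}[|g^\ast-x|^{-1/2}]<\infty$; the middle piece is not dominated but its \emph{expectation} is shown directly to be $O(\varepsilon^{1/2})\cdot{\rm Pr}\{|x-g^\ast|\le\varepsilon\}/(2\varepsilon)=O(\sqrt{\varepsilon})\to 0$ using the boundedness of the density of $g^\ast$; the third piece is identically zero. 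Alternatively, one can check that dropping the indicator and using $C|x-g^\ast|^{-1/2}$ (with a slightly larger constant) does dominate $P_\varepsilon/(2\varepsilon)$ in all three regimes, since on $\{|x-g^\ast|<\varepsilon\}$ one has $P_\varepsilon/(2\varepsilon)\le f_d(0)\sqrt{x-g^\ast+\varepsilon}\big/\bigl(\varepsilon\sqrt{\gamma_{1d}}\bigr)\le \sqrt{2}f_d(0)/\sqrt{\gamma_{1d}|x-g^\ast|}$; but that requires exactly the case analysis your write-up skips. As submitted, the proof of the domination step is incorrect and the theorem does not follow from your argument.
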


\begin{remark}
    Notice that Theorem \ref{thm:IS} does not follow naturally from Lemma \ref{q_derivation}, because $P_\varepsilon/(2\varepsilon)$ may not be uniformly integrable due to the complication caused by the situation where $x$ is in the neighborhood of $g^\ast$. Therefore, we need a more careful handling of the set $\{|X-x|\le\varepsilon\}$ by breaking it into three pieces, i.e., $\{x> g^\ast +\varepsilon\}$, $\{|x-g^\ast|\le \varepsilon\}$ and $\{x< g^\ast -\varepsilon\}$, and analyze each term individually. The details can be found in the proof of the theorem in the appendix.
\end{remark}

\begin{remark}\label{remark1}
	Notice that we can rearrange the order of $Z_1,\ldots,Z_d$ so that the $d$-th dimension has the highest value of $\gamma_{11},\ldots,\gamma_{1d}$, i.e., $\gamma_{1d}=\max\{\gamma_{11},\ldots,\gamma_{1d}\}$. Then, the assumption $\gamma_{1d}>0$ basically implies at least one of $\gamma_{11},\ldots,\gamma_{1d}$ is positive. Notice that in Equation (\ref{eqn:is:X}), $c_1$ is typically a small deterministic loss and $\delta_{11},\ldots,\delta_{1d}$ are typically zero or very close to zero (due to the delta-hedging strategies). Thus, if $\gamma_{11},\ldots,\gamma_{1d}$ are all negative or zero, the portfolio becomes almost riskless, which contradicts to what we see in practice. Therefore, it is reasonable to assume $\gamma_{1d}>0$. 
	
	There are three more reasons why we let the dimension with the largest $\gamma_{1j}$, $j=1,\ldots,d$, to be the $d$-th dimension. First, $\max\{\gamma_{11},\ldots,\gamma_{1d}\}$ is the largest potential contributor of the loss $X$. Choosing it as the $d$-th dimension typically provides the highest probability to ensure $x> g^\ast$. Second, it is easy to show that $\lambda = 2\sqrt{\gamma_{1d}(x-\xi_1)+\delta_{1d}^2}$. Notice that in our problem $x={\rm VaR}_\alpha(X)$ and it is typically significantly larger than $\xi_1$. Then, a large $\gamma_{1d}$ typically implies a large $\lambda$, which reduces the chance of $\lambda$ near 0 and thus prevents $q_1$ and $q_2$ from blowing up. Third, a larger $\gamma_{1d}$ typically reduces the distance between the two real roots $r_1$ and $r_2$ and thus balance the values of $q_1$ and $q_2$, which prevents one side of the roots from dominating the estimation and reduces the variance of the estimator.
\end{remark}

\begin{remark}
Notice that the random variable $g^\ast-x$ is generalized chi-squared distributed. In fact, 
$$
g^\ast -x \ =\ c_1- x- \sum_{j=1}^d \frac{\delta_{1j}^2}{4\gamma_{1j}} + \sum_{j=1}^{d-1} \gamma_{1j} \tilde{Z}_j^2, 
$$
where $\tilde{Z}_j^2= [ Z_j + \delta_{1j}/(2\gamma_{1j}) ]^2$ is noncentral chi-squared distributed, so $g^\ast-x$ is generalized chi-squared distributed. The generalized chi-squared random variable does not have a simple closed-form probability density function, so it is difficult to analytically evaluate ${\rm E}\left[|g^\ast-x|^{-1/2}\right]$. As a special case, for a chi-squared distributed random variable $\tilde{g}$ with degree of freedom $d-1$, i.e., $\tilde{g}\sim \chi^2(d-1)$, we can directly prove that, when $d>2$, ${\rm E}[\tilde{g}^{-1/2}]= 2^{-1/2}\cdot \Gamma[(d-2)/2]/\Gamma[(d-1)/2]<\infty$, where $\Gamma[\cdot]$ is the Gamma function. Therefore, when $d$ is large, the condition ${\rm E}\left[|g^\ast-x|^{-1/2}\right]<\infty$ is likely to hold. The same argument also applies to the similar conditions used in Lemmas 7 and 9 and Theorems 5, 6 and 7. These conditions are likely to hold when $d$ is large.
\end{remark}

Theorem \ref{thm:IS} is an interesting result. First, it is derived under the importance-sampling distribution, but is ended up in expectations under the original distribution. Hence, no change of measure is needed to compute it. Therefore, we call our estimation approach the ``importance-sampling inspired estimation".  Second, it turns the conditional probability, conditioned on a probability zero event $\{X=x\}$, into the ratio of two unconditional expectations. Suppose there are $n$ observations of $(Z_1,\ldots,Z_{d-1})^\top$ to compute $n$ observations of $Y_1,~Y_2,~q_1,~q_2$, denoted by $Y_{1,k},~Y_{2,k},~q_{1,k},~q_{2,k}$ for $k=1,\ldots,n$. Then, the conditional probability ${\rm Pr}\{Y \leq y\, |\, X=x\}$, i.e., the conditional distribution function $F_{Y|X}(y|x)$, may be estimated by
\begin{equation}\label{eqn:is:dist_est}
	\tilde F_n(y,x) = \frac{{1\over n} \sum_{k=1}^{n} \left( I\{Y_{1,k}\le y\} q_{1,k} + I\{Y_{2,k}\le y\} q_{2,k}\right)}{ {1\over n} \sum_{k=1}^{n} \left( q_{1,k} +  q_{2,k}\right)}
\end{equation}
for all $(x,y)\in\mathbb{R}^2$. It is worthwhile noting that, in Equation (\ref{eqn:is:dist_est}), $Y_{1,k},Y_{2,k},q_{1,k},q_{2,k}$, $k=1,\ldots,n$, are all functions of $x$ as well. Because $\tilde F_n(y,x)$ is a ratio estimator and its rate of convergence is $n^{-1/2}$ \citep{Law2015}, in the following Section \ref{subsubsec:IS_est} we derive a CoVaR estimator that may achieve a rate of convergence of $n^{-1/2}$. 

Furthermore, it is interesting to notice that, even though the derivation of Theorem \ref{thm:IS} depends critically on the property of the quadratic function $g(z)$ introduced by the delta-gamma approximation of $X$, the IS-inspired estimation approach may be applicable to more general situations where a conditional expectation, conditioned on a probability-zero event, needs to be estimated. As long as the condition may be turned into a general equation in the form of $g(z)=x$ and all the roots of the equation may be calculated either through closed-form expressions or through numerical root-finding algorithms, the approach is applicable.

\subsubsection{The IS-Inspired Estimator} \label{subsubsec:IS_est}

Based on Theorem \ref{thm:IS}, we propose the following procedure to estimate ${\rm CoVaR}_{\alpha,\beta}$, where the $d$-th dimension satisfies $\gamma_{1d}=\max\{\gamma_{11},\ldots,\gamma_{1d}\}>0$. Notice that Theorem \ref{thm:IS} holds for all $x\in\mathbb{R}$. Then, we may treat all of $r_1,r_2,Y_1,Y_2,\lambda,q_1,q_2$ as random functions of $x$.
\begin{description}
	\item[Step 1.] Let $n_1$ and $n_2$ be positive integers such that $n_1+n_2=n$. Generate $n_1$ observations of $\left(Z_{1}, \ldots, Z_{d}\right)^{\top}$ to obtain $n_1$ observations of $X$ and estimate ${\rm VaR}_\alpha(X)$ by the order statistic $\tilde{v}_\alpha=X_{(\ceil{\alpha n_1})}$.
	
	\item[Step 2.] Generate $n_2$ observations of $\left(Z_{1}, \ldots, Z_{d-1}\right)^{\top}$. For each observation $\left(Z_{1,k}, \ldots, Z_{d-1,k}\right)^{\top}$, $k=1,\ldots,n_2$, let $\xi_{1,k}= c_1+\sum_{j=1}^{d-1} \big(\delta_{1j} Z_{j,k}+\gamma_{1j}Z_{j,k}^2 \big)$ and $g_k(z)=\xi_{1,k}+\delta_{1d}z+\gamma_{1d}z^2$. If $\tilde{v}_\alpha> g_k^\ast$, i.e., $\tilde{v}_\alpha>\min_{z\in\mathbb{R}} g_k(z)=\xi_{1,k}- \delta_{1d}^2/(4\gamma_{1d})$, let
	\begin{eqnarray*}
		\tilde r_{1,k} &=& r_{1,k}(\tilde{v}_\alpha)\ =\  \frac{1}{2\gamma_{1d}} \left[ -\delta_{1d}- \sqrt{\delta_{1d}^2 +4\gamma_{1d}(\tilde{v}_\alpha-\xi_{1,k})} \right],\\
		\tilde r_{2,k} &=& r_{2,k}(\tilde{v}_\alpha)\ =\   \frac{1}{2\gamma_{1d}} \left[ -\delta_{1d} + \sqrt{\delta_{1d}^2 +4\gamma_{1d}(\tilde{v}_\alpha-\xi_{1,k})} \right],\\
		\tilde Y_{\ell,k} &=& Y_{\ell,k}(\tilde{v}_\alpha)\ =\   c_2 + \sum_{j=1}^{d-1}\Big(\delta_{2j} Z_{j,k} + \gamma_{2j}Z_{j,k}^2\Big) +  \delta_{2d} \cdot \tilde r_{\ell,k} + \gamma_{2d} \cdot \tilde r_{\ell,k}^2, \quad\ell=1,2,\\
		\tilde \lambda_k &=& \lambda_k(\tilde{v}_\alpha) ~ \ =\   \sqrt{\delta_{1d}^2 + 4\gamma_{1d}(\tilde{v}_\alpha-\xi_{1,k})},\\
		\tilde q_{\ell,k} &=& q_{\ell,k}(\tilde{v}_\alpha)\ =\   {1\over\tilde\lambda_k\sqrt{2\pi}} \cdot e^{-{1\over 2}\tilde r_{\ell,k}^2}, \quad\ell=1,2.
	\end{eqnarray*}
	If $\tilde{v}_\alpha\leq g_k^\ast$, i.e., $\tilde{v}_\alpha\leq \xi_{1,k}- \delta_{1d}^2/(4\gamma_{1d})$, let $\tilde Y_{\ell,k}=+\infty$ and $\tilde{q}_{\ell,k}=0$, $\ell=1,2$. 
	
	\item[Step 3.] For each $k=1,\ldots,n_2$, let
	\[
	\tilde w_{\ell,k} = \frac{\tilde q_{\ell,k} }{ \sum_{k=1}^{n_2}\left(\tilde q_{1,k}+\tilde q_{2,k} \right) },\quad \ell=1,2.
	\]
	We organize the observations as follows: 
	\begin{align*}
		(\tilde Y_{1,1}, \tilde w_{1,1}),\quad & (\tilde Y_{2,1}, \tilde w_{2,1}), \\
		(\tilde Y_{1,2}, \tilde w_{1,2}), \quad &(\tilde Y_{2,2}, \tilde w_{2,2}), \\
		\vdots \\
		(\tilde Y_{1,n_2}, \tilde w_{1,n_2}), \quad & (\tilde Y_{2,n_2}, \tilde w_{2,n_2}).
	\end{align*}
	Sort $\tilde Y_{1,1}, ~\tilde Y_{2,1}, ~\tilde Y_{1,2}, ~\tilde Y_{2,2},\ldots, ~\tilde Y_{1,n_{2}}, ~\tilde Y_{2,n_{2}}$ from lowest to highest, denoted by
	$$
	\tilde Y_{(1)} \leq \tilde Y_{(2)} \leq \cdots \leq\tilde  Y_{(2 n_{2})}, 
	$$
	and denote the corresponding $\tilde w$ values by $\tilde w_{(1)}, ~\tilde w_{(2)},\ldots, ~\tilde w_{(2 n_{2})}$. Furthermore, let $\pi_i =\sum_{j=1}^{i} \tilde w_{(j)}$, $i=1, \ldots, 2 n_{2}$. Suppose that there is $m \leq 2 n_{2}$ such that $\pi_{m-1} \leq \beta$ and $\pi_{m}>\beta$. Then, we let $\tilde{Y}^{\rm IS}=\tilde Y_{(m)}$, which is the IS-inspired estimator of ${\rm CoVaR}_{\alpha,\beta}$.
\end{description}

\begin{remark}
	Notice that the conditional distribution function ${\rm Pr}\left\{ Y\le y\,|\,X= {\rm VaR}_\alpha(X)\right\}$ may be approximated by ${\rm Pr}\left\{ Y\le y\,|\,X= \tilde{v}_\alpha\right\}$, where $\tilde{v}_\alpha$ is the {\rm VaR} estimator calculated from $n_1$ observations of $X$. Furthermore, by Equation (\ref{eqn:is:dist_est}), we may estimate ${\rm Pr}\left\{ Y\le y\,|\,X= \tilde{v}_\alpha\right\}$ by 
	\begin{eqnarray}\label{con_prob_est}
		\tilde F_{n_2}(y,\tilde{v}_\alpha) &=& \frac{{1\over n_2} \sum_{k=1}^{n_2} \left( I\{\tilde Y_{1,k}\le y\} \tilde q_{1,k} + I\{\tilde Y_{2,k}\le y\} \tilde q_{2,k}\right)}{ {1\over n_2} \sum_{k=1}^{n_2} \left( \tilde q_{1,k} +  \tilde q_{2,k}\right)}\\
		&=& \sum_{k=1}^{n_2} \left( I\{\tilde Y_{1,k}\le y\}\tilde  w_{1,k} + I\{\tilde Y_{2,k}\le y\} \tilde w_{2,k}\right). \nonumber
	\end{eqnarray}
	Then, $\tilde{Y}^{\rm IS}$ can be viewed as a direct estimator of the inverse function of $\tilde F_{n_2}(y,\tilde{v}_\alpha)$ at $\beta$. 	
\end{remark}

\subsection{Consistency}\label{C&ANofIS}

In this subsection we prove the consistency of the IS-inspired estimator $\tilde{Y}^{\rm IS}$. To analyze the estimator, we start with the the conditional distribution function $F_{Y|X}(y\,|\,{\rm VaR}_\alpha(X))= {\rm Pr}\{ Y\leq y\,|\,X= {\rm VaR}_\alpha(X)\}$ and its estimator $\tilde F_{n_2}(y,\tilde{v}_\alpha)$ given by Equation \eqref{con_prob_est}. Let $Q(y,x)= I\left\{Y_{1} \leq y\right\}  q_{1} + I\left\{Y_{2} \leq y\right\} q_{2}$ and $Q(x)=q_{1} +q_2$, where $Y_1,Y_2,q_1,q_2$ are all random functions of $x$. Furthermore, let $v_\alpha={\rm VaR}_\alpha(X)$. Then, by  Theorem \ref{thm:IS} and Equation \eqref{con_prob_est}, we have
\begin{eqnarray}
	F_{Y|X}(y|v_\alpha) &=& \frac{{\rm E}[Q(y, v_\alpha)]}{ {\rm E}[Q(v_\alpha)]}, \nonumber\\
	\tilde {F}_{n_2}(y, \tilde{v}_\alpha) &=& \frac{\frac1{n_2} \sum_{k=1}^{n_2} Q_k(y, \tilde{v}_\alpha)}{\frac1{n_2} \sum_{k=1}^{n_2} Q_k(\tilde{v}_\alpha)}, \label{eqn:F_est}
\end{eqnarray}
where $Q_k(y,x)$ and $Q_k(y)$, $k=1,\ldots,n_2$, are the observations of $Q(y,x)$ and $Q(y)$. Therefore, to prove that $\tilde {F}_{n_2}(y,\tilde{v}_\alpha)$ converges to $F_{Y|X}(y|v_\alpha)$ as $n\to\infty$, we need to prove the convergence of both the numerator and the denominator. We make the following assumption on $Q(y,x)$ and $Q(x)$.

\begin{assumption}\label{assu:dist1}
	Suppose ${\rm E}[Q(x)]$ and  ${\rm E}[Q(y, x)]$ are twice differentiable functions of $x$ for all $x\in \mathbb{R}$ and $y\in \mathcal{Y}$. 
\end{assumption}

Similar to the proofs in Section \ref{sec3}, we can also divide the estimation errors of the numerator and denominator of Equation (\ref{eqn:F_est}) into two parts: 
\begin{align*}
 \frac{1}{n_2} \sum_{k=1}^{n_2} Q_k(y, \tilde{v}_\alpha) - {\rm E}[Q(y, v_\alpha)]
	\ &=\ \underbrace{ \frac1{n_2} \sum\nolimits_{k=1}^{n_2} Q_k(y, \tilde{v}_\alpha) - {\rm E}[Q(y, \tilde{v}_\alpha) |\tilde{v}_\alpha]}_{\rm error~I}\ +\ \underbrace{\vphantom{\frac1{n_2}}{\rm E}[Q(y, \tilde{v}_\alpha) |\tilde{v}_\alpha ]- {\rm E}[Q(y, v_\alpha)] }_{\rm error~II}, \label{2parts1} \\ 
	\frac{1}{n_2} \sum_{k=1}^{n_2} Q_k(\tilde{v}_\alpha) - {\rm E}[Q(v_\alpha)]\ &=\ \underbrace{ \frac1{n_2} \sum\nolimits_{k=1}^{n_2} Q_k(\tilde{v}_\alpha) - {\rm E}[Q(\tilde{v}_\alpha) |\tilde{v}_\alpha ]}_{\rm error~III}\ +\ \underbrace{\vphantom{\frac1{n_2}}{\rm E}[Q(\tilde{v}_\alpha) |\tilde{v}_\alpha ]- {\rm E}[Q(v_\alpha)] }_{\rm error~IV}.  
\end{align*}
Notice that, $\{Q_k(y,\tilde{v}_\alpha),k=1,\ldots,n_2\}$ are not independent since they all depends on $\tilde{v}_\alpha$. However, conditional on $\tilde{v}_\alpha$, they are independent. Therefore, to study the estimation error of the numerator, we choose the conditional expectation ${\rm E}[Q(y, \tilde{v}_\alpha) |\tilde{v}_\alpha]$ as a bridge. Similarly, we choose ${\rm E}[Q(\tilde{v}_\alpha) |\tilde{v}_\alpha]$ as a bridge when study the estimation error of the denominator. 

In next two lemmas, we prove that the four errors converge to zero. The convergences of the errors II and IV (i.e., Lemma \ref{WBC14}) are based on the continuous mapping theorem \citep{van2000}, and those of the errors I and III (i.e., Lemma \ref{ABC4}) are based on Chebyshev's inequality \citep{durrett2019probability}. The more detailed proofs are included in the appendix. 

\begin{lemma}\label{WBC14}
	Suppose that Assumptions \ref{assu:dist} and \ref{assu:dist1} hold. Then, we have ${\rm E}[Q(y, \tilde{v}_\alpha)|\tilde{v}_\alpha] \rightarrow {\rm E}[Q(y, v_\alpha)]$ and ${\rm E}[Q(\tilde{v}_\alpha)|\tilde{v}_\alpha] \rightarrow {\rm E}[Q(v_\alpha)]$ w.p.1 as $n_1 \rightarrow\infty$. 
\end{lemma}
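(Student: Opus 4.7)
The plan is to exploit the fact that Step 1 and Step 2 of the IS-inspired procedure use disjoint batches of independent randomness, so $\tilde v_\alpha$ is independent of the $n_2$ copies of $(Z_1,\ldots,Z_{d-1})$ used in Step 2. Since $Q(y,x)$ and $Q(x)$ are deterministic functions of $(Z_1,\ldots,Z_{d-1})$ once $x$ is fixed, this independence lets us peel off the conditioning cleanly.

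More precisely, I would first introduce the bookkeeping functions
\[
\phi(x) \;=\; {\rm E}[Q(y,x)], \qquad \psi(x) \;=\; {\rm E}[Q(x)],
\]
where the expectations are taken over $(Z_1,\ldots,Z_{d-1})$. By Assumption~\ref{assu:dist1}, both $\phi$ and $\psi$ are twice differentiable in $x$ on $\mathbb{R}$, hence in particular continuous. Using the independence of $\tilde v_\alpha$ and the Step~2 sample, I would then argue that
\[
{\rm E}\!\left[Q(y,\tilde v_\alpha)\,|\,\tilde v_\alpha\right] = \phi(\tilde v_\alpha), \qquad
{\rm E}\!\left[Q(\tilde v_\alpha)\,|\,\tilde v_\alpha\right] = \psi(\tilde v_\alpha) \quad \text{w.p.1.}
\]

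Next, I would invoke the strong consistency of the order-statistic quantile estimator from Serfling (1980): since $\tilde v_\alpha = X_{(\lceil \alpha n_1\rceil)}$ and $f_X$ is continuous and positive at $v_\alpha$ by Assumption~\ref{assu:dist}, we have $\tilde v_\alpha \to v_\alpha$ w.p.1 as $n_1\to\infty$. Combining this with the continuity of $\phi$ and $\psi$ via the continuous mapping theorem yields $\phi(\tilde v_\alpha)\to\phi(v_\alpha) = {\rm E}[Q(y,v_\alpha)]$ and $\psi(\tilde v_\alpha)\to\psi(v_\alpha) = {\rm E}[Q(v_\alpha)]$ w.p.1, which is exactly the claim of the lemma.

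There is really no hard step here; the only point that deserves care is the independence argument that justifies pulling $\tilde v_\alpha$ outside the inner expectation without any extra integrability hypothesis. It is essential that the two samples in Steps 1 and 2 are generated afresh, and that $Q(y,x)$ and $Q(x)$ are, for each fixed $x$, deterministic (bounded) functions of the Step~2 randomness alone. Once this is spelled out, the rest is a one-line application of the continuous mapping theorem.
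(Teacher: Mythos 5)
Your proposal is correct and follows essentially the same route as the paper's proof: continuity of ${\rm E}[Q(y,x)]$ and ${\rm E}[Q(x)]$ in $x$ from Assumption \ref{assu:dist1}, strong consistency $\tilde{v}_\alpha \to v_\alpha$ w.p.1 from \cite{Serfling1980}, and the continuous mapping theorem. Your explicit remark that the independence of the Step~1 and Step~2 samples is what identifies ${\rm E}[Q(y,\tilde{v}_\alpha)\,|\,\tilde{v}_\alpha]$ with $\phi(\tilde{v}_\alpha)$ is a small but welcome addition of rigor that the paper leaves implicit.
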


\begin{lemma}\label{ABC4}
	Suppose that Assumptions \ref{assu:dist} and \ref{assu:dist1} hold and $\sup_{n_1} {\rm E}[|\tilde{v}_\alpha-g^\ast|^{-1}] < \infty$. Then, we have $\frac1{n_2} \sum\nolimits_{k=1}^{n_2} {Q}_k(y, \tilde{v}_\alpha) - {\rm E}[{Q}(y, \tilde{v}_\alpha)| \tilde{v}_\alpha] \rightarrow 0$ in probability and $\frac1{n_2} \sum\nolimits_{k=1}^{n_2} Q_k(\tilde{v}_\alpha) - {\rm E}[Q(\tilde{v}_\alpha)|\tilde{v}_\alpha] \rightarrow 0$ in probability as $n_2 \rightarrow\infty$. 
\end{lemma}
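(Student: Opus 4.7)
\textbf{Proof plan for Lemma \ref{ABC4}.} The plan is to exploit the fact that the two batches of observations are drawn independently, so conditional on $\tilde{v}_\alpha$ the summands in each sample average are i.i.d., which lets me apply Chebyshev's inequality pointwise in $\tilde{v}_\alpha$ and then integrate. I will argue only the first convergence; the second follows identically since $Q_k(\tilde{v}_\alpha)$ arises from the same construction without the indicator factor.

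First I would fix $\varepsilon > 0$ and condition on $\tilde{v}_\alpha$. Because the $n_2$ observations of $(Z_1,\ldots,Z_{d-1})$ used to build $\{Q_k(y,\tilde{v}_\alpha)\}_{k=1}^{n_2}$ are independent of the $n_1$ observations that produce $\tilde{v}_\alpha$, these $Q_k$ are conditionally i.i.d.\ given $\tilde{v}_\alpha$. Chebyshev's inequality then gives
\begin{equation*}
    \mathrm{Pr}\!\left\{ \left| \tfrac{1}{n_2} \sum_{k=1}^{n_2} Q_k(y,\tilde{v}_\alpha) - \mathrm{E}[Q(y,\tilde{v}_\alpha)\,|\,\tilde{v}_\alpha] \right| \ge \varepsilon \,\Big|\, \tilde{v}_\alpha \right\} \le \frac{\mathrm{Var}(Q(y,\tilde{v}_\alpha)\,|\,\tilde{v}_\alpha)}{n_2 \varepsilon^2}.
\end{equation*}
Taking expectation over $\tilde{v}_\alpha$, it suffices to show that $\mathrm{E}[\mathrm{Var}(Q(y,\tilde{v}_\alpha)\,|\,\tilde{v}_\alpha)]$ is finite and uniformly bounded in $n_1$, since then the right-hand side vanishes as $n_2 \to \infty$.

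Next I would bound the conditional variance by the conditional second moment, and use $0 \le I\{Y_\ell \le y\} \le 1$ to get $Q(y,x) \le Q(x) = q_1 + q_2$, so that $\mathrm{E}[Q(y,\tilde{v}_\alpha)^2\,|\,\tilde{v}_\alpha] \le \mathrm{E}[Q(\tilde{v}_\alpha)^2\,|\,\tilde{v}_\alpha]$. The key structural computation is that, when $x > g^\ast$,
\begin{equation*}
    q_\ell \;=\; \frac{f_d(r_\ell)}{\lambda} \;\le\; \frac{1}{\sqrt{2\pi}\,\lambda}, \qquad \lambda \;=\; 2\sqrt{\gamma_{1d}(x-g^\ast)},
\end{equation*}
using $g^\ast = \xi_1 - \delta_{1d}^2/(4\gamma_{1d})$ and $\lambda^2 = \delta_{1d}^2 + 4\gamma_{1d}(x-\xi_1) = 4\gamma_{1d}(x-g^\ast)$. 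Therefore $(q_1+q_2)^2 \le 2(q_1^2+q_2^2) \le \frac{1}{\pi\gamma_{1d}} \cdot \frac{I\{x > g^\ast\}}{x - g^\ast}$, and on $\{x \le g^\ast\}$ both $q_\ell = 0$.

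Iterating the expectation and invoking the hypothesis gives
\begin{equation*}
    \mathrm{E}[\mathrm{Var}(Q(y,\tilde{v}_\alpha)\,|\,\tilde{v}_\alpha)] \;\le\; \mathrm{E}[Q(\tilde{v}_\alpha)^2] \;\le\; \frac{1}{\pi\gamma_{1d}} \, \mathrm{E}\!\left[ \frac{1}{|\tilde{v}_\alpha - g^\ast|} \right] \;\le\; \frac{1}{\pi\gamma_{1d}} \sup_{n_1} \mathrm{E}\!\left[ |\tilde{v}_\alpha - g^\ast|^{-1} \right],
\end{equation*}
which is finite by assumption and does not depend on $n_2$. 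Plugging this into the Chebyshev bound and letting $n_2 \to \infty$ yields convergence in probability.

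The main obstacle is the singularity of $q_\ell$ as $\tilde{v}_\alpha \downarrow g^\ast$, which is precisely why the pointwise almost-sure argument used for the within-batch error in Lemma \ref{WBC14} cannot be reused here; I need an integrable second-moment bound, and the $|x-g^\ast|^{-1/2}$ rate in the factor $1/\lambda$ is sharp enough that only the moment condition $\sup_{n_1} \mathrm{E}[|\tilde{v}_\alpha - g^\ast|^{-1}] < \infty$ makes the Chebyshev argument go through. Once this uniform bound is in hand, both convergences follow by the same argument, with the second one being slightly easier because it avoids carrying the $y$-dependent indicator.
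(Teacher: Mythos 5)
Your proposal is correct and follows essentially the same route as the paper: conditional on $\tilde{v}_\alpha$ the $Q_k$ are i.i.d., Chebyshev's inequality is applied conditionally, the expectation is taken via the law of total expectation, and everything reduces to the uniform second-moment bound $\sup_{n_1}{\rm E}[Q^2(\tilde{v}_\alpha)]\le (\pi\gamma_{1d})^{-1}\sup_{n_1}{\rm E}[|\tilde{v}_\alpha-g^\ast|^{-1}]<\infty$. The only (cosmetic) difference is that you obtain the bound $Q^2(x)\le (\pi\gamma_{1d}|x-g^\ast|)^{-1}$ directly from the explicit formulas $q_\ell=f_d(r_\ell)/\lambda$ and $\lambda=2\sqrt{\gamma_{1d}(x-g^\ast)}$, whereas the paper recycles the bound on $P'_\varepsilon/(2\varepsilon)$ from the proof of Theorem \ref{thm:IS}; both yield the same constant.
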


Combining the above two lemmas and by Slutsky's lemma \citep{van2000}, we obtain directly the following theorem on the consistency of $\tilde{F}_{n_2}(y,\tilde{v}_\alpha)$. 

\begin{theorem}\label{consistencyofFbar}
	Suppose that Assumptions \ref{assu:dist} and \ref{assu:dist1} hold, $\sup_{n_1} {\rm E}[|\tilde{v}_\alpha-g^\ast|^{-1}] < \infty$, and $n_1 \rightarrow\infty$ and $n_2 \rightarrow\infty$ as $n\rightarrow\infty$. Then, we have $\tilde{F}_{n_2}(y,\tilde{v}_\alpha) \rightarrow F_{Y|X}(y|v_\alpha)$ in probability as $n\rightarrow\infty$.
\end{theorem}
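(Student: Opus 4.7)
The plan is to leverage the four-term error decomposition announced just before the statement, combining Lemma \ref{WBC14} and Lemma \ref{ABC4} to show that the numerator and denominator of the ratio estimator in Equation \eqref{eqn:F_est} each converge in probability to the corresponding expectations, and then to apply Slutsky's lemma (for ratios with nonzero denominator limit) to conclude.

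First I would treat the numerator. Writing
\[
\frac{1}{n_2}\sum_{k=1}^{n_2} Q_k(y,\tilde v_\alpha) - {\rm E}[Q(y, v_\alpha)] \;=\; \underbrace{\tfrac{1}{n_2}\sum_{k=1}^{n_2} Q_k(y,\tilde v_\alpha)-{\rm E}[Q(y,\tilde v_\alpha)\,|\,\tilde v_\alpha]}_{\text{error I}} \;+\; \underbrace{{\rm E}[Q(y,\tilde v_\alpha)\,|\,\tilde v_\alpha]-{\rm E}[Q(y,v_\alpha)]}_{\text{error II}},
\]
error I vanishes in probability as $n_2\to\infty$ by Lemma \ref{ABC4}, while error II vanishes w.p.1 as $n_1\to\infty$ by Lemma \ref{WBC14}. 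Since both $n_1\to\infty$ and $n_2\to\infty$ as $n\to\infty$, the sum converges to zero in probability, so $\frac{1}{n_2}\sum_{k=1}^{n_2} Q_k(y,\tilde v_\alpha)\to {\rm E}[Q(y,v_\alpha)]$ in probability. Next I would repeat the same decomposition verbatim for the denominator using errors III and IV, obtaining $\frac{1}{n_2}\sum_{k=1}^{n_2} Q_k(\tilde v_\alpha)\to {\rm E}[Q(v_\alpha)]$ in probability.

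Finally, to pass from the joint convergence of numerator and denominator to the convergence of the ratio via Slutsky's lemma, I need to know that the denominator limit ${\rm E}[Q(v_\alpha)]={\rm E}[q_1+q_2]$ is strictly positive. This follows from the IS representation developed in Section \ref{ISalg}: the derivation leading to Theorem \ref{thm:IS} yields ${\rm E}[q_1+q_2]=f_X(v_\alpha)$, and by Assumption \ref{assu:dist} $f_X$ is positive in a neighborhood of $v_\alpha$. Given a strictly positive limit for the denominator, the continuous mapping theorem applied to the map $(a,b)\mapsto a/b$ on $\mathbb{R}\times(0,\infty)$ (equivalently, Slutsky's lemma applied to ratios) gives
\[
\tilde F_{n_2}(y,\tilde v_\alpha)\;=\;\frac{\tfrac{1}{n_2}\sum_{k=1}^{n_2}Q_k(y,\tilde v_\alpha)}{\tfrac{1}{n_2}\sum_{k=1}^{n_2}Q_k(\tilde v_\alpha)}\;\longrightarrow\;\frac{{\rm E}[Q(y,v_\alpha)]}{{\rm E}[Q(v_\alpha)]}\;=\;F_{Y|X}(y\,|\,v_\alpha)
\]
in probability, where the last equality is Theorem \ref{thm:IS}.

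The main obstacle, which has in fact already been absorbed into the preceding lemmas, is ensuring that the ratio structure does not amplify errors near $\tilde v_\alpha=g^\ast$ where $q_1,q_2$ may blow up; this is exactly where the moment hypothesis $\sup_{n_1} {\rm E}[|\tilde v_\alpha-g^\ast|^{-1}]<\infty$ is used in Lemma \ref{ABC4}. Beyond that, the step here is essentially bookkeeping: combining convergence in probability (errors I and III) with convergence w.p.1 (errors II and IV), and verifying the positivity of the denominator limit so that Slutsky's lemma is applicable.
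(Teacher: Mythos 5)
Your proof is correct and follows essentially the same route as the paper, which likewise obtains the result by combining Lemmas \ref{WBC14} and \ref{ABC4} with Slutsky's lemma applied to the ratio in Equation \eqref{eqn:F_est}. Your explicit verification that the denominator limit ${\rm E}[Q(v_\alpha)]={\rm E}[q_1+q_2]=f_X(v_\alpha)>0$ is a detail the paper leaves implicit, and it is a correct and worthwhile addition.
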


Theorem \ref{consistencyofFbar} shows that $\tilde{F}_{n_2}(y,\tilde{v}_\alpha)$ is a consistent estimator to $F_{Y|X}(y|v_\alpha)$. In light of this theorem, we can prove the consistency of the IS-inspired estimator $\tilde{Y}^{\rm IS}$. In fact, it is easy to recognize that $\tilde{Y}^{\rm IS}$ satisfies
\begin{equation*}
	\tilde{Y}^{\rm IS}= \inf\{y\in\mathbb{R}: \tilde{F}_{n_2}(y,\tilde{v}_\alpha)\geq \beta\}. 
\end{equation*}
Let $\tilde{F}^{-1}_{n_2}(z,\tilde{v}_\alpha)= \inf\{y\in\mathbb{R}: \tilde{F}_{n_2}(y,\tilde{v}_\alpha)\geq z\}$. Then, we have $\tilde{Y}^{\rm IS}= \tilde{F}^{-1}_{n_2}(\beta,\tilde{v}_\alpha)$. As shown in Equation \eqref{ConQuan}, we have ${\rm CoVaR}_{\alpha, \beta}= F^{-1}_{Y|X}(\beta|v_\alpha)$. Hence, we can take inverse of both $\tilde{F}_{n_2}(y,\tilde{v}_\alpha)$ and $F_{Y|X}(y|v_\alpha)$ in Theorem \ref{consistencyofFbar} to show that $\tilde{Y}^{\rm IS}$ converges to ${\rm CoVaR}_{\alpha,\beta}$. 

\begin{theorem}\label{thm34}
	Suppose that Assumptions \ref{assu:dist} and \ref{assu:dist1} hold, $\sup_{n_1} {\rm E}[|\tilde{v}_\alpha-g^\ast|^{-1}] < \infty$, and $n_1 \rightarrow\infty$ and $n_2 \rightarrow\infty$ as $n\rightarrow\infty$. Then, we have $\tilde{Y}^{\rm IS} \rightarrow {\rm CoVaR}_{\alpha,\beta}$ in probability, as $n\rightarrow\infty$. 
\end{theorem}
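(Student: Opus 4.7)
The plan is to mirror the argument used in the proof of Theorem \ref{thm3}, but replace the Borel--Cantelli / w.p.1 machinery with simple in-probability bounds, relying on Theorem \ref{consistencyofFbar} as the core input. The overall strategy is to use the monotonicity of $\tilde{F}_{n_2}(\cdot,\tilde{v}_\alpha)$ in $y$ to transfer convergence in probability of the conditional CDF estimator to its generalized inverse.

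First, I would exploit the fact that, for each fixed realization of $\tilde{v}_\alpha$, the map $y\mapsto \tilde{F}_{n_2}(y,\tilde{v}_\alpha)$ is a nondecreasing, right-continuous step function, since by Equation \eqref{con_prob_est} it is a (nonnegative) weighted sum of indicators $I\{\tilde{Y}_{\ell,k}\leq y\}$. Therefore $\tilde{F}^{-1}_{n_2}(\beta,\tilde{v}_\alpha)=\tilde{Y}^{\rm IS}$ satisfies the standard characterization: for any deterministic $y_1<y_2$,
\[
\tilde{F}_{n_2}(y_1,\tilde{v}_\alpha)<\beta<\tilde{F}_{n_2}(y_2,\tilde{v}_\alpha)\ \Longrightarrow\ y_1<\tilde{Y}^{\rm IS}\leq y_2.
\]

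Second, pick any $\tilde\varepsilon>0$ small enough that ${\rm CoVaR}_{\alpha,\beta}\pm\tilde\varepsilon\in\mathcal{Y}$. By Assumption \ref{assu:dist}, $f_{Y|X}(\cdot\,|\,v_\alpha)$ is positive on $\mathcal{Y}$, so $y\mapsto F_{Y|X}(y\,|\,v_\alpha)$ is strictly increasing there, which gives the strict inequalities
\[
F_{Y|X}({\rm CoVaR}_{\alpha,\beta}-\tilde\varepsilon\,|\,v_\alpha)<\beta<F_{Y|X}({\rm CoVaR}_{\alpha,\beta}+\tilde\varepsilon\,|\,v_\alpha).
\]
Now apply Theorem \ref{consistencyofFbar} at these two fixed deterministic values of $y$ to conclude $\tilde{F}_{n_2}({\rm CoVaR}_{\alpha,\beta}\pm\tilde\varepsilon,\tilde{v}_\alpha)\to F_{Y|X}({\rm CoVaR}_{\alpha,\beta}\pm\tilde\varepsilon\,|\,v_\alpha)$ in probability. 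A Bonferroni bound on the two deviation events, combined with the strict gap above, shows that
\[
{\rm Pr}\left\{\tilde{F}_{n_2}({\rm CoVaR}_{\alpha,\beta}-\tilde\varepsilon,\tilde{v}_\alpha)<\beta<\tilde{F}_{n_2}({\rm CoVaR}_{\alpha,\beta}+\tilde\varepsilon,\tilde{v}_\alpha)\right\}\longrightarrow 1
\]
as $n\to\infty$. By the first step, this event forces $|\tilde{Y}^{\rm IS}-{\rm CoVaR}_{\alpha,\beta}|\leq\tilde\varepsilon$, which is exactly convergence in probability since $\tilde\varepsilon>0$ was arbitrary.

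There is no serious obstacle here because Theorem \ref{consistencyofFbar} has already done the heavy lifting: the difficulty of conditioning on a probability-zero event and the delicate IS representation have been absorbed into that result. The only care point is that the generalized-inverse step requires evaluating $\tilde{F}_{n_2}(y,\tilde{v}_\alpha)$ only at two fixed deterministic values of $y$, so no uniform-in-$y$ convergence is needed; the pointwise in-probability convergence supplied by Theorem \ref{consistencyofFbar} suffices. This mirrors exactly the structure of the proof of Theorem \ref{thm3}, with w.p.1 statements replaced throughout by convergence in probability.
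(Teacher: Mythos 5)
Your proposal is correct and follows essentially the same route as the paper: the paper's proof likewise applies the pointwise in-probability consistency of $\tilde{F}_{n_2}(\cdot,\tilde{v}_\alpha)$ at the two fixed points ${\rm CoVaR}_{\alpha,\beta}\pm\tilde{\varepsilon}$ (re-running the Bonferroni sandwich argument from the proof of Theorem \ref{thm3}), and then transfers this to $\tilde{Y}^{\rm IS}$ via the generalized-inverse characterization. Your explicit invocation of Theorem \ref{consistencyofFbar} in place of repeating that argument is a harmless streamlining, not a different method.
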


\begin{proof}
Similar to the proof between Equations \eqref{eq16} and \eqref{Bonferroni}, by replacing $\hat{F}_k(\cdot)$ by $\tilde{F}_{n_2}(\cdot,\tilde{v}_\alpha)$, replacing $m$ by $n_1$, and replacing $k$ by $n_2$, we have, for any $\tilde{\varepsilon}>0$, 
\begin{equation*}
	{\rm Pr} \left\{ \tilde{F}_{n_2}({\rm CoVaR}_{\alpha,\beta} -\tilde{\varepsilon},\tilde{v}_\alpha)< \beta < \tilde{F}_{n_2}({\rm CoVaR}_{\alpha,\beta} +\tilde{\varepsilon},\tilde{v}_\alpha) \right\} \rightarrow 1 \label{citeeq154}
\end{equation*}
as $n\to\infty$. Moreover, we have $\tilde {F}_{n_2}({\rm CoVaR}_{\alpha,\beta} -\tilde{\varepsilon},\tilde{v}_\alpha)< \beta <   \tilde {F}_{n_2}({\rm CoVaR}_{\alpha,\beta} +\tilde{\varepsilon},\tilde{v}_\alpha)$ if and only if ${\rm CoVaR}_{\alpha,\beta} -\tilde{\varepsilon}< \tilde {F}^{-1}_{n_2}(\beta,\tilde{v}_\alpha) =\tilde{Y}^{\rm IS} <  {\rm CoVaR}_{\alpha,\beta} +\tilde{\varepsilon}$. Then, we have ${\rm Pr} \big\{{\rm CoVaR}_{\alpha,\beta} -\tilde{\varepsilon}< \tilde{Y}^{\rm IS} <  {\rm CoVaR}_{\alpha,\beta} +\tilde{\varepsilon} \big\} \rightarrow 1$ as $n\to\infty$. This concludes the proof of the theorem. 
\end{proof}

Theorem \ref{thm34} shows that $\tilde{Y}^{\rm IS}$ is a consistent estimator of ${\rm CoVaR}_{\alpha,\beta}$. Notice that in the proof of Lemma \ref{ABC4}, we use Chebyshev's inequality to prove that $\frac1{n_2} \sum\nolimits_{k=1}^{n_2} {Q}_i(y, \tilde{v}_\alpha) \rightarrow {\rm E}[{Q}(y, \tilde{v}_\alpha)]$ in probability, and $\frac1{n_2} \sum\nolimits_{k=1}^{n_2} Q_k(\tilde{v}_\alpha) \rightarrow {\rm E}[Q(\tilde{v}_\alpha)]$ in probability as $n\rightarrow\infty$. Therefore, in Theorem \ref{thm34}, we are only able to prove the weak convergence of the IS-inspired estimator $\tilde{Y}^{\rm IS}$, while in Section \ref{subsec:BE:con} we are able to use Hoeffding's inequality to prove the strong consistency of the batching estimator $\hat{Y}^{\rm BE}$.

\subsection{Asymptotic Normality}
\label{subsec:IS:nor}

The consistency established in Theorem \ref{thm34} neither explains how fast is the convergence nor gives guidelines on how to choose $m$ and $k$. To solve these problems we need to analyze the rate of convergence of the IS-inspired estimator and study its asymptotic distribution. In the next two lemmas, we prove  the rates of convergence of the expectation of error I and the normalized error II, respectively. The rate of convergence of the expectation of error I (i.e., Lemma \ref{RCWB4}) is based on the rates of convergence of ${\rm E}\left[ \tilde{v}_\alpha -  v_\alpha \right]$ and ${\rm E}\left[| \tilde{v}_\alpha -  v_\alpha|^2 \right]$ \citep{Hong09}, and the rate of convergence of the normalized error II (i.e., Lemma \ref{AsymAB4}) is based on Berry-Ess\'een Theorem \citep{Serfling1980}. The details of proofs are included in the appendix. 

\begin{lemma}\label{RCWB4}
	Suppose that Assumptions \ref{assu:dist} and \ref{assu:dist1} hold, and there exists $M>0$ such that $|\frac{\partial}{\partial x} {\rm E}[{Q}(y, v_\alpha)]|\leq M$ for all $y\in\mathcal{Y}$ and $|\frac{\partial^2}{\partial x^2} {\rm E}[{Q}(y, x)]|\leq M$ for all $(x,y)\in \mathbb{R}\times\mathcal{Y}$. Then, we have
	$$
	\sup_{y\in\mathcal{Y}} \Big| {\rm E}[Q(y, \tilde{v}_\alpha)]- {\rm E}[Q(y, v_\alpha)] \Big|  = O(n_1^{-1})
	$$
	as $n_1\to\infty$. 
\end{lemma}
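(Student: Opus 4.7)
My approach will closely mirror the proof of Lemma \ref{RCWB}, since the structural claim is identical once one replaces the conditional distribution function $F_{Y|X}(y\,|\,\cdot)$ by the mapping $x\mapsto {\rm E}[Q(y,x)]$ and the within-batch order statistic $X_{(\ceil{\alpha m})}$ by $\tilde{v}_\alpha = X_{(\ceil{\alpha n_1})}$. The basic idea is to Taylor-expand in $x$ around $v_\alpha$, evaluate at $\tilde{v}_\alpha$, take an expectation, and then quote the known moment rates for the $\alpha$-order statistic.

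Concretely, fix $y\in\mathcal{Y}$ and set $\phi_y(x) := {\rm E}[Q(y,x)]$. By Assumption \ref{assu:dist1}, $\phi_y$ is twice differentiable on $\mathbb{R}$, so Taylor's theorem gives, for some random $Z$ between $\tilde{v}_\alpha$ and $v_\alpha$,
\begin{equation*}
\phi_y(\tilde{v}_\alpha)-\phi_y(v_\alpha) \;=\; \phi_y'(v_\alpha)\bigl(\tilde{v}_\alpha-v_\alpha\bigr)+\phi_y''(Z)\bigl(\tilde{v}_\alpha-v_\alpha\bigr)^2.
\end{equation*}
Because $\tilde{v}_\alpha$ is constructed from the $n_1$ observations of $X$ in Step 1, which are independent of the $(Z_1,\ldots,Z_{d-1})^\top$ sample used in Step 2 to form $Q$, the tower property gives ${\rm E}[Q(y,\tilde{v}_\alpha)] = {\rm E}\bigl[\phi_y(\tilde{v}_\alpha)\bigr]$. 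Taking expectations in the display above and invoking the triangle inequality together with the lemma's uniform bounds $|\phi_y'(v_\alpha)|\le M$ and $|\phi_y''(x)|\le M$ yields
\begin{equation*}
\sup_{y\in\mathcal{Y}}\bigl|{\rm E}[Q(y,\tilde{v}_\alpha)] - {\rm E}[Q(y,v_\alpha)]\bigr| \;\le\; M\,\bigl|{\rm E}[\tilde{v}_\alpha - v_\alpha]\bigr| + M\,{\rm E}\bigl[(\tilde{v}_\alpha - v_\alpha)^2\bigr].
\end{equation*}

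The last step will be to invoke Lemma 2 of \cite{Hong09} exactly as in the proof of Lemma \ref{RCWB}: under Assumption \ref{assu:dist}, both ${\rm E}[\tilde{v}_\alpha - v_\alpha]$ and ${\rm E}[(\tilde{v}_\alpha - v_\alpha)^2]$ are $O(n_1^{-1})$, which delivers the claimed uniform rate.

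I do not foresee a real obstacle; the proof is a direct template transfer from Lemma \ref{RCWB}. The only points worth checking with care are (i) that the outer expectation in the statement truly integrates over $\tilde{v}_\alpha$, so that the tower-property identification ${\rm E}[Q(y,\tilde{v}_\alpha)] = {\rm E}[\phi_y(\tilde{v}_\alpha)]$ is legitimate, and (ii) that uniformity in $y\in\mathcal{Y}$ survives the triangle inequality — both are immediate because the bounding constant $M$ does not depend on $y$ and the order-statistic moment bounds from \cite{Hong09} do not involve $y$ either.
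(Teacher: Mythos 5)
Your proposal is correct and follows essentially the same route as the paper's proof: a second-order Taylor expansion of $x\mapsto {\rm E}[Q(y,x)]$ about $v_\alpha$, the law of total expectation (tower property) to pass to ${\rm E}[Q(y,\tilde{v}_\alpha)]$, the uniform derivative bounds $M$, and Lemma 2 of \cite{Hong09} for the $O(n_1^{-1})$ moment rates of the order statistic. Your explicit remark on the independence of the two sampling stages, which legitimizes the conditional-expectation identification, is a small point of added care that the paper leaves implicit.
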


Lemma \ref{RCWB4} shows that the expectation of the error I converges to zero uniformly in $y\in\mathcal{Y}$, and the rate of convergence is $n_1^{-1}$. 

\begin{lemma}\label{AsymAB4}
	Suppose that Assumptions \ref{assu:dist} and \ref{assu:dist1} hold and $\sup_{n_1} {\rm E}[|\tilde{v}_\alpha-g^\ast|^{-3/2}] < \infty$. Let $c(\cdot)$ be any deterministic function. Then, we have  
		\begin{align}
			&\left| {\rm Pr}\left\{ \frac{\sqrt{n_2}}{\tilde\sigma(y,\tilde{v}_\alpha)} \Big\{ \frac1{n_2} \sum_{k=1}^{n_2} Q_k(y, \tilde{v}_\alpha) - {\rm E}\big[Q(y, \tilde{v}_\alpha) |\tilde{v}_\alpha \big] \Big\} \leq c(\tilde{v}_\alpha) \right\} - {\rm E}\big[\Phi\big(c(\tilde{v}_\alpha)\big)\big] \right|\ =\ O\left( n_2^{-1/2} \right) \label{eq2}
		\end{align}
as $n_2\to\infty$, where $\tilde\sigma^2(y,\tilde{v}_\alpha) = {\rm Var}\left(Q(y, \tilde{v}_\alpha) | \tilde{v}_\alpha\right)$ and $\Phi(t)$ is the cumulative distribution function of the standard normal distribution. 
\end{lemma}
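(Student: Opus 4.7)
The plan rests on a conditional Berry--Ess\'een argument followed by averaging over $\tilde{v}_\alpha$. The decisive observation is that $\tilde{v}_\alpha$ is built from the $n_1$ samples of $(Z_1,\ldots,Z_d)$ generated in Step 1, whereas the summands $Q_k(y,\tilde{v}_\alpha)$, $k=1,\ldots,n_2$, are computed from the fresh, independent samples $(Z_{1,k},\ldots,Z_{d-1,k})$ drawn in Step 2. Hence, conditional on $\tilde{v}_\alpha$, the random variables $Q_1(y,\tilde{v}_\alpha),\ldots,Q_{n_2}(y,\tilde{v}_\alpha)$ form an i.i.d.\ sample with conditional mean ${\rm E}[Q(y,\tilde{v}_\alpha)|\tilde{v}_\alpha]$ and conditional variance $\tilde\sigma^{2}(y,\tilde{v}_\alpha)$.

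First I would apply the classical Berry--Ess\'een bound conditionally on $\tilde{v}_\alpha$: for every deterministic $t\in\mathbb{R}$,
$$
\left|\,{\rm Pr}\!\left\{ \frac{\sqrt{n_2}}{\tilde\sigma(y,\tilde{v}_\alpha)}\Big(\frac{1}{n_2}\sum_{k=1}^{n_2}Q_k(y,\tilde{v}_\alpha)-{\rm E}[Q(y,\tilde{v}_\alpha)|\tilde{v}_\alpha]\Big)\le t \,\Big|\, \tilde{v}_\alpha \right\}-\Phi(t)\,\right| \le \frac{C\,\rho(\tilde{v}_\alpha)}{\tilde\sigma^{3}(y,\tilde{v}_\alpha)\sqrt{n_2}},
$$
with $C=33/4$ and $\rho(\tilde{v}_\alpha)={\rm E}[|Q(y,\tilde{v}_\alpha)-{\rm E}[Q(y,\tilde{v}_\alpha)|\tilde{v}_\alpha]|^{3}|\tilde{v}_\alpha]$. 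Since $c(\tilde{v}_\alpha)$ is $\sigma(\tilde{v}_\alpha)$-measurable, the inequality remains valid pointwise in $\tilde{v}_\alpha$ when $t$ is replaced by $c(\tilde{v}_\alpha)$. Taking unconditional expectations, applying Jensen's inequality to $|\cdot|$ and using the tower property on the left-hand side then gives
$$
\Big|\,{\rm Pr}\{\cdots\le c(\tilde{v}_\alpha)\}-{\rm E}[\Phi(c(\tilde{v}_\alpha))]\,\Big| \le \frac{C}{\sqrt{n_2}}\,{\rm E}\!\left[\frac{\rho(\tilde{v}_\alpha)}{\tilde\sigma^{3}(y,\tilde{v}_\alpha)}\right],
$$
so it remains to show that ${\rm E}[\rho(\tilde{v}_\alpha)/\tilde\sigma^{3}(y,\tilde{v}_\alpha)]$ is bounded uniformly in $n_1$. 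For the numerator, the indicators are bounded by $1$, so $|Q(y,\tilde{v}_\alpha)|\le q_1+q_2$; on $\{\tilde{v}_\alpha>g^{\ast}\}$ the formulas $q_\ell=f_d(\tilde r_\ell)/\tilde\lambda$ with $\tilde\lambda=2\sqrt{\gamma_{1d}(\tilde{v}_\alpha-g^{\ast})}$, together with the uniform boundedness of the standard normal density, yield $q_1+q_2=O(|\tilde{v}_\alpha-g^{\ast}|^{-1/2})$, and hence, after the inner conditional expectation, $\rho(\tilde{v}_\alpha)=O(|\tilde{v}_\alpha-g^{\ast}|^{-3/2})$. The hypothesis $\sup_{n_1}{\rm E}[|\tilde{v}_\alpha-g^{\ast}|^{-3/2}]<\infty$ therefore caps the numerator.

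The main obstacle I anticipate is precisely the integrability of the ratio $\rho(\tilde{v}_\alpha)/\tilde\sigma^{3}(y,\tilde{v}_\alpha)$: the numerator explodes on the event $\{\tilde{v}_\alpha\approx g^{\ast}\}$, and one has to simultaneously rule out a vanishing denominator. My plan is to decompose the expectation across $\{|\tilde{v}_\alpha-v_\alpha|\le\eta\}$ and its complement. For small enough $\eta>0$, Assumption \ref{assu:dist1} and a continuity argument bound $\tilde\sigma^{2}(y,\tilde{v}_\alpha)$ below by a deterministic positive constant, reducing the contribution on $\{|\tilde{v}_\alpha-v_\alpha|\le\eta\}$ to a constant multiple of $\sup_{n_1}{\rm E}[|\tilde{v}_\alpha-g^{\ast}|^{-3/2}]$; on the complement, an exponential concentration bound for the sample-quantile estimator $\tilde{v}_\alpha$ together with the elementary bound $\tilde\sigma^{-3}\le C\,n_2^{3/2}$ (since the worst-case conditional variance is at least of order $n_2^{-1}$) makes the contribution negligible relative to $n_2^{-1/2}$. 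Summing the two pieces yields the claimed $O(n_2^{-1/2})$ rate.
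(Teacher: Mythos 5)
Your proposal follows essentially the same route as the paper's proof: a conditional Berry--Ess\'een bound given $\tilde{v}_\alpha$ (valid with $t$ replaced by the $\tilde{v}_\alpha$-measurable $c(\tilde{v}_\alpha)$), the tower property and Jensen's inequality to pass to the unconditional statement, and the pointwise bound $q_1+q_2 \le (\gamma_{1d}\pi|\tilde{v}_\alpha-g^\ast|)^{-1/2}$ combined with the hypothesis $\sup_{n_1}{\rm E}\big[|\tilde{v}_\alpha-g^\ast|^{-3/2}\big]<\infty$ to control the third moment. The only divergence is that you additionally worry about the denominator $\tilde\sigma^{3}(y,\tilde{v}_\alpha)$ --- a point the paper silently absorbs into its final $O(\cdot)$ --- and while that extra care is warranted, your sub-claim that the conditional variance is ``at least of order $n_2^{-1}$'' is not meaningful (given $\tilde{v}_\alpha=x$ it equals ${\rm Var}(Q(y,x))$, which does not depend on $n_2$); on the complement event it suffices to bound the Berry--Ess\'een discrepancy trivially by a constant and use the exponentially small probability of $\{|\tilde{v}_\alpha-v_\alpha|>\eta\}$.
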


Notice that $c(\tilde{v}_\alpha)$ may be random since it depends on $\tilde{v}_\alpha$. When the function $c(\cdot)$ is a single-valued function, i.e., its range has only one value, $c(\tilde{v}_\alpha)$ degenerates to a constant, Lemma \ref{AsymAB4} implies that the error II follows an asymptotic normal distribution when scaled by $\sqrt{n_2}$ and, therefore, its rate of convergence is $n_2^{-1/2}$. However, Lemma \ref{AsymAB4} presents a stronger result since it holds for random $c(\tilde{v}_\alpha)$. Combining with the above two lemmas, we can prove the following theorem on the rate of convergence and asymptotic normality of the IS-inspired estimator $\tilde{Y}^{\rm IS}$. The proof of the theorem is long and we include it in the appendix. 

\begin{theorem}\label{CLTBEE4}
	Suppose that Assumptions \ref{assu:dist} and \ref{assu:dist1} hold, there exists $M>0$ such that $|\frac{\partial}{\partial x} {\rm E}[Q(y, v_\alpha) ]|\leq M$ for all $y\in\mathcal{Y}$ and $|\frac{\partial^2}{\partial x^2} {\rm E}[Q(y, x) ]|\leq M$ for all $(x,y)\in\mathbb{R}\times\mathcal{Y}$, $\sup_{n_1} {\rm E}[|\tilde{v}_\alpha-g^\ast|^{-3/2}] < \infty$, $\sigma(y,x)$ is a continuous function of $(x,y)$ in $\mathbb{R}\times\mathcal{Y}$, and $n_1 \rightarrow\infty$ and $n_2 \rightarrow\infty$ as $n\rightarrow\infty$. When $\sqrt{n_2}/n_1\rightarrow c$ as $n\rightarrow\infty$ for some constant $c\neq 0$, 
	$$
	\tilde{Y}^{\rm IS} - {\rm CoVaR}_{\alpha,\beta} = O_{\rm Pr}\left( n^{-1/2} \right)
	$$
	as $n\rightarrow\infty$. When $\sqrt{n_2}/n_1\rightarrow 0$ as $n\rightarrow\infty$, 
	\begin{equation*}\label{eq84}
		\sqrt{n_2} \left(\tilde{Y}^{\rm IS} - {\rm CoVaR}_{\alpha,\beta} \right) \Rightarrow  \frac{\sigma({\rm CoVaR}_{\alpha,\beta},v_\alpha)}{ {\rm E}[Q(v_\alpha)] f_{Y|X}({\rm CoVaR}_{\alpha,\beta} |v_\alpha )} \cdot N(0,1)
	\end{equation*}
	as $n\rightarrow\infty$. 
\end{theorem}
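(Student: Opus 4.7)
The strategy parallels that of Theorem \ref{CLTBEE}: first establish a rate-of-convergence (and in the favorable regime, an asymptotic distribution) for the ratio estimator $\tilde F_{n_2}(y,\tilde v_\alpha)$ at values of $y$ in a neighborhood of ${\rm CoVaR}_{\alpha,\beta}$, and then invert to obtain the corresponding statement for $\tilde Y^{\rm IS}=\tilde F_{n_2}^{-1}(\beta,\tilde v_\alpha)$. The first step is to linearize the ratio $\tilde F_{n_2}(y,\tilde v_\alpha)=N_{n_2}/D_{n_2}$ around $F_{Y|X}(y|v_\alpha)={\rm E}[Q(y,v_\alpha)]/{\rm E}[Q(v_\alpha)]$. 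Writing
\begin{equation*}
\tilde F_{n_2}(y,\tilde v_\alpha)-F_{Y|X}(y|v_\alpha)=\frac{(N_{n_2}-{\rm E}[Q(y,v_\alpha)])-F_{Y|X}(y|v_\alpha)(D_{n_2}-{\rm E}[Q(v_\alpha)])}{D_{n_2}},
\end{equation*}
and using Theorem \ref{consistencyofFbar} so that $D_{n_2}\to{\rm E}[Q(v_\alpha)]$ in probability, the problem reduces to controlling the asymptotics of $N_{n_2}-{\rm E}[Q(y,v_\alpha)]$ (and the analogous quantity for the denominator, which is handled identically).

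The second step is to split this centered numerator via the bridge ${\rm E}[Q(y,\tilde v_\alpha)|\tilde v_\alpha]$ exactly as was done between the displays labeled ``error I'' and ``error II.'' Lemma \ref{AsymAB4}, with the random argument $c(\tilde v_\alpha)$, provides the asymptotic normality of error~I at rate $n_2^{-1/2}$ conditional on $\tilde v_\alpha$, while Lemma \ref{RCWB4} bounds error~II (the bias from using $\tilde v_\alpha$ in place of $v_\alpha$) by $O(n_1^{-1})$ uniformly over $y\in\mathcal Y$. In the regime $\sqrt{n_2}/n_1\to c\ne 0$, both contributions are of order $n^{-1/2}$, and one concludes $\tilde F_{n_2}(y,\tilde v_\alpha)-F_{Y|X}(y|v_\alpha)=O_{\rm Pr}(n^{-1/2})$ uniformly on $\mathcal Y$. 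In the regime $\sqrt{n_2}/n_1\to 0$, the $n_1^{-1}$ bias is negligible against $n_2^{-1/2}$, so the asymptotic normality of error~I survives intact, with the continuous mapping theorem and continuity of $\sigma(y,x)$ in $(x,y)$ allowing the replacement of $\tilde\sigma({\rm CoVaR}_{\alpha,\beta},\tilde v_\alpha)$ by $\sigma({\rm CoVaR}_{\alpha,\beta},v_\alpha)$.

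The third step is the inversion, following the same template as Equations \eqref{eq16}--\eqref{citeeq15}. Fix $\tilde\varepsilon>0$ and bound ${\rm Pr}\{|\tilde Y^{\rm IS}-{\rm CoVaR}_{\alpha,\beta}|\ge\tilde\varepsilon\}$ by the probability that $\tilde F_{n_2}(\cdot,\tilde v_\alpha)$ fails to straddle $\beta$ at the two endpoints ${\rm CoVaR}_{\alpha,\beta}\pm\tilde\varepsilon$; under Assumption \ref{assu:dist}, $F_{Y|X}(y|v_\alpha)$ has a positive derivative $f_{Y|X}({\rm CoVaR}_{\alpha,\beta}|v_\alpha)$ at ${\rm CoVaR}_{\alpha,\beta}$, which supplies the gap needed to apply the bound from step two. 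Taking $\tilde\varepsilon$ of order $n^{-1/2}$ in the first regime yields the claimed $O_{\rm Pr}(n^{-1/2})$ rate, and taking $\tilde\varepsilon=t/\sqrt{n_2}$ in the second regime together with Taylor expansion of $F_{Y|X}(\cdot|v_\alpha)$ around ${\rm CoVaR}_{\alpha,\beta}$ converts the asymptotic normality on the distribution side into the stated asymptotic normality of $\sqrt{n_2}(\tilde Y^{\rm IS}-{\rm CoVaR}_{\alpha,\beta})$, the extra factor $f_{Y|X}({\rm CoVaR}_{\alpha,\beta}|v_\alpha)^{-1}$ arising from the derivative.

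The main obstacle is the entanglement between the randomness of $\tilde v_\alpha$ (driven by $n_1$) and that of the $n_2$ IS-inspired observations, which are independent only \emph{conditional} on $\tilde v_\alpha$; this is why the Berry--Ess\'een bound of Lemma \ref{AsymAB4} was formulated with a random centering $c(\tilde v_\alpha)$, and why careful use of the moment condition $\sup_{n_1}{\rm E}[|\tilde v_\alpha-g^\ast|^{-3/2}]<\infty$ is required to keep third-moment quantities uniformly bounded. A secondary technical point is verifying that the centering involving the random $\tilde v_\alpha$ can be interchanged with ${\rm CoVaR}_{\alpha,\beta}$-centering via the continuity assumption on $\sigma(y,x)$, which is exactly what closes the gap between the conditional Berry--Ess\'een estimate and the unconditional limit distribution.
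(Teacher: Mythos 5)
Your proposal is correct and follows essentially the same route as the paper's proof: reduce the ratio $\tilde F_{n_2}$ to its numerator using the consistency of the denominator, split the numerator into errors I and II handled by Lemmas \ref{AsymAB4} and \ref{RCWB4} (with the random centering $c(\tilde v_\alpha)$ absorbing the dependence on $\tilde v_\alpha$), and invert by evaluating the estimated conditional distribution at points shifted by $O(n_2^{-1/2})$ from ${\rm CoVaR}_{\alpha,\beta}$, with the two regimes $\sqrt{n_2}/n_1\to c\neq 0$ and $\sqrt{n_2}/n_1\to 0$ treated exactly as in the paper. The only cosmetic difference is that you linearize the ratio additively while the paper factors out the denominator multiplicatively and applies Slutsky's lemma; both lead to the same conclusion.
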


Theorem \ref{CLTBEE4} shows that the IS-inspired estimator $\tilde{Y}^{\rm IS}$ is asymptotic normally distributed and the optimal rate of convergence is $n^{-1/2}$, and the optimal rate is achieved when we use the sample allocation rule $\sqrt{n_2}/{n_1}\to c \neq 0$. This result shows that, by using the IS-inspired estimation approach, we are able to improve the rate of convergence from $n^{-1/3}$ of the batching estimator $\hat{Y}^{\rm BE}$ to the canonical rate of $n^{-1/2}$, significantly improving the large-sample efficiency of the CoVaR estimation. 

The asymptotic normal distribution established in Theorem \ref{CLTBEE4}, under the condition $\sqrt{n_2}/n_1 \to 0$ as $n\to\infty$, is useful in developing a confidence interval of the IS-inspired estimator $\tilde{Y}^{\rm IS}$. Notice that the condition implies that the we may ignore the variation of $\tilde{v}_\alpha$ and treat it as $v_\alpha$. Then, by \cite{Nakayama2014}, we can use the sectioning approach to build a confidence interval. The approach divides the $n_2$ second-stage observations into $b\ge 2$ batches and each batch has $m=n_2/b$ observations, and applies the IS-inspired estimator on each batch with the same first-stage estimated $\tilde{v}_\alpha$, denoted by $\tilde{Y}^{\rm IS}_j,j=1,\ldots,b$. Let
\[
S^2={1\over b-1}\sum_{j=1}^b \left(\tilde{Y}^{\rm IS}_j-\tilde{Y}^{\rm IS} \right)^2.
\]
Then, an approximate $100(1-\gamma)\%$ ($0<\gamma<1$) confidence interval of ${\rm CoVaR}_{\alpha,\beta}$ is
\[
\left(\tilde{Y}^{\rm IS}-t_{b-1,1-\gamma/2}{S\over\sqrt{b}},\ \tilde{Y}^{\rm IS}+t_{b-1,1-\gamma/2}{S\over\sqrt{b}}\right),
\]
where $t_{b-1,1-\gamma/2}$ is the $(1-\gamma/2)$-quantile of the t distribution with $b-1$ degrees of freedom. According to \cite{Nakayama2014}, $b$ is recommended to be chosen from $10\le b\le 30$.

\section{Numerical Study}\label{sec5}

In this section, we study the performances of the batching estimator (BE) and the IS-inspired estimator (ISE) through four examples based on simulated datasets. In the first two examples, we consider two portfolios whose losses have a linear and a nonlinear relation, respectively. We use these two examples to compare the BE and the quantile-regression estimator (QRE) proposed by \cite{CoVaR2016}. In the last two examples, we consider a large portfolio problem with normal and heavy-tailed risk factors respectively. To compare the performance of  BE and ISE, we use the estimated bias and root mean-squared error (RMSE) to compare the point estimators, and use the observed coverage probability and width to compare the confidence intervals. We also study the empirical rates of convergence and compare them to the theoretical results developed in the paper.  All experiments are coded in Python and conducted on a computer with two Intel Xeon Gold 6248R CPUs (each with 24 cores) and 256GB RAM.

\subsection{Linear Portfolio}\label{experiment1}

Suppose there are two portfolios whose losses are denoted as $X$ and $Y$, which are both normally distributed with means $\mu_x$ and $\mu_y$, variances $\sigma_x^2$ and $\sigma_y^2$ and their correlation is $\rho$. Notice that we may write 
\[
Y\ =\ \mu_y+\sigma_y\left(\rho\, {X-\mu_x\over\sigma_x} + \sqrt{1-\rho^2} Z\right),
\]
where $Z$ is a standard normal random variable that is independent of $X$. Therefore, there is a linear relation between $X$ and $Y$. It is easy to derive that 
\begin{equation}\label{eqn:lin_covar}
	{\rm CoVaR}_{\alpha,\beta} \ =\  \mu_y+\sigma_y\left[\rho\Phi^{-1}(\alpha)+\sqrt{1-\rho^2}\Phi^{-1}(\beta)\right], 
\end{equation}
where $\Phi^{-1}(\cdot)$ is the inverse distribution function of the standard normal distribution. Following \cite{CoVaR2016} we set $\mu_x=-0.005$, $\mu_y=-0.00286$, $\sigma_x=0.08$ and $\sigma_y=0.06111$. We calculate the correlation coefficients of the 30 stocks of Dow Jones Industrial Average (DJIA), from 1/1/2020 to 31/6/2021 (to include the large volatility in the US stock market in early 2020), and the results shows that the  correlation coefficients are in the range [-0.81,0.98]. So we take extreme value of $\rho$ into consideration and conduct the experiments with $\rho=-0.95, -0.5, 0.5, 0.95$. Furthermore, we set $\alpha=\beta=0.95$. Notice that when $X$ and $Y$ have a linear relation, the QRE works well. We use this example to understand the performance of the BE when it is compared to the QRE.

The BEs and their confidence intervals are calculated using the procedures developed in Section \ref{sec3}. The QREs and their confidence intervals are calculated using the {\tt quantreg} package in Python. To verify the consistency and asymptotic normality, we increase the sample size from $4.0\times 10^4$ to $3.6 \times 10^5$ and construct the $95\%$ confidence intervals. The biases, the RMSEs and the coverage probabilities are reported in Table \ref{table1}, and all the results are based on 100 independent replications. From these results, we see that the BE is a valid estimator of CoVaR. As the sample size increases, its bias and RMSE both reduce and the coverage probability (CP) of its confidence interval becomes close to the nominal level of 0.95. The results also show that the QRE has better performance than the BE in this example, which is expected because the QRE is developed under the assumption of linear portfolios \citep{CoVaR2016}.

\begin{table}[t]
	\footnotesize
	\centering
	\captionsetup{labelfont=bf}
	\caption{The Comparison between the BE and the QRE in Linear Portfolios}
	\label{table1}
	\begin{tabular}{ccccrrrcrrc}
		\toprule
		\multicolumn{5}{c}{Setting} &   \multicolumn{3}{c}{BE} &  \multicolumn{3}{c}{QRE} \\ \cmidrule{1-5}  \cmidrule(l){6-8} \cmidrule(l){9-11} 
		$\rho$ & $k$ & $m$ & $n$ & ${\rm CoVaR}$ &  \multicolumn{1}{c}{Bias} & \multicolumn{1}{c}{RMSE} & CP & \multicolumn{1}{c}{Bias} & \multicolumn{1}{c}{RMSE} & CP \\ \midrule
		& 200   & 200   & $4.0\times10 ^4$ &          &    $4.23\times 10 ^{-3}$ & $5.18\times 10 ^{-3}$ & 0.68  &  $1.10\times 10 ^{-4}$ & $7.96\times 10 ^{-4}$ & 0.66  \\
		-0.95      & 400   & 400   & $1.6\times 10 ^5$  &-0.0670& $2.02\times 10 ^{-3}$ & $2.98\times 10 ^{-3}$ & 0.84     & $-4.72\times 10 ^{-5}$ & $3.45\times 10 ^{-4}$ & 0.73  \\
		& 600   & 600   & $3.6\times 10 ^5$ & & $1.21\times 10 ^{-3}$ & $2.13\times 10 ^{-3}$ & 0.84  &  $-3.97\times 10 ^{-5}$ & $2.37\times 10 ^{-4}$ & 0.71  \\
		\midrule
		& 200   & 200   & $4.0\times 10 ^4$ &        & $1.28\times 10 ^{-3}$ & $7.67\times 10 ^{-3}$ & 0.98  & $-5.19\times 10 ^{-5}$ & $1.30\times 10 ^{-3}$ & 0.89  \\
		-0.50     & 400   & 400   & $1.6\times 10 ^5$       & 0.0339  & $-4.65\times 10 ^{-4}$ & $5.85\times 10 ^{-3}$ & 0.97  & $-3.13\times 10 ^{-5}$ & $5.51\times 10 ^{-4}$ & 0.92  \\
		& 600   & 600   & $3.6\times 10 ^5$ &  & $-3.02\times 10 ^{-4}$ & $4.52\times 10 ^{-3}$ & 0.98      & $-3.08\times 10 ^{-5}$ & $3.54\times 10 ^{-4}$ & 0.96  \\
		\midrule
		& 200   & 200   & $4.0\times 10 ^4$  &       & $-6.52\times 10 ^{-4}$ & $7.31\times 10 ^{-3}$ & 1.00  & $6.15\times 10 ^{-5}$ & $1.24\times 10 ^{-3}$ & 0.91  \\
		0.50      & 400   & 400   & $1.6\times 10 ^5$        & 0.1344  & $-1.15\times 10 ^{-3}$ & $5.92\times 10 ^{-3}$ & 0.99  & $1.17\times 10 ^{-5}$ & $5.73\times 10 ^{-4}$ & 0.97  \\
		& 600   & 600   & $3.6\times 10 ^5$ & & $-1.17\times 10 ^{-3}$ & $4.67\times 10 ^{-3}$ & 0.95  &  $-1.07\times 10 ^{-7}$ & $3.69\times 10 ^{-4}$ & 0.94  \\
		\midrule
		& 200   & 200   & $4.0\times 10 ^4$&       & $1.27\times 10 ^{-3}$ & $2.85\times 10 ^{-3}$ & 0.96 & $1.08\times 10 ^{-4}$ & $7.30\times 10 ^{-4}$ & 0.66  \\
		0.95    & 400   & 400   & $1.6\times 10 ^5$       & 0.1240 & $2.66\times 10 ^{-4}$ & $2.34\times 10 ^{-3}$ & 0.95  & $3.32\times 10 ^{-5}$ & $3.68\times 10 ^{-4}$ & 0.65  \\
		& 600   & 600   & $3.6\times 10 ^5$ &  & $2.33\times 10 ^{-4}$ & $1.76\times 10 ^{-3}$ & 0.94  & $1.90\times 10 ^{-5}$ & $2.52\times 10 ^{-4}$ & 0.66\\
		\bottomrule
	\end{tabular}
	\label{tab:addlabel}
\end{table}

\subsection{Nonlinear Portfolio}\label{experiment2}

When there are derivatives in the portfolios, their losses in general have a nonlinear relationship. Thus, we consider a simple nonlinear example where there is a delta-gamma approximation in the loss of the second portfolio. Suppose there are two portfolios whose losses are denoted as $X$ and $Y$, where $X$ is normally distributed with mean $\mu_x$ and variance $\sigma_x^2$ and $Y$ is consist of a quadratic form of $X$ and a mean-zero normal random variable $\xi$ with variance $\sigma_y^2$ and ${\rm corr}(X, \xi)=\rho$, i.e.,
\begin{eqnarray*}
	Y &=& \delta X + \frac{1}{2}\gamma X^2 + \xi,\\
	\xi &=& \sigma_y\left(\rho \frac{X-\mu_x}{\sigma_x}+\sqrt{1-\rho^2}Z\right),
\end{eqnarray*}
where $Z$ is a standard normal random variable independent of $X$.
Therefore, there is a nonlinear relation between $X$ and $Y$. We can furthermore derive that 
\begin{eqnarray}
	{\rm CoVaR_{\alpha, \beta}}&=&\delta{\rm VaR_{\alpha}}(X) + \frac{1}{2}\gamma{\rm VaR_{\alpha}}(X)^2+\sigma_y\left[\rho\Phi^{-1}(\alpha)+\sqrt{1-\rho^2}\Phi^{-1}(\beta)\right], \label{eqn:nonlin_covar}\\
	{\rm VaR_{\alpha}}(X) &=& \mu_x +\Phi^{-1}(\alpha)\sigma_x,\nonumber
\end{eqnarray}
where $\Phi^{-1}(\cdot)$ is the inverse distribution function of standard normal distribution. We set $\mu_x = -0.03$, $\sigma_x = 0.2$, $\sigma_y = 0.3$, $\delta = 0.2$ and $\gamma = 0.8$ respectively. Again, we estimate ${\rm CoVaR}_{0.95,0.95}$ based on 100 replications with $\rho = -0.95, -0.5, 0.5, 0.95$.
Except for the loss model, the procedures to conduct the numerical experiments in this subsection are same as that of Section \ref{experiment1}.  The observed biases, the RMSEs and the coverage probabilities of $95\%$ confidence intervals are reported in Table \ref{table2}. 

Compared with Table \ref{table1} of the linear portfolios, we see that the BE continues to deliver good performance. However, the QRE, which assumes linear portfolios, has a significant bias that cannot be reduced by increasing the sample size, and the bias causes the confidence intervals to have a nearly zero coverage probability, thus missing the true value entirely. This example demonstrates the advantage of the BE with respect to the QRE. In practice, because systemic risks are in general measured at institution level, portfolios typically include complicated derivative products and thus display nonlinear relationships. In such situations, the BE avoids the model error and may deliver better performance than the QRE. 

\begin{table}[t]
	\footnotesize
	\centering
	\captionsetup{labelfont=bf}
	\caption{The Comparison between the BE and the QRE in Noninear Portfolios}
	\label{table2}
	\begin{tabular}{ccccrrrcrrc}
		\toprule
		\multicolumn{5}{c}{Setting} &   \multicolumn{3}{c}{BE} &  \multicolumn{3}{c}{QRE} \\ \cmidrule{1-5}  \cmidrule(l){6-8} \cmidrule(l){9-11} 
		$\rho$ & $k$ & $m$ & $n$ & ${\rm CoVaR}$ &  \multicolumn{1}{c}{Bias} & \multicolumn{1}{c}{RMSE} & CP & \multicolumn{1}{c}{Bias} & \multicolumn{1}{c}{RMSE} & CP \\ \midrule
		& 200 & 200 & $4.0\times 10^4$ &   & $1.22\times10 ^{-2}$ & $1.86\times10 ^{-2}$ & 0.91 & $-2.28\times10 ^{-2}$ & $2.30\times10 ^{-2}$ & 0.00 \\
		-0.95 & 400 & 400 & $1.6\times10^5$ & -0.2192  & $5.02\times10 ^{-3}$ & $1.13\times10 ^{-2}$ & 0.92 & $-2.22\times10 ^{-2}$ & $2.23\times10 ^{-2}$ & 0.00 \\
		& 600 & 600 & $3.6\times10^5$ && $2.71\times10 ^{-3}$ & $8.99\times10 ^{-3}$ & 0.95 & $-2.22\times10 ^{-2}$ & $2.23\times10 ^{-2}$ & 0.00 \\ \midrule
		& 200 & 200 & $4.0\times10^4$ & & $3.22\times10 ^{-3}$ & $3.71\times10 ^{-2}$ & 0.99 & $-2.56\times10 ^{-2}$ & $2.63\times10 ^{-2}$ & 0.00 \\
		-0.50 & 400 & 400 & $1.6\times10^5$ & 0.2762 & $-3.98\times10 ^{-3}$ & $2.89\times10 ^{-2}$ & 1.00 & $-2.55\times10 ^{-2}$ & $2.57\times10 ^{-2}$ & 0.00 \\
		& 600 & 600 & $3.6\times10^5$ & & $-2.79\times10 ^{-3}$ & $2.22\times10 ^{-2}$ & 0.98 & $-2.56\times10 ^{-2}$ & $2.56\times10 ^{-2}$ & 0.00 \\ \midrule
		& 200 & 200 & $4.0\times10^4$ &  & $-2.81\times10 ^{-3}$ & $3.58\times10 ^{-2}$ & 0.99 & $-2.51\times10 ^{-2}$ & $2.58\times10 ^{-2}$ & 0.01 \\
		0.50 & 400 & 400 & $1.6\times10^5$ & 0.7696 & $-6.47\times10 ^{-3}$ & $2.94\times10 ^{-2}$ & 0.97& $-2.53\times10 ^{-2}$ & $2.55\times10 ^{-2}$ & 0.00 \\
		& 600 & 600 & $3.6\times10^5$ & & $-5.94\times10 ^{-3}$ & $2.32\times10 ^{-2}$ & 0.94   & $-2.54\times10 ^{-2}$ & $2.55\times10 ^{-2}$ & 0.00 \\ \midrule
		& 200 & 200 & $4.0\times10^4$ & & $1.45\times10 ^{-2}$ & $1.99\times10 ^{-2}$ & 0.83 & $-2.17\times10 ^{-2}$ & $2.21\times10 ^{-2}$ & 0.00 \\
		0.95 & 400 & 400 & $1.6\times10^5$ & 0.7184 & $5.30\times10 ^{-3}$ & $1.26\times10 ^{-2}$ & 0.89 & $-2.18\times10 ^{-2}$ & $2.19\times10 ^{-2}$ & 0.00 \\
		& 600 & 600 & $3.6\times10^5$ & & $4.17\times10 ^{-3}$ & $1.01\times10 ^{-2}$ & 0.90 & $-2.20\times10 ^{-2}$ & $2.20\times10 ^{-2}$ & 0.00 \\ 
		\bottomrule
	\end{tabular}
	\label{tab:addlabel}
\end{table}

\begin{figure}[ht]
	\captionsetup{labelfont=bf}
	\caption{CoVaR with respect to $\rho$ in both Linear and Nonlinear Portfolios}
	\centering
	\includegraphics[scale = 0.45]{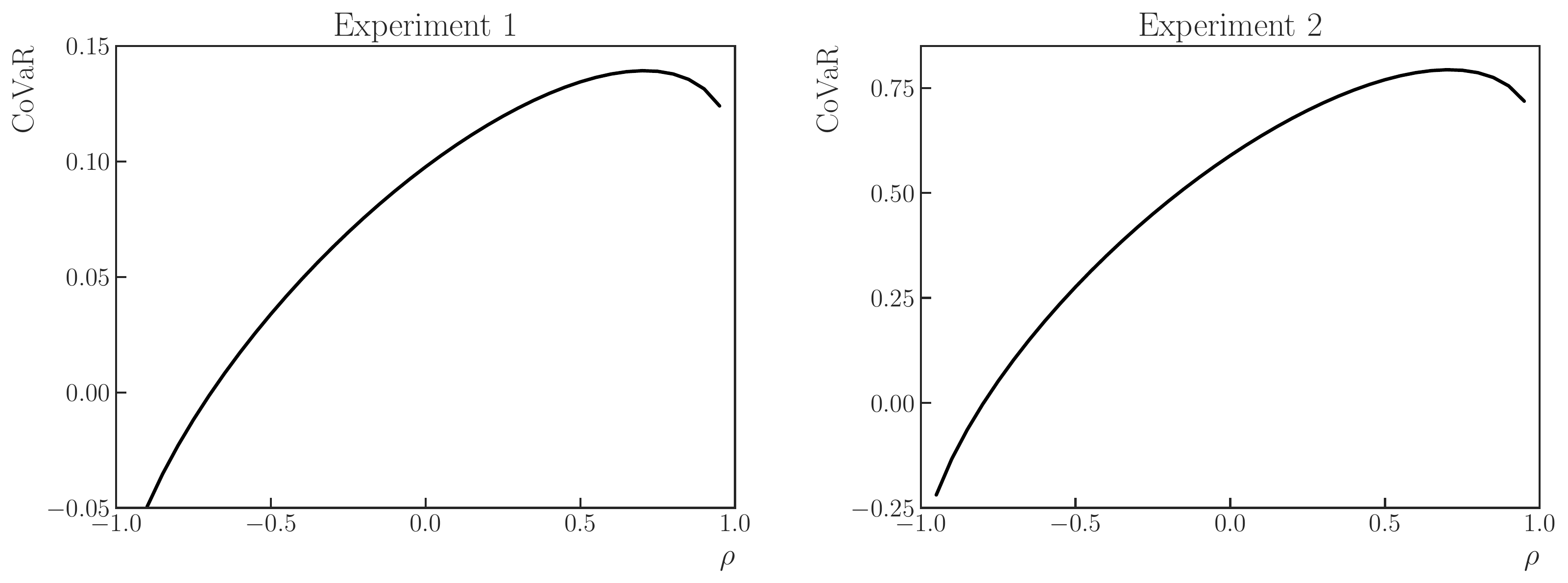}
	\captionsetup{labelfont=bf}
	\label{fig:cor}
\end{figure}
Another interesting observation from both the linear and nonlinear portfolios is that the CoVaR is not a monotone increasing function of the coefficient of correlation $\rho$, which may appear a bit counter-intuitive. In Figure \ref{fig:cor} we plot the CoVaR curves with respect to $\rho$ for both portfolios. Notice that the both curves have the same shape. This is because in both Equations \eqref{eqn:lin_covar} and \eqref{eqn:nonlin_covar}, the shapes of CoVaR with respect to $\rho$ are determined by $\rho\Phi^{-1}(\alpha)+\sqrt{1-\rho^2}\Phi^{-1}(\beta)$, which has the same shape and reaches the optimal at \[\rho^*=\sqrt{\Phi^{-1}(\alpha)^2\over \Phi^{-1}(\alpha)^2+\Phi^{-1}(\beta)^2}.\] Because $\alpha=\beta$ in both examples, therefore, we observe that $\rho^*=1/\sqrt{2}\approx 0.707$.

\subsection{Large Portfolio}\label{experiment3}

In this example we consider two large portfolios whose losses are represented by the delta-gamma approximations in Section \ref{subsec:IS:app}. The portfolios have 50 correlated underlying risk factors, denoted by ${\rm \Delta S}$, and we assume that ${\rm \Delta S}$ follows a multivariate normal distribution with mean vector $\bm 0$ and covariance matrix $\Sigma$. We further simplify them into quadratic forms of independent standard normal random variables according to the approach presented in Section \ref{subsec:IS:app}, and present the detailed parameters of the simplified models in the Appendix \ref{ex3_parameters}. We rearranged the order of $Z_1,\ldots, Z_d$ so that the $d$-th dimension (i.e., the dimension that we conduct importance sampling) has the highest value of $\gamma_{11},\ldots, \gamma_{1d}$, i.e., $\gamma_{1d}=\max\{\gamma_{11},\ldots, \gamma_{1d}\}$. The reason to choose such a dimension has been explained in Remark \ref{remark1}. We set $\alpha=\beta=0.95$. To understand the performance of different estimators, we need the true value of the CoVaR. We compute it by using a very large sample size $n=1.0\times 10^8$ via the IS-inspired estimation and its value is 0.6167.

Before comparing the performance of the BE and ISE, we first highlight the sample-allocation rules. Notice that Sections \ref{subsec:BE:nor} and \ref{subsec:IS:nor} suggest to set $k=n^{2/3-\delta}$ and $m=n/k$ with $\delta\in(0,2/3)$ for the BEs and to set $n_1=n^{1-\delta}$ and $n_2=n-n_1$ with $\delta\in(0,1/2)$ for the ISEs. We use a total sample size of $100,000$ to compare the different sample-allocation rules for these two estimators and report the results in Tables \ref{sample_allocation_BE} and \ref{sample_allocation_IS}, where the nominal coverage probabilities of the confidence intervals are $95\%$. In the experiments of this section, we set the batches of sectioning as $b=10$ to construct the confidence interval of the ISE.

\begin{table}[th]
	\footnotesize
	\captionsetup{labelfont=bf}
	\centering
	\caption{The Performance of the BE under Different Sample-Allocation Rules with $n=100,000$}
	\label{sample_allocation_BE}
	\begin{tabular}{ccrrrcr}
		\toprule
		\multicolumn{1}{c}{$k$} & \multicolumn{1}{c}{$m$} & \multicolumn{1}{c}{bias} & \multicolumn{1}{c}{SD} & \multicolumn{1}{c}{RMSE} & \multicolumn{1}{c}{CP} & \multicolumn{1}{c}{width} \\
		\midrule
		2,000 & 50 & \num{2.64E-02}&\num{9.81E-03}&\num{	2.81E-02}&0.30&\num{4.55E-02}\\
		1,250 & 80 & \num{-2.98E-04}&\num{1.24E-02}&\num{1.24E-02}&0.93&\num{5.58E-02}\\
		1,000 & 100 &\num{-5.87E-04}&\num{1.27E-02}&\num{1.27E-02}&1.00&\num{6.26E-02}\\
		800 & 125 &\num{-4.20E-03}&\num{1.64E-02}&\num{1.70E-02	}&0.99&\num{7.37E-02} \\
		500 & 200 & \num{-2.70E-03}&\num{2.10E-02}&\num{2.11E-02}&0.95&\num{9.34E-02} \\
		400 & 250 &\num{3.56E-03}&\num{2.09E-02}&\num{2.12E-02}&0.96&\num{1.04E-01} \\
		250 & 400 & \num{-6.27E-04}&\num{3.13E-02}&\num{3.13E-02}& 0.96 &\num{	1.44E-01} \\
		200 & 500 &\num{-2.70E-03}&\num{3.31E-02}&\num{	3.32E-02}&0.97&\num{1.82E-01} \\
		125 & 800 & \num{-4.06E-03}&\num{3.83E-02}&\num{3.85E-02}&0.99 &\num{2.28E-01} \\
		100 & 1,000 & \num{-1.54E-02}&\num{	4.48E-02}&\num{4.74E-02}&1.00&\num{ 	3.18E-01} \\
		80 & 1,250 & \num{-1.80E-02}&\num{4.92E-02}&\num{5.24E-02}&0.97&\num{2.86E-01} \\
		50 & 2,000 & \num{-1.11E-02}&\num{7.01E-02}&\num{7.10E-02}&0.92&\num{ 	2.78E-01}\\
		\bottomrule
	\end{tabular}
\end{table}

\begin{table}[th]
	\footnotesize
	\captionsetup{labelfont=bf}
	\centering
	\caption{The Performance of the ISE under Different Sample-Allocation Rules with $n=100,000$}
	\label{sample_allocation_IS}
	\begin{tabular}{ccrrrcr}
		\toprule
		\multicolumn{1}{c}{$n_1$} & \multicolumn{1}{c}{$n_2$} & \multicolumn{1}{c}{bias} & \multicolumn{1}{c}{SD} & \multicolumn{1}{c}{RMSE} & \multicolumn{1}{c}{CP} & \multicolumn{1}{c}{width} \\
		\midrule
		10,000  & 90,000  & \num{3.10E-04}&\num{5.93E-03}&\num{	5.94E-03}&0.49 &\num{8.55E-03}\\
		20,000  & 80,000  & \num{1.47E-04}&\num{3.72E-03}&\num{3.72E-03}&0.74 &\num{8.73E-03}\\
		30,000  & 70,000  &\num{3.19E-04}&\num{4.00E-03}&\num{4.01E-03}&0.79 &\num{9.15E-03}\\
		40,000  & 60,000  &\num{-1.12E-05}&\num{3.39E-03}&\num{3.39E-03	}&0.80 &\num{9.66E-03} \\
		50,000  & 50,000  & \num{1.64E-04}&\num{3.64E-03}&\num{3.65E-03}&0.88 &\num{1.16E-02} \\
		60,000  & 40,000  &\num{2.05E-04}&\num{3.41E-03}&\num{3.42E-03}&0.89 &\num{1.27E-02} \\
		70,000  & 30,000 & \num{9.35E-05}&\num{3.72E-03}&\num{3.72E-03}& 0.96  &\num{1.49E-02} \\
		80,000  & 20,000  &\num{6.68E-05}&\num{4.08E-03}&\num{	4.08E-03}&0.96&\num{1.76E-02} \\
		90,000  & 10,000  & \num{-1.17E-03}&\num{5.54E-03}&\num{5.66E-03}&0.93  &\num{2.43E-02} \\
		\bottomrule
	\end{tabular}
\end{table}

As shown in Table \ref{sample_allocation_BE} for the BE, we observe that the standard deviation increases as $m$ grows larger and becomes the dominant part of the RMSE, which is consistent with the convergence analysis in Section \ref{subsec:BE:nor}. In this example, we observe that the BE performs well when $k$ and $m$ are approximately  $n^{2/3}/2$ and $n/k$. In the rest of this section we use this rule to compute the BE and compare it to the ISE.
	
As shown in Table \ref{sample_allocation_IS} for the ISE, we observe that the bias is always much smaller than the standard deviation under different allocations. If one wants to minimize the RMSE, it is suggested to choose $n_1$ and $n_2$ that are close to each other. However, if one wants to deliver accurate confidence intervals, it is suggested to set $n_1$ large and $n_2$ small. In the rest of this section we use $n_1=n_2=n/2$. Furthermore, we find that we do not lose any samples in the IS-inspired estimation scheme, because the way we choose the $d$-th dimension appears to ensure that $\tilde{v}_\alpha> g_k^\ast$ for all the samples in this example (see Section \ref{subsubsec:IS_est} for the details).

Following the sample-allocation rules above, we increase the sample size from $1\times 10 ^3$ to $1\times 10 ^6$ and report the performance of the estimators and confidence intervals and the total running time (seconds) in Tables \ref{BE_large} and  \ref{IS_large} for the BEs and ISEs, respectively. First, we plot the $\log{\rm RMSE}$ with respect to $\log n$ for both the BE and ISE in Figure \ref{convergencerate} to understand the rates of convergences of the two estimators. The plots show that the empirical rates of convergence of the BE and ISE are $n^{-1/3}$ and $n^{-1/2}$, respectively, which are both consistent with the theoretical rates of convergence developed in Sections \ref{subsec:BE:nor} and \ref{subsec:IS:nor} and demonstrate the advantages of the ISEs for large portfolio problems.

Second, the results of Tables \ref{BE_large} and  \ref{IS_large} shows that the biases and RMSEs decrease drastically for the ISE. With 50,000 samples, the RMSE of the ISE is below $1\%$ of actual CoVaR. Comparing to the BE, the ISE has a smaller bias and RMSE and its confidence interval is also narrower with a fixed sample size $n$. When taking the computation efficiency into consideration, we can see that with similar time budget, the ISE also outperforms the BE. These results suggest that the ISE is a better estimator than the BE for large portfolio problems.

\begin{figure}[t]
	\captionsetup{labelfont=bf}
	\caption{The Rate of Convergence of the BE and the ISE}
	\centering
	\includegraphics[scale = 0.5]{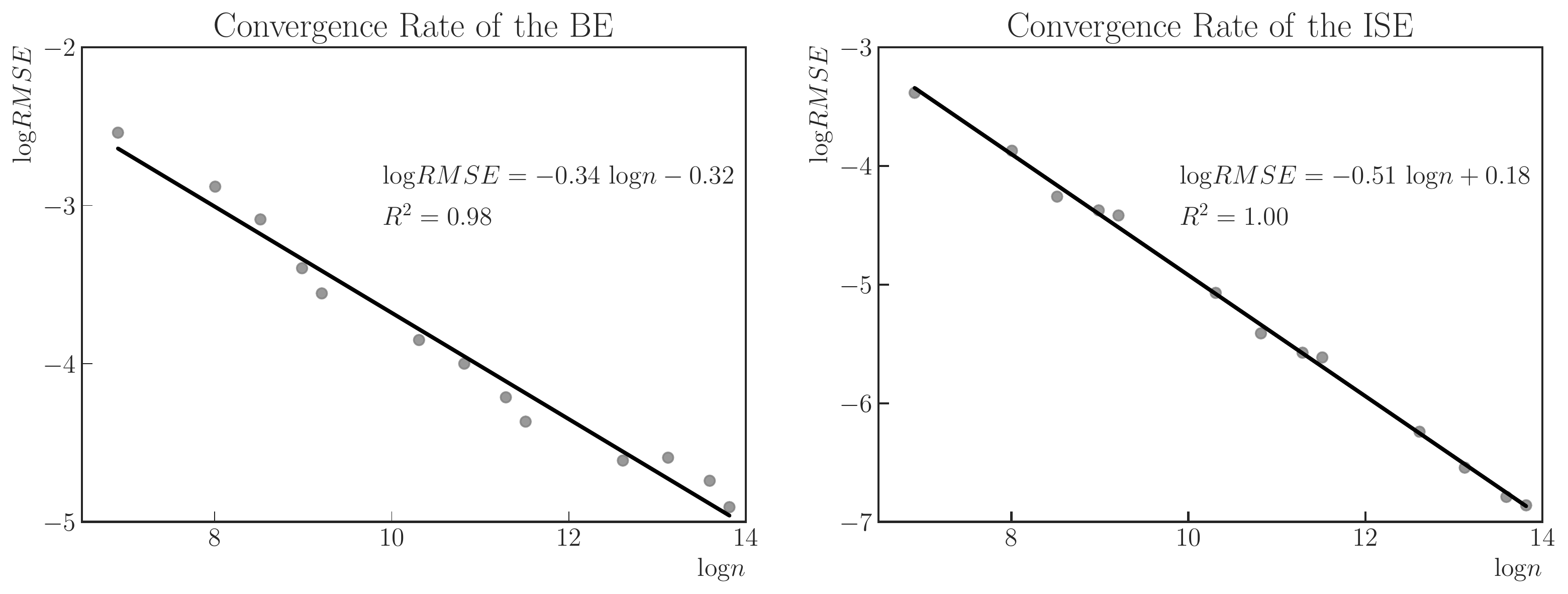}
	\centering
	\captionsetup{justification=centering}
	\captionsetup{labelfont=bf}
	\label{convergencerate}
\end{figure}

\begin{table}[t]
	\footnotesize
	\captionsetup{labelfont=bf}
	\centering
	\caption{ The Performance of the BE in Large Portfolios}
	\label{BE_large}
	
	\begin{tabular}{cccrrrcrc}
		\toprule
		\multicolumn{1}{c}{$n$} & \multicolumn{1}{c}{$k$} & \multicolumn{1}{c}{$m$} & \multicolumn{1}{c}{bias} & \multicolumn{1}{c}{SD} & \multicolumn{1}{c}{RMSE} & \multicolumn{1}{c}{CP} & \multicolumn{1}{c}{width} & \multicolumn{1}{c}{time} \\
		\midrule
		\num{1.E+03}&	50 &	20 	&\num{-4.94E-03} &\num{	7.87E-02} &\num{7.89E-02} & 0.89 &\num{2.78E-01} &\num{1.47 } \\
		\num{3.E+03} &	100 &	30 &\num{2.99E-02} &\num{4.74E-02} &\num{5.61E-02} & 0.90 &\num{3.18E-01} &\num{1.75 } \\
		\num{5.E+03} & 	125 &	40 &\num{-2.44E-03} &\num{4.55E-02} &\num{4.56E-02} &0.99 &\num{2.51E-01} & 	1.87  \\
		\num{8.E+03} & 200 & 40 & \num{-3.96E-03} &\num{3.32E-02} &\num{3.35E-02} &0.99&\num{	1.86E-01} & 1.83 \\
		\num{1.E+04} & 	250 & 40 &\num{3.73E-03} &\num{	2.83E-02} &\num{2.86E-02}&0.99&\num{1.58E-01}&	1.83  \\
		\num{3.E+04} & 	500 &	60 &\num{	1.26E-04}&\num{2.13E-02}&\num{	2.13E-02}& 0.97 &\num{9.12E-02} & 2.34  \\
		\num{5.E+04}&	625 &	80 &\num{8.50E-04}&\num{	1.83E-02}&\num{1.83E-02}&	0.97 &\num{8.14E-02} &	2.03  \\
		\num{8.E+04} &	1000 &	80&\num{ -1.20E-03}&\num{1.48E-02}&\num{	1.48E-02}&	0.97&\num{6.37E-02} &2.37  \\
		\num{1.E+05} & 1000 &100 &\num{	-5.87E-04}&\num{1.27E-02}&\num{	1.27E-02}&	1.00 &\num{6.26E-02}&3.20  \\
		\num{3.E+05} &	2400 &	125 &\num{	3.64E-03}&\num{	9.24E-03}&\num{	9.93E-03}&	0.96 &\num{3.92E-02}&	6.60  \\
		\num{5.E+05} &	2500 &	200 &\num{-2.48E-03}&\num{	9.80E-03} &\num{	1.01E-02}&0.96 &\num{	4.00E-02}&	9.19  \\
		\num{8.E+05} &	4000 &	200 &\num{-1.85E-03}&\num{8.54E-03}&\num{8.74E-03}&	0.92 &\num{3.10E-02}&13.37  \\
		\num{1.E+06} &5000 &200 &\num{-7.16E-04}&\num{	7.36E-03}&\num{7.40E-03}&0.95&\num{2.75E-02}&15.31  \\   
		\bottomrule
	\end{tabular}
\end{table}

\begin{table}[t]
	\footnotesize
	\captionsetup{labelfont=bf}
	\centering
	\caption{ The Performance of the ISE in Large Portfolios}
	\label{IS_large}
	\begin{tabular}{cccrrrcrc}
		\toprule
		\multicolumn{1}{c}{$n$} & \multicolumn{1}{c}{$n_1$} & \multicolumn{1}{c}{$n_2$} & \multicolumn{1}{c}{bias} & \multicolumn{1}{c}{SD} & \multicolumn{1}{c}{RMSE} & \multicolumn{1}{c}{CP} & \multicolumn{1}{c}{width} & \multicolumn{1}{c}{time} \\
		\midrule
		\num{1.E+03}& \num{5.0E+02}& \num{5.0E+02}& \num{-4.48E-03}& \num{3.36E-02}& \num{3.39E-02}& 0.86   & \num{1.07E-01} &	1.37   \\
		\num{3.E+03}& \num{1.5E+03}& \num{1.5E+03}& \num{-1.33E-04}& \num{2.08E-02}& \num{2.08E-02}& 0.85    &\num{6.59E-02}& 1.74   \\
		\num{5.E+03} & \num{2.5E+03} & \num{2.5E+03} & \num{1.59E-03} & \num{1.40E-02} & \num{1.41E-02} & 0.89   &\num{5.15E-02}&	1.80   \\
		\num{8.E+03} & \num{4.0E+03} & \num{4.0E+03} & \num{-1.75E-03} & \num{1.25E-02} & \num{1.26E-02}&	0.87    & \num{3.95E-02}&	2.53  \\
		\num{1.E+04} & \num{5.0E+03} & \num{5.0E+03} & \num{-9.79E-04} & \num{1.20E-02} & \num{1.21E-02} & 0.82   & \num{3.58E-02} & 	1.81   \\
		\num{3.E+04} & \num{1.5E+04} & \num{1.5E+04} & \num{-7.77E-05} & \num{6.28E-03} & \num{6.28E-03} & 0.91   	& \num{2.07E-02}&2.61    \\
		\num{5.E+04} & \num{2.5E+04} & \num{2.5E+04} & \num{3.06E-05} & \num{4.47E-03} & \num{4.47E-03} & 0.92  & \num{1.61E-02}&3.01   \\
		\num{8.E+04} & \num{4.0E+04} & \num{4.0E+04} & \num{-5.42E-05} & \num{3.79E-03} & \num{3.79E-03}&0.90    & \num{1.24E-02}& 4.07  \\
		\num{1.E+05} & \num{5.0E+04} & \num{5.0E+04} & \num{1.64E-04} & \num{3.64E-03} & \num{3.65E-03} & 0.88   & \num{1.16E-02}&	5.16    \\
		\num{3.E+05} & \num{1.5E+05} & \num{1.5E+05} & \num{-2.33E-04} & \num{1.94E-03} & \num{1.95E-03}& 0.87     & \num{6.39E-03}&	14.75  \\
		\num{5.E+05} & \num{2.5E+05} & \num{2.5E+05} & \num{-4.58E-05} & \num{1.44E-03} & \num{1.44E-03} & 0.91   & \num{4.91E-03}&20.75   \\
		\num{8.E+05} & \num{4.0E+05} & \num{4.0E+05} & \num{-2.85E-05} & \num{1.13E-03} & \num{1.13E-03} &0.91   & \num{4.17E-03}&26.83  \\
		\num{1.E+06} & \num{5.0E+05} & \num{5.0E+05} & \num{-1.21E-05} & \num{1.05E-03} & \num{1.05E-03}&0.90  & \num{3.58E-03} & 27.45   \\   
		\bottomrule
	\end{tabular}
\end{table}

\subsection{Large Portfolio with Heavy-Tailed Risk Factors}

In Section \ref{sec4} we assume the risk factors follow a multivariate normal distribution ${\rm \Delta S} \sim {\bf N}({\bf 0},\Sigma)$ and develop an IS-inspired CoVaR estimator. The approach may be extended to certain types of heavy-tailed distributions to account for the heavy-tailed behaviors that are often observed in financial data (see \citealt{bradley2003financial} and \citealt{DuffiePan}). In this subsection we assume that the risk factors follow a multivariate t distribution ${\rm \Delta S} \sim {\bf t}_\nu({\bf 0},\Sigma)$ where $\nu$ is the degrees of freedom. Then, if $C^\top C=\Sigma$, by the definition of the multivariate t distribution \citep{glasserman2004monte}, we have $C^\top {\rm Z}/W \sim {\bf t}_\nu({\bf 0},\Sigma)$, where ${\rm Z}=(Z_1,\ldots,Z_d)^\top$ is a vector of independent standard normal random variables and $W=\sqrt{\chi_\nu^2/\nu}$ where $\chi_\nu^2$ is a chi-squared random variable with $\nu$ degrees of freedom. Notice that we may view $W$ as a common shock to all risk factors and it not only introduces heavy-tailed behaviors to individual risk factors but also leads to extremal dependence among all risk factors, which is commonly observed in financial crisis \citep{bassamboo2008portfolio}. Then, following the derivations in Section \ref{subsec:IS:app}, we may replace the delta-gamma approximations of the losses, i.e., Equations (\ref{eqn:is:X}) and (\ref{eqn:is:Y}), by
\begin{eqnarray*}
	X &=& c_1 + \sum_{j=1}^d \left(\delta_{1j} {Z_j\over W} + \gamma_{1j} {Z_j^2\over W^2}\right), \\
	Y &=& c_2 + \sum_{j=1}^d \left(\delta_{2j} {Z_j\over W} + \gamma_{2j} {Z_j^2\over W^2}\right),
\end{eqnarray*}
where $c_i$, $\delta_{ij}$ and $\gamma_{ij}$, $i=1,2$ and $j=1,\ldots,d$, are same as the ones in Equations (\ref{eqn:is:X}) and (\ref{eqn:is:Y}). We can then apply the same IS-inspired approach to estimate CoVaR by conditioning on $W$ in addition to $Z_1,\ldots,Z_{d-1}$.

We use the same example settings as Section \ref{experiment3} except adding a common shock $W=\sqrt{\chi_\nu^2/\nu}$ with different degrees of freedom $\nu$ to understand how the BE and ISE work under heavy-tailed risk factors and how the tail heaviness, measured by the degrees of freedom $\nu$, affects the VaR and CoVaR. We first repeat the experiments reported in Tables \ref{BE_large} and \ref{IS_large} with $\nu=6$, which is a commonly observed degrees of freedom in financial data (\citealt{vovsvrda2004application}, \citealt{wilhelmsson2006garch}), and report them in Tables \ref{BE_t} and \ref{IS_t}. Similarly, the actual value of CoVaR, 1.4421, is estimated by ISE with  sample size $n=1\times 10^8$. We also plot of the empirical rates of convergence of both estimators in Figure \ref{fig:convergencerate_t}. From the tables and the figure, we see that both estimators have similar performances under the t-distributed risk factors. 

\begin{figure}[ht]
	\captionsetup{labelfont=bf}
	\caption{The Rate of Convergence of the BE and the ISE with Heavy-Tailed Risk Factors}
	\centering
	\includegraphics[scale = 0.5]{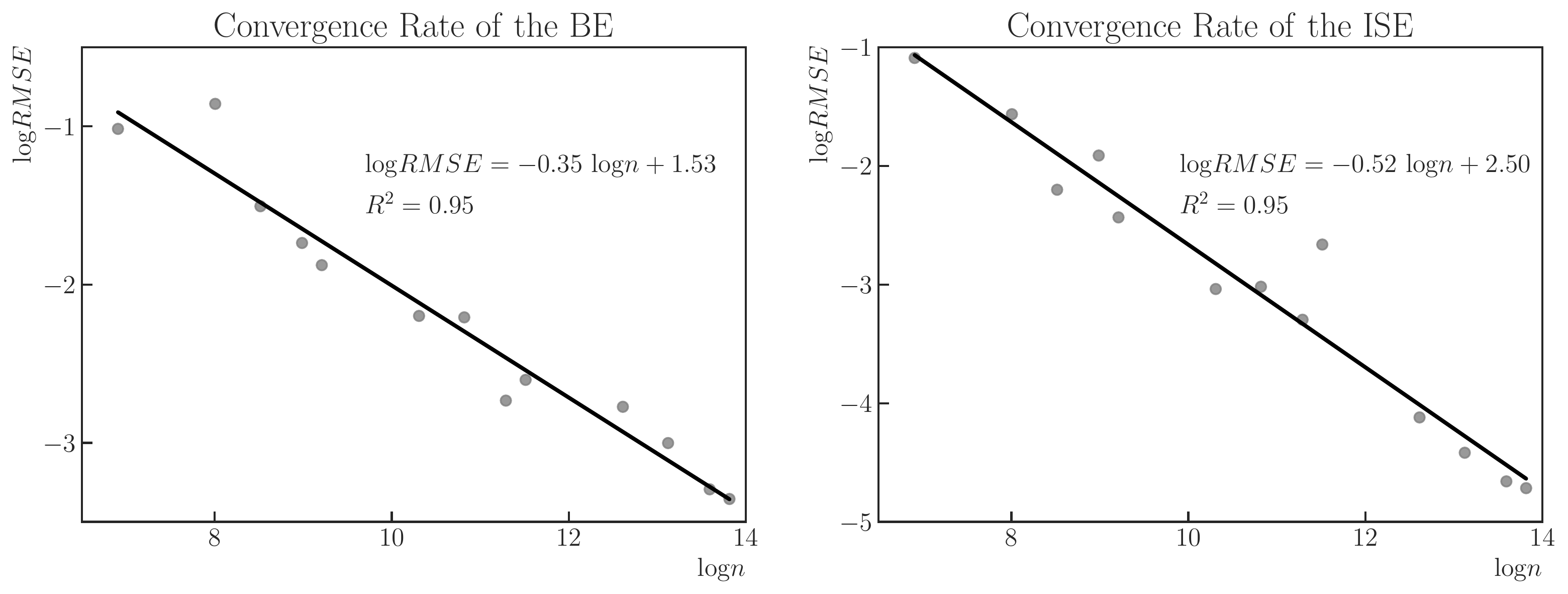}
	\captionsetup{labelfont=bf}\vspace{-11pt}
	\label{fig:convergencerate_t}
\end{figure}

\begin{table}[th]
	\footnotesize
	\captionsetup{labelfont=bf}
	\centering
	\caption{ The Performance of the BE in Large Portfolios with Heavy-Tailed Risk Factors}
	\label{BE_t}
	\begin{tabular}{cccrrrclc}
		\toprule
		\multicolumn{1}{c}{$n$} & \multicolumn{1}{c}{$k$} & \multicolumn{1}{c}{$m$} & \multicolumn{1}{c}{bias} & \multicolumn{1}{c}{SD} & \multicolumn{1}{c}{RMSE} & \multicolumn{1}{c}{CP} & \multicolumn{1}{c}{width} & \multicolumn{1}{c}{time} \\
		\midrule
		\num{1.E+03}&	50 &	20 	&\num{4.33E-02} &\num{	3.59E-01} &\num{3.62E-01} & 0.95  &$1.49\times10^0$ &\num{1.44 } \\
		\num{3.E+03} &	100 &	30 &\num{2.62E-01} &\num{3.33E-01} &\num{4.24E-01} & 0.87  &$2.46\times10^0$ &\num{1.94 } \\
		\num{5.E+03} & 	125 &	40 &\num{2.29E-02} &\num{2.20E-01} &\num{2.22E-01} &0.99 &$1.37\times10^0$ & 1.74  \\
		\num{8.E+03} & 200 & 40 & \num{2.05E-02} &\num{1.74E-01} &\num{1.76E-01} &0.96&\num{	9.97E-01} & 1.74  \\
		\num{1.E+04} & 	250 & 40 &\num{3.13E-02} &\num{	1.50E-01} &\num{1.53E-01}&0.98&\num{7.60E-01}&	1.80  \\
		\num{3.E+04} & 	500 &	60 &\num{1.67E-02}&\num{1.10E-01}&\num{1.11E-01}& 0.97 &\num{4.91E-01} & 	1.92  \\
		\num{5.E+04}&	625 &	80 &\num{3.96E-03}&\num{	1.10E-01}&\num{1.10E-01}&	0.90 &\num{4.16E-01} &	2.34  \\
		\num{8.E+04} &	1000 &	80&\num{ 1.13E-02}&\num{6.40E-02}&\num{6.50E-02}&	0.98&\num{3.28E-01} &	3.22  \\
		\num{1.E+05} & 1000 &100 &\num{	5.00E-03}&\num{7.39E-02}&\num{7.41E-02}&0.98 &\num{3.25E-01}&3.99  \\
		\num{3.E+05} &	2400 &	125 &\num{2.88E-02}&\num{	5.55E-02}&\num{6.25E-02}&0.87 &\num{1.96E-01}&	9.29  \\
		\num{5.E+05} &	2500 &	200 &\num{7.84E-04}&\num{	4.97E-02} &\num{4.97E-02}&0.96 &\num{1.97E-01}&	11.28 \\
		\num{8.E+05} &	4000 &	200 &\num{6.59E-03}&\num{3.65E-02}&\num{3.71E-02}&	0.93 &\num{1.56E-01}&	13.05  \\
		\num{1.E+06} &5000 &200 &\num{5.65E-03}&\num{	3.44E-02}&\num{3.49E-02}&0.96&\num{1.37E-01}&	17.63  \\   
		\bottomrule
	\end{tabular}
\end{table}

\begin{table}[th]
	\footnotesize
	\captionsetup{labelfont=bf}
	\centering
	\caption{ The Performance of the ISE in Large Portfolios with Heavy-Tailed Risk Factors}
	\label{IS_t}
	\begin{tabular}{cccrrrcrc}
		\toprule
		\multicolumn{1}{c}{$n$} & \multicolumn{1}{c}{$n_1$} & \multicolumn{1}{c}{$n_2$} & \multicolumn{1}{c}{bias} & \multicolumn{1}{c}{SD} & \multicolumn{1}{c}{RMSE} & \multicolumn{1}{c}{CP} & \multicolumn{1}{c}{width} & \multicolumn{1}{c}{time} \\
		\midrule
		\num{1.E+03}& \num{5.0E+02}& \num{5.0E+02}& \num{-3.45E-02}& \num{3.34E-01}& \num{3.36E-01}& 0.77  & \num{8.24E-01} &	1.37  \\
		\num{3.E+03}& \num{1.5E+03}& \num{1.5E+03}& \num{1.38E-03}& \num{2.09E-01}& \num{2.09E-01}& 0.87   &\num{5.77E-01}& 1.70  \\
		\num{5.E+03} & \num{2.5E+03} & \num{2.5E+03} & \num{-1.49E-02} & \num{1.10E-01} & \num{1.11E-01} & 0.89   &\num{4.73E-01}&	1.73  \\
		\num{8.E+03} & \num{4.0E+03} & \num{4.0E+03} & \num{2.78E-02} & \num{1.45E-01} & \num{1.48E-01}&	0.94   & \num{4.35E-01}&	4.44 \\
		\num{1.E+04} & \num{5.0E+03} & \num{5.0E+03} & \num{1.45E-03} & \num{8.77E-02} & \num{8.77E-02} & 0.95   & \num{3.93E-01} & 	2.25  \\
		\num{3.E+04} & \num{1.5E+04} & \num{1.5E+04} & \num{-3.37E-03} & \num{4.78E-02} & \num{4.79E-02} & 0.88  	& \num{1.89E-01}&2.46   \\
		\num{5.E+04} & \num{2.5E+04} & \num{2.5E+04} & \num{3.17E-03} & \num{4.88E-02} & \num{4.89E-02} & 0.96   & \num{1.89E-01}&	2.89  \\
		\num{8.E+04} & \num{4.0E+04} & \num{4.0E+04} & \num{-2.43E-03} & \num{3.69E-02} & \num{3.70E-02}&0.95   & \num{1.46E-01}& 4.26   \\
		\num{1.E+05} & \num{5.0E+04} & \num{5.0E+04} & \num{6.37E-03} & \num{6.95E-02} & \num{6.98E-02} & 0.91   & \num{1.34E-01}&	6.05   \\
		\num{3.E+05} & \num{1.5E+05} & \num{1.5E+05} & \num{-5.15E-04} & \num{1.62E-02} & \num{1.63E-02}& 0.97    & \num{7.47E-02}&	14.95   \\
		\num{5.E+05} & \num{2.5E+05} & \num{2.5E+05} & \num{-1.05E-03} & \num{1.20E-02} & \num{1.21E-02} & 0.96  & \num{5.04E-02}&19.75  \\
		\num{8.E+05} & \num{4.0E+05} & \num{4.0E+05} & \num{-2.51E-03} & \num{9.14E-03} & \num{9.48E-03} &0.93  & \num{4.02E-02}&22.70  \\
		\num{1.E+06} & \num{5.0E+05} & \num{5.0E+05} & \num{-4.88E-04} & \num{8.94E-03} & \num{8.95E-03}&0.96 & \num{3.86E-02} & 32.00  \\   
		\bottomrule
	\end{tabular}
\end{table}

We then use the same example with different degrees of freedom, i.e., $\nu=3,4,\ldots,10$ and $\nu=\infty$ (which is the normal distribution), to understand how the tail heaviness affect the VaR of $Y$ and the CoVaR of $Y$ (given $X$ is at risk). Notice that smaller degrees of freedom represent heavier tails. We plot the results in Figure \ref{fig:tail}. There are several findings from the figure. First, as expected, both the VaR and the CoVaR increase as the tail heaviness increases. Second, the CoVaR increases at a faster rate than the VaR as the tail heaviness increases. Third, the difference between the CoVaR and the VaR may be quite significant under heavily tailed distributions, for instance, ${\rm CoVaR}_{0.95,0.95}$ is more than $80\%$ higher than ${\rm VaR}_{0.95}$ when $\nu=6$, indicating that the potential loss of the portfolio $Y$ is significantly higher when the portfolio $X$ is at risk and highlighting the importance of systemic risk.

\begin{figure}[ht]
	\captionsetup{labelfont=bf}
	\caption{VaR and CoVaR with respect to Tail Heaviness}
	\centering
	\includegraphics[scale = 0.6]{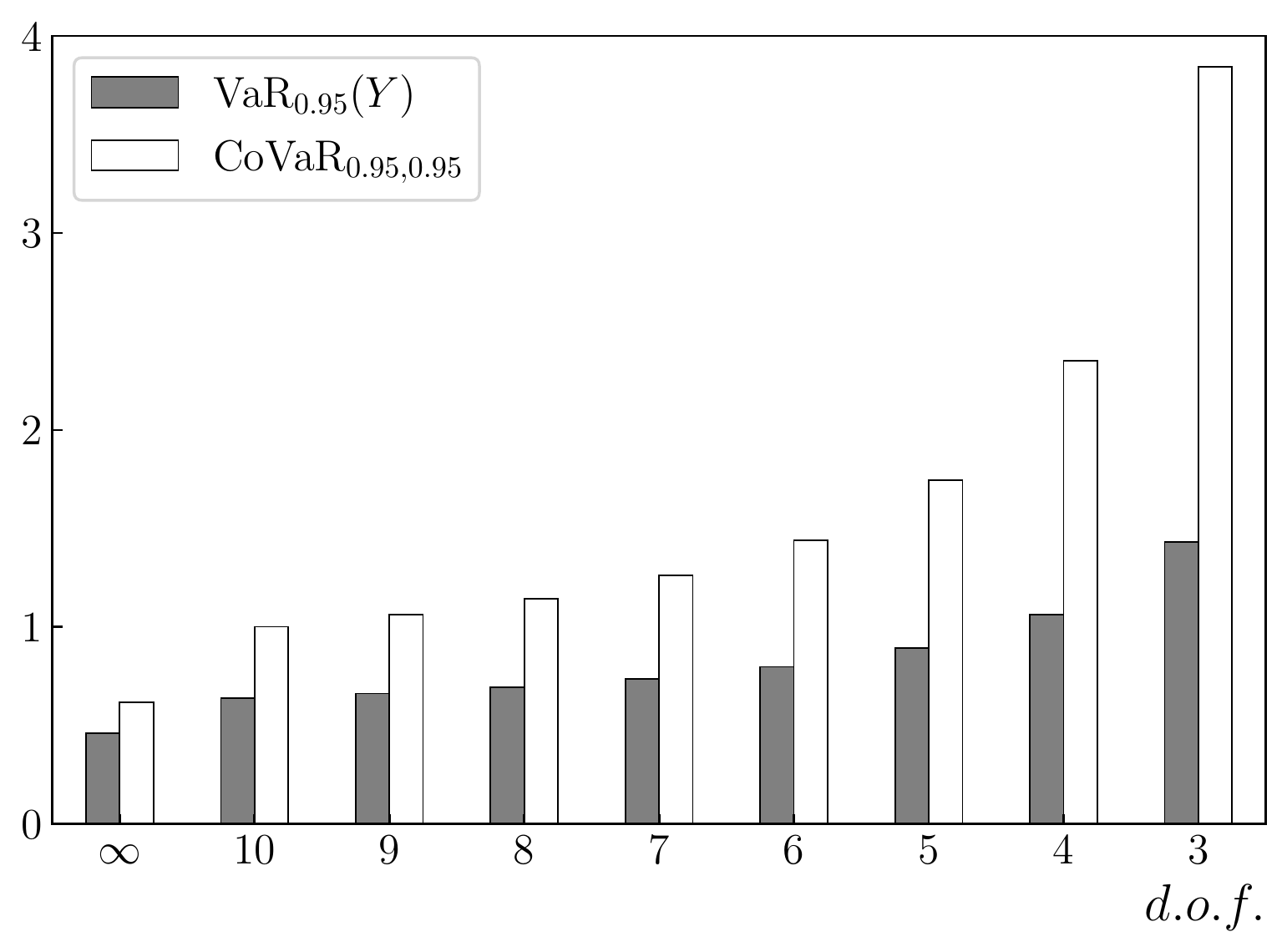}
	\captionsetup{labelfont=bf}\vspace{-11pt}
	\label{fig:tail}
\end{figure}
\vspace{-11pt}

\section{Conclusions}\label{sec6}

In this paper, we study the estimation of CoVaR based on Monte-Carlo simulation. We first develop a batching estimator and show that it is consistent and asymptotically normally distributed, and its optimal rate of convergence is $n^{-1/3}$. The batching estimator takes advantage of the modeling flexibility and is capable of handling complicated portfolios. Moreover, we introduce an IS-inspired estimator to improve the rate of convergence for large portfolios under the delta-gamma approximations. We show that it is consistent and asymptotically normally distributed, and the optimal rate of convergence can be improved to be $n^{-1/2}$.  Numerical experiments support our theoretical findings and show that both estimators work well.

\appendix

\section{Proof of Theorem \ref{CLTBEE}}

\begin{proof}
Let $A=\sqrt{\beta(1-\beta)}/f_{Y|X} ({\rm CoVaR}_{\alpha,\beta}|{\rm VaR}_\alpha(X))$, $y_k= {\rm CoVaR}_{\alpha,\beta} + tAk^{-1/2}$, and 
$$
c_{m,k}(t) = \frac{\sqrt{k} }{\hat\sigma(y_k)}\left(\beta - {\rm E}[I\{ \hat{Y}\leq y_k \}] \right).
$$
Then, for any given $t\in\mathbb{R}$, we have 
\begin{eqnarray}
	\lefteqn{ {\rm Pr}\left\{ \frac{\sqrt{k}}{A} \left(\hat{Y}^{\rm BE}- {\rm CoVaR}_{\alpha,\beta}\right)\leq t \right\} }\nonumber\\
	&=& {\rm Pr}\left\{ \hat{F}_k (y_k ) \geq \beta \right\} 
	\ =\ {\rm Pr} \left\{ \frac{\sqrt{k}}{\hat\sigma(y_k)} \left(\hat{F}_k(y_k )- {\rm E}[I\{ \hat{Y}\leq y_k \}] \right) \geq c_{m,k}(t) \right\}. \label{eqn:Phi} 
\end{eqnarray}
Notice that, by Lemma \ref{AsymAB}, we have
\[
\left|{\rm Pr} \left\{ \frac{\sqrt{k}}{\hat\sigma(y_k)} \left(\hat{F}_k(y_k )- {\rm E}[I\{ \hat{Y}\leq y_k \}] \right) \geq c_{m,k}(t) \right\}- \Phi(-c_{m,k}(t))\right|\ \leq\ \frac{33}{4 \hat\sigma^3(y_k) k^{1/2}}.
\]
Furthermore, notice that $y_k\rightarrow {\rm CoVaR}_{\alpha,\beta}$ as $k \rightarrow\infty$, so when $k$ is large enough, we have $y_k\in\mathcal{Y}$. Then, by Lemma \ref{RCWB}, we have 
\[
\lim_{n\rightarrow\infty} \hat\sigma(y_k )
\ = \ \lim_{n\rightarrow\infty} \sqrt{  {\rm E}[F_{Y|X}(y_k| X_{(\ceil{\alpha m})})]- \left(  {\rm E}[F_{Y|X}(y_k | X_{(\ceil{\alpha m})})]\right)^2} 
\ = \ \sqrt{\beta (1-\beta)}.
\]
Then, by Equation (\ref{eqn:Phi}), we have
\begin{equation}\label{eqn:target}
	{\rm Pr}\left\{ \frac{\sqrt{k}}{A} \left(\hat{Y}^{\rm BE}- {\rm CoVaR}_{\alpha,\beta}\right)\leq t \right\}\ =\ O\left(k^{-1/2}\right)+\Phi(-c_{m,k}(t)).
\end{equation}

By the definition of $F_{Y|X}$ and Equation \eqref{eqn:WBC1}, we have 
\begin{eqnarray}
	c_{m,k}(t) &=& \frac{tA }{\hat \sigma(y_k)}\cdot\frac{\beta - {\rm E}[I\{ \hat{Y}\leq y_k \}] }{t A k^{-1/2}} \nonumber\\
	& = &  \frac{tA }{\hat \sigma(y_k)}\cdot \frac{F_{Y|X}({\rm CoVaR}_{\alpha,\beta}|{\rm VaR}_\alpha(X)) -  F_{Y|X}(y_k|{\rm VaR}_\alpha(X)) }{tAk^{-1/2}} \nonumber\\
	&& \ +\  \frac{tA }{\hat\sigma(y_k)}\cdot \frac{F_{Y|X}(y_k|{\rm VaR}_\alpha(X)) - {\rm E}[ F_{Y|X}(y_k | X_{(\ceil{\alpha m})}) ]}{tAk^{-1/2}}. \label{eqn:m&k}
\end{eqnarray}
Notice that
\[
\frac{F_{Y|X}({\rm CoVaR}_{\alpha,\beta}|{\rm VaR}_\alpha(X)) -  F_{Y|X}(y_k|{\rm VaR}_\alpha(X)) }{tAk^{-1/2}} \to -f_{Y|X} ({\rm CoVaR}_{\alpha,\beta}|{\rm VaR}_\alpha(X))
\]
as $n\to\infty$. Then, the first term of Equation \eqref{eqn:m&k} converges to $-t$ as $n\to\infty$. By Lemma \ref{RCWB}, the second term is $O(\sqrt{k}/m)$ as $n\rightarrow\infty$. Therefore, we have $c_{m,k}(t) = -t + o(1) + O(\sqrt{k}/m)$ as $n\to\infty$.

When $\sqrt{k}/m\rightarrow 0$ as $n\rightarrow\infty$, it is clear that $c_{m,k}(t)\to -t$ as $n\to\infty$. By Equation \eqref{eqn:target},
\begin{equation*}
	\lim_{n\to\infty} {\rm Pr}\left\{ \frac{\sqrt{k}}{A} \left(\hat{Y}^{\rm BE}- {\rm CoVaR}_{\alpha,\beta}\right)\leq t \right\}\ =\ \Phi(t)
\end{equation*}
for any $t\in\mathbb{R}$. Therefore, $\frac{\sqrt{k}}{A} \left(\hat{Y}^{\rm BE}- {\rm CoVaR}_{\alpha,\beta}\right)\Rightarrow N(0,1)$ as $n\to\infty$.

When $\sqrt{k}/m \to c$ as $n\to\infty$ for some constant $c\ne 0$, it is clear that $c_{m,k}(t)=-t+O(1)$. Therefore, there exists a constant $M>0$ such that $c_{m,k}(t)\in(-t-M,-t+M)$. Then, by Equation \eqref{eqn:target},
\begin{eqnarray*}
	\liminf_{n\to\infty} {\rm Pr}\left\{ \frac{\sqrt{k}}{A} \left(\hat{Y}^{\rm BE}- {\rm CoVaR}_{\alpha,\beta}\right)\leq t \right\} \ \geq\ \Phi(t-M),\\
	\limsup_{n\to\infty} {\rm Pr}\left\{ \frac{\sqrt{k}}{A} \left(\hat{Y}^{\rm BE}- {\rm CoVaR}_{\alpha,\beta}\right)\leq t \right\} \ \leq\ \Phi(t+M),
\end{eqnarray*}
and
\[
\limsup_{n\to\infty} {\rm Pr}\left\{ \left|\frac{\sqrt{k}}{A} \left(\hat{Y}^{\rm BE}- {\rm CoVaR}_{\alpha,\beta}\right)\right|\ge t \right\} \ \leq\ 2\Phi(-t+M)
\]
for any $t\in\mathbb{R}$. Therefore, for any $\varepsilon>0$, there exists $t\in\mathbb{R}$ such that $
{\rm Pr}\left\{ \left|\frac{\sqrt{k}}{A} \left(\hat{Y}^{\rm BE}- {\rm CoVaR}_{\alpha,\beta}\right)\right|\ge t \right\}\leq \varepsilon
$
for $n$ is large enough, i.e., $\hat{Y}^{\rm BE}- {\rm CoVaR}_{\alpha,\beta}=O_{\rm Pr}(k^{-1/2})= O_{\rm Pr}(n^{-1/3})$. 
\end{proof}

\section{Proof of Theorem \ref{thm:IS}}\label{appendix2}

\begin{proof}
As shown in Equations \eqref{quadraticeq} and \eqref{eqn:is:g}, we can divide $P_\varepsilon$ into the following three terms:
\begin{eqnarray}
P_\varepsilon
&=& \operatorname{Pr}\{|g(Z_d)-x|\leq \varepsilon|Z_1, \ldots, Z_{d-1}\} \nonumber\\
&=& \operatorname{Pr}\{|g(Z_d)-x|\leq \varepsilon|Z_1, \ldots, Z_{d-1}\}\cdot I\{x>g^\ast+\varepsilon\} \label{term_1}\\
&&+\ \operatorname{Pr}\{|g(Z_d)-x|\leq \varepsilon|Z_1, \ldots, Z_{d-1}\}\cdot I\{|x-g^\ast|\leq\varepsilon\} \label{term_2}\\
&&+\ \operatorname{Pr}\{|g(Z_d)-x|\leq \varepsilon|Z_1, \ldots, Z_{d-1}\}\cdot I\{x<g^\ast-\varepsilon\}. \label{term_3}
\end{eqnarray}
Then, we can analyze the three terms separately. 

For the first term \eqref{term_1}, we know $x>g^\ast+\varepsilon$ implies that the function $g(\cdot)$ intersects with $x\pm \varepsilon$. Let $P^\prime_\varepsilon= \operatorname{Pr}\left\{ \left|g(Z_d)-x\right| \leq \varepsilon \big| Z_1, \ldots, Z_{d-1}\right\} \cdot I\{x>g^\ast+\varepsilon\}$, then 
\begin{eqnarray}
P^\prime_\varepsilon&=& \operatorname{Pr}\left\{ -\varepsilon \leq \xi_1-x+\delta_{1d}Z_d+\gamma_{1d}Z_d^2  \leq \varepsilon \big| Z_1, \ldots, Z_{d-1}\right\} \cdot I\{x>g^\ast+\varepsilon\} \nonumber \\
&=& \operatorname{Pr}\left\{ Z_d\in [r_1-\zeta_1, r_1+\zeta_2]\cup[r_2-\zeta_3, r_2+\zeta_4]\big| Z_1, \ldots, Z_{d-1}\right\} \cdot I\{x>g^\ast+\varepsilon\} \nonumber\\
&=& \left[\int_{r_1-\zeta_1}^{r_1+\zeta_2}f_d(z){\rm d}z+\int_{r_2-\zeta_3}^{r_2+\zeta_4}f_d(z){\rm d}z \right]\cdot I\{x>g^\ast+\varepsilon\} \nonumber \\
&\leq& \left[(\zeta_1+\zeta_2)\cdot f_d(0)+(\zeta_3+\zeta_4)\cdot f_d(0)\right]\cdot I\{x>g^\ast+\varepsilon\} \label{eqn:normpdf} \\
&=& \frac{\sqrt{\Delta_1}-\sqrt{\Delta_2}}{\gamma_{1d}} \cdot f_d(0)\cdot I\{x>g^\ast+\varepsilon\}\nonumber\\
&=&\frac{8\varepsilon}{\sqrt{\Delta_1}+\sqrt{\Delta_2}} \cdot f_d(0)\cdot I\{x>g^\ast+\varepsilon\},\nonumber
\end{eqnarray}
where
\[
r_1-\zeta_1 = \frac{-\delta_{1d}-\sqrt{\Delta_1}}{2\gamma_{1d}},\quad r_1+\zeta_2 = \frac{-\delta_{1d}-\sqrt{\Delta_2}}{2\gamma_{1d}},\quad r_2-\zeta_3 = \frac{-\delta_{1d}+\sqrt{\Delta_2}}{2\gamma_{1d}},\quad r_2+\zeta_4 = \frac{-\delta_{1d}+\sqrt{\Delta_1}}{2\gamma_{1d}}
\]
and $\Delta_1 = \delta_{1d}^2 - 4\gamma_{1d}(\xi_1-x-\varepsilon)=4\gamma_{1d}(x+\varepsilon-g^\ast)\geq 0$, $\Delta_2 = \delta_{1d}^2 - 4\gamma_{1d}(\xi_1-x+\varepsilon)=4\gamma_{1d}(x-\varepsilon-g^\ast)\geq 0$. 
Notice that Equation (\ref{eqn:normpdf}) holds because $f_d(t)$ is the density of the standard normal distribution and it reaches the maximum at $f_d(0)=(\sqrt{2\pi})^{-1}$. Furthermore, notice that $\sqrt{\Delta_1}+\sqrt{\Delta_2}$ is nonincreasing in $\varepsilon$ and achieve its minimum at $x-g^\ast$ when $\varepsilon< x-g^\ast$. Then, 
$
P^\prime_\varepsilon \leq 2\varepsilon/\sqrt{\gamma_{1d}\pi(x-g^\ast)} \cdot I\{x>g^\ast+\varepsilon\}, 
$
when $\varepsilon< x-g^\ast$. 
Therefore, for any $\varepsilon\in(0,x-g^\ast)$, we have 
\begin{equation}
\frac{P^\prime_\varepsilon}{2\varepsilon} \ \le\  \frac{1}{\sqrt{\gamma_{1d}\pi |x-g^\ast|}}. \label{Pp}
\end{equation}
Notice that when $\varepsilon\geq x-g^\ast$, by the definition of $P^\prime_\varepsilon$, we have $P^\prime_\varepsilon=0$, so Equation \eqref{Pp} also holds. 
By the condition that $\operatorname{E}\Big[|g^\ast-x|^{-{1\over2}}\Big]<\infty$, we have $\operatorname{E}\Big[1/\sqrt{\gamma_{1d}\pi |x-g^\ast| }\Big] <\infty$. 
By Lemma \ref{q_derivation} and the dominated convergence theorem, we have
\begin{equation}\label{eqn:P1}
\lim_{\varepsilon\rightarrow0}{1\over 2\varepsilon}\operatorname{E}\left[P^\prime_\varepsilon\right]
\ =\ \operatorname{E}\left[\lim_{\varepsilon\rightarrow0}{1\over 2\varepsilon}P^\prime_\varepsilon \right]
\ =\ \operatorname{E}\left[q_1+q_2\right].
\end{equation}

For the second term \eqref{term_2}, we know  $|x-g^\ast|\leq\varepsilon$ implies that the function $g(\cdot)$ only intersects with $x+ \varepsilon$ but not $x-\varepsilon$. Let $P_\varepsilon^{\prime\prime}=  \operatorname{Pr}\{|g(Z_d)-x|\leq\varepsilon|Z_1,\ldots,Z_{d-1}\}\cdot I\{|x-g^\ast|\leq\varepsilon\}$. Similar to the first term, we have 
\begin{align*}
 P_\varepsilon^{\prime\prime} & \ \leq\  \operatorname{Pr}\{Z_d\in[r_1-\zeta_1,r_2+\zeta_4]|Z_1,\ldots,Z_{d-1}\}\cdot I\{|x-g^\ast|\leq\varepsilon\} \\
&\ \leq\  \frac{\sqrt{4\gamma_{1d}(x-g^\ast+\varepsilon)}f_d(0)}{\gamma_{1d}}\cdot I\{|x-g^\ast|\leq\varepsilon\}\\
&\ \leq\ \frac{2\sqrt{\varepsilon}}{\sqrt{\pi\gamma_{1d}}}\cdot I\{|x-g^\ast|\leq\varepsilon\}.
\end{align*}
Then, we have 
$$
\operatorname{E}[P_\varepsilon^{\prime\prime}] 
\ \leq\ \frac{2\sqrt{\varepsilon}}{\sqrt{\pi\gamma_{1d}}}\cdot \operatorname{Pr}\{|x-g^\ast|\leq\varepsilon\}\ \leq\  \frac{2\sqrt{\varepsilon}}{\sqrt{\pi\gamma_{1d}}}\cdot f^\ast_{g^\ast}\cdot 2\varepsilon,
$$
where $f^\ast_{g^\ast}$ is the maximum of the density of $g^\ast$. Hence, we have 
\begin{equation}
    0\ \leq\ \lim_{\varepsilon\rightarrow0}{1\over2\varepsilon}\operatorname{E}[P_\varepsilon^{\prime\prime}]
\ \leq\ \lim_{\varepsilon\rightarrow0}\frac{2\sqrt{\varepsilon}}{\sqrt{\pi\gamma_{1d}}}\cdot f^\ast_{g^\ast}\ =\ 0.
\label{eqn:P2}
\end{equation}

For the third term \eqref{term_3}, we know $x<g^\ast-\varepsilon$ implies that the function $g(\cdot)$ neither intersects with $x+ \varepsilon$ nor $x-\varepsilon$. Let $P_\varepsilon^{\prime\prime\prime}=\operatorname{Pr}\{|g(Z_d)-x|\leq\varepsilon|Z_1,\ldots,Z_{d-1}\}\cdot I\{x<g^\ast-\varepsilon\}$. Therefore, $P_\varepsilon^{\prime\prime\prime}=0$ and 
\begin{equation}
    \lim_{\varepsilon\rightarrow0}{1\over2\varepsilon}\operatorname{E}[P_\varepsilon^{\prime\prime\prime}] \ =\ 0. 
\label{eqn:P3}
\end{equation}

In summary, we have 
\begin{eqnarray}
\lim_{\varepsilon\rightarrow0}{1\over2\varepsilon}\operatorname{\tilde{E}}[P_\varepsilon]
&=& \lim_{\varepsilon\rightarrow0}{1\over2\varepsilon} \operatorname{\tilde{E}} [P_\varepsilon^{\prime} +P_\varepsilon^{\prime\prime} +P_\varepsilon^{\prime\prime\prime}] \nonumber \\
&=& \lim_{\varepsilon\rightarrow0}{1\over2\varepsilon} \operatorname{E} [P_\varepsilon^{\prime} +P_\varepsilon^{\prime\prime} +P_\varepsilon^{\prime\prime\prime}] \label{measure_restore} \\
&=& \operatorname{E}[q_1+q_2],\label{eqn:PP}
\end{eqnarray}
where Equation \eqref{measure_restore} holds because the randomness of the three terms $P_\varepsilon^\prime$, $P_\varepsilon^{\prime\prime}$ and $P_\varepsilon^{\prime\prime\prime}$ comes from $Z_1,\ldots,Z_{d-1}$ and does not depend on $Z_d$, so the importance sampling distribution is the original distribution, and Equation \eqref{eqn:PP} holds by Equations \eqref{eqn:P1}, \eqref{eqn:P2} and \eqref{eqn:P3}. 

Similarly, we can also prove that
\[
\lim_{\varepsilon\rightarrow0}{1\over2\varepsilon}\operatorname{\tilde{E}}[P_\varepsilon\cdot I\{Y\leq y\}]\ =\ \operatorname{E}[I\{Y_1\leq y\}q_1+I\{Y_2\leq y\}q_2].
\]
Then, the conclusion of the theorem follows directly from Equation (\ref{changemeasure}).
\end{proof}

\section{Proof of Lemma \ref{WBC14}}

\begin{proof}
By Assumption \ref{assu:dist1}, we know that ${\rm E}[Q(y, x)]$ is a continuous function of $x \in\mathbb{R}$. 
We also know that $\tilde{v}_\alpha
\rightarrow v_\alpha$, w.p.1, as $n_1 \rightarrow \infty$, see \cite{Serfling1980}. 
Then, by the continuous mapping theorem \citep{van2000}, we have 
$$
{\rm E}[Q(y, \tilde{v}_\alpha)|\tilde{v}_\alpha] \ \rightarrow\ {\rm E}[Q(y, v_\alpha)], 
$$
w.p.1, as $n_1 \rightarrow\infty$. 
Similarly, we can also prove that 
$
{\rm E}[Q(\tilde{v}_\alpha)|\tilde{v}_\alpha] \ \rightarrow\ {\rm E}[Q(v_\alpha)], 
$
w.p.1, as $n_1 \rightarrow\infty$. 
\end{proof}

\section{Proof of Lemma \ref{ABC4}}

\begin{proof}
As shown in Lemma \ref{q_derivation} and Appendix \ref{appendix2}, we know that 
$$
\lim _{\varepsilon \rightarrow 0} \left[ \frac{P_\varepsilon^\prime (\tilde{v}_\alpha)}{2\varepsilon} \right]^2 \ =\ Q^2(\tilde{v}_\alpha) \quad {\rm w.p.1}, 
$$
where $P_\varepsilon^\prime(x)$ is defined as the term \eqref{term_1} in Appendix \ref{appendix2}. By Equation \eqref{Pp}, we have 
$
\left[ P_\varepsilon^\prime (\tilde{v}_\alpha)/(2\varepsilon) \right]^2 \leq (\gamma_{1d}\pi |\tilde{v}_\alpha -g^\ast|)^{-1}. 
$
Then, we have 
$$
\sup_{n_1}{\rm E}[Q^2(\tilde{v}_\alpha)] \ \leq\ \sup_{n_1}{\rm E} \left[{1\over \gamma_{1d}\pi |\tilde{v}_\alpha -g^\ast|}\right] \ < \ \infty. 
$$

For any $\varepsilon>0$, we have 
\begin{eqnarray}
	\lefteqn{ {\rm Pr}\left\{ \left| \frac1{n_2} \sum_{k=1}^{n_2} Q_k(y, \tilde{v}_\alpha) - {\rm E}[Q(y, \tilde{v}_\alpha) |\tilde{v}_\alpha]  \right|\geq \varepsilon \right\} } \nonumber \\
	& = & {\rm E}\left[ {\rm Pr}\left\{ \left| \frac1{n_2} \sum_{k=1}^{n_2} Q_k(y, \tilde{v}_\alpha) - {\rm E}[Q(y, \tilde{v}_\alpha) |\tilde{v}_\alpha]  \right|\geq \varepsilon \Bigg| \tilde{v}_\alpha \right\} \right] \label{LTEinE} \\
	&\leq & {1\over\varepsilon^2} {\rm E}\left[ {\rm Var}\left( \frac1{n_2} \sum_{k=1}^{n_2} Q_k(y, \tilde{v}_\alpha) \Bigg| \tilde{v}_\alpha \right) \right] \label{Chebyshev} \\
	&=& {1\over n_2\varepsilon^2} {\rm E}\left[ {\rm Var}\left( Q(y, \tilde{v}_\alpha) | \tilde{v}_\alpha  \right) \right] \label{eqn:conditionalonhataIndep} \\
	&\leq & {1\over n_2\varepsilon^2}  {\rm Var}\left( Q(y, \tilde{v}_\alpha) \right) 
	\ \leq \ {1\over n_2\varepsilon^2} {\rm E}\left[Q^2(y, \tilde{v}_\alpha) \right]
	\ \leq \  {1\over n_2\varepsilon^2} \sup\limits_{n_1} {\rm E}\left[Q^2(\tilde{v}_\alpha)\right] \ \rightarrow \ 0,  \nonumber
\end{eqnarray}
as $n_2 \rightarrow\infty$. 
Notice that Equation \eqref{LTEinE} holds by the law of total expectation, Equation \eqref{Chebyshev} holds by the Chebyshev's inequality, and Equation \eqref{eqn:conditionalonhataIndep} holds because conditional on $\tilde{v}_\alpha$, $\{ Q_k(y, \tilde{v}_\alpha) \}_{k=1}^{n_2}$ is independent. Hence, we obtain that $\frac1{n_2} \sum\nolimits_{k=1}^{n_2} Q_k(y, \tilde{v}_\alpha) - {\rm E}[Q(y, \tilde{v}_\alpha)|\tilde{v}_\alpha]\rightarrow 0$ in probability, as $n_2 \rightarrow\infty$. Similarly, we can also prove that $\frac1{n_2} \sum\nolimits_{k=1}^{n_2} Q_k(\tilde{v}_\alpha)- {\rm E}[Q(\tilde{v}_\alpha)|\tilde{v}_\alpha]\rightarrow 0$ in probability, as $n_2 \rightarrow\infty$.  
\end{proof}

\section{Proof of Lemma \ref{RCWB4}}

\begin{proof}
From Assumption \ref{assu:dist1}, by Taylor's expansion, for $y\in \mathcal{Y}$, we have 
$$
	 {\rm E}[Q(y, \tilde{v}_\alpha)| \tilde{v}_\alpha]- {\rm E}[Q(y, v_\alpha)] \ = \ \frac{\partial}{\partial x}{\rm E}[Q(y, v_\alpha)] \cdot  ( \tilde{v}_\alpha -v_\alpha ) + \frac{\partial^2}{\partial x^2} {\rm E}[Q(y, x) | x=Z] \cdot ( \tilde{v}_\alpha -v_\alpha )^2, 
$$
for some random variable $Z$. 
By the law of total expectation, we have 
$
{\rm E}[Q(y, \tilde{v}_\alpha)] = {\rm E}\big[{\rm E}[Q(y, \tilde{v}_\alpha)| \tilde{v}_\alpha]\big]. 
$
Then, by the assumptions $|\frac{\partial}{\partial x} {\rm E}[{Q}(y, v_\alpha)]|\leq M$ for all $y\in\mathcal{Y}$ and $|\frac{\partial^2}{\partial x^2} {\rm E}[{Q}(y, x)]|\leq M$ for all $(x,y)\in \mathbb{R}\times\mathcal{Y}$, we have 
\begin{eqnarray}
	\lefteqn{  \Big| {\rm E}[Q(y, \tilde{v}_\alpha)] - {\rm E}[Q(y, v_\alpha)] \Big| }\nonumber\\
	& =& \Big|{\rm E}\Big[ {\rm E}[Q(y, \tilde{v}_\alpha)| \tilde{v}_\alpha]- {\rm E}[Q(y, v_\alpha)] \Big] \Big| \nonumber \\
	&\leq &  \left| \frac{\partial}{\partial x}{\rm E}[Q(y, v_\alpha) ] \right| \cdot \big| {\rm E}[\tilde{v}_\alpha - v_\alpha] \big| +  {\rm E}\left[\left| \frac{\partial^2}{\partial x^2} {\rm E}[Q(y, x) | x=Z] \right| \cdot (\tilde{v}_\alpha -v_\alpha )^2 \right] \nonumber \\
	&\leq &  M \cdot \big| {\rm E}[ \tilde{v}_\alpha-v_\alpha ]\big| + M\cdot {\rm E}\left[| \tilde{v}_\alpha-v_\alpha|^2 \right]. \nonumber 
\end{eqnarray}
Because both ${\rm E}[ \tilde{v}_\alpha-v_\alpha]$ and ${\rm E}\left[| \tilde{v}_\alpha- v_\alpha|^2 \right]$ are of $O(n_1^{-1})$ (see Lemma 2 of \cite{Hong09}), so we have $
\sup_{y\in\mathcal{Y}} \big| {\rm E}[Q(y, \tilde{v}_\alpha)]- {\rm E}[Q(y, v_\alpha)] \big|
$
is of $O(n_1^{-1})$ as well. 
\end{proof}

\section{Proof of Lemma \ref{AsymAB4}}

\begin{proof}
As shown in Lemma \ref{q_derivation} and Appendix \ref{appendix2}, we know that 
$$
\lim _{\varepsilon \rightarrow 0} \left[ \frac{P_\varepsilon^\prime (\tilde{v}_\alpha)}{2\varepsilon} \right]^3 \ =\ Q^3(\tilde{v}_\alpha) \quad {\rm w.p.1}, 
$$
where $P_\varepsilon^\prime(x)$ is defined as the term \eqref{term_1} in Appendix \ref{appendix2}. By Equation \eqref{Pp}, we have 
$
\left[ P_\varepsilon^\prime (\tilde{v}_\alpha)/(2\varepsilon) \right]^3 \leq (\gamma_{1d}\pi |\tilde{v}_\alpha -g^\ast|)^{-3/2}. 
$
Then, we have 
$$
\sup_{n_1}{\rm E}[Q^3(\tilde{v}_\alpha)] \ \leq\ \sup_{n_1}{\rm E} \left[{1\over (\gamma_{1d}\pi |\tilde{v}_\alpha -g^\ast|)^{\frac32} } \right] \ < \ \infty. 
$$

Notice that conditional on $\tilde{v}_\alpha$, we have $\{Q_k(y, \tilde{v}_\alpha)\}_{k=1}^{n_2}$ is independent. Then, 
\begin{eqnarray}
	\lefteqn{ \left| {\rm Pr}\left\{ \frac{\sqrt{n_2}}{\tilde\sigma(y,\tilde{v}_\alpha)} \Big\{ \frac1{n_2} \sum_{k=1}^{n_2} Q_k(y, \tilde{v}_\alpha) - {\rm E}\big[Q(y, \tilde{v}_\alpha) |\tilde{v}_\alpha \big] \Big\} \leq c(\tilde{v}_\alpha) \right\} - {\rm E}\big[\Phi\big(c(\tilde{v}_\alpha)\big)\big] \right| } \nonumber \\
	& = & \left| {\rm E}\left[ {\rm Pr}\left\{ \frac{\sqrt{n_2}}{\tilde\sigma(y,\tilde{v}_\alpha)} \Big\{ \frac1{n_2} \sum_{k=1}^{n_2} Q_k(y, \tilde{v}_\alpha) - {\rm E}\big[Q(y, \tilde{v}_\alpha) |\tilde{v}_\alpha \big] \Big\} \leq c(\tilde{v}_\alpha) \Big| \tilde{v}_\alpha \right\} - \Phi\big(c(\tilde{v}_\alpha)\big)\right] \right| \nonumber \\
	& \leq & {\rm E}\left[ \left| {\rm Pr}\left\{ \frac{\sqrt{n_2}}{\tilde\sigma(y,\tilde{v}_\alpha)} \Big\{ \frac1{n_2} \sum_{k=1}^{n_2} Q_k(y, \tilde{v}_\alpha) - {\rm E}\big[Q(y, \tilde{v}_\alpha) |\tilde{v}_\alpha \big] \Big\} \leq c(\tilde{v}_\alpha) \Big| \tilde{v}_\alpha \right\} - \Phi\big(c(\tilde{v}_\alpha)\big) \right| \right] \nonumber \\
	& \leq & {\rm E}\left[ \sup_{t\in\mathbb{R}} \left| {\rm Pr}\left\{ \frac{ \sum_{k=1}^{n_2} Q_k(y, \tilde{v}_\alpha) - {\rm E}[\sum_{k=1}^{n_2} Q_k(y, \tilde{v}_\alpha) |\tilde{v}_\alpha ] }{\sqrt{{\rm Var}(\sum_{k=1}^{n_2} Q_k(y, \tilde{v}_\alpha) |\tilde{v}_\alpha )}} \leq t \Big| \tilde{v}_\alpha \right\} -\Phi(t) \right| \right] \nonumber \\
	& \leq & \frac{33}{4} {\rm E}\left[  \frac{ |  Q(y, \tilde{v}_\alpha) - {\rm E}[ Q(y, \tilde{v}_\alpha) |\tilde{v}_\alpha ]|^3 }{\tilde\sigma^3(y,\tilde{v}_\alpha) n_2^{1/2}}  \right] \label{eqn:BEthm3} \\
	& \leq & O\left( \frac{\sup_{n_1} {\rm E}[Q^3(\tilde{v}_\alpha)] }{ n_2^{1/2}}\right) \ =\ O(n_2^{-1/2}). \nonumber
\end{eqnarray}
Notice that Equation \eqref{eqn:BEthm3} holds by the Berry-Ess\'een Theorem \citep{Serfling1980}. 
Therefore, we conclude the proof of the lemma. 
\end{proof}

\section{Proof of Theorem \ref{CLTBEE4}}

\begin{proof}
The idea of the proof is using the asymptotic distribution of the sample distribution $\tilde{F}_{n_2}(y,\tilde{v}_\alpha)$ to prove that of the estimator $\tilde{Y}^{\rm IS}$. We first notice that 
$$
\tilde{F}_{n_2}(y, \tilde{v}_\alpha)\ =\ \frac{\frac1{n_2} \sum_{k=1}^{n_2} Q_k(y, \tilde{v}_\alpha)}{\frac1{n_2} \sum_{k=1}^{n_2} Q_k(\tilde{v}_\alpha) }  \ =\ \frac{\frac1{n_2} \sum_{k=1}^{n_2} Q_k(y, \tilde{v}_\alpha)}{{\rm E}[ Q( v_\alpha)]} \cdot \frac{{\rm E}[ Q( v_\alpha)]}{\frac1{n_2} \sum_{k=1}^{n_2} Q_k(\tilde{v}_\alpha)},   
$$
and, by Lemmas \ref{WBC14} and \ref{ABC4}, we have ${\rm E}[ Q( v_\alpha)]/(\frac1{n_2} \sum_{k=1}^{n_2} Q_k(\tilde{v}_\alpha)) \rightarrow 1$ in probability as $n\rightarrow\infty$. 
Then, by Slutsky's lemma \citep{van2000}, we know that 
$$
\lim\limits_{n\rightarrow\infty} {\rm Pr}\left\{ \tilde{F}_{n_2}(y, \tilde{v}_\alpha) \geq \beta \right\}\ =\ \lim\limits_{n\rightarrow\infty} {\rm Pr}\left\{ \frac{\frac1{n_2} \sum_{k=1}^{n_2} Q_k(y, \tilde{v}_\alpha)}{{\rm E}[ Q( v_\alpha)]} \geq \beta\right\}.
$$ 

Let $A=\tilde\sigma({\rm CoVaR}_{\alpha,\beta},v_\alpha)/\{{\rm E}[Q(v_\alpha)] \cdot f_{Y|X} ({\rm CoVaR}_{\alpha,\beta}|v_\alpha) \}$, $y_{n_2}= {\rm CoVaR} + tAn_2^{-1/2}$, and $$c_{n_1,n_2}(t)= \frac{\sqrt{n_2} \left(\beta {\rm E}[ Q(v_\alpha) ] - {\rm E}[ Q(y_{n_2}, \tilde{v}_\alpha) |\tilde{v}_\alpha] \right)}{\tilde\sigma(y_{n_2}, \tilde{v}_\alpha)}.$$
Notice that $c_{n_1,n_2}(t)$ is a random variable since it depends on $\tilde{v}_\alpha$. 
Then, for any given $t\in\mathbb{R}$, we have 
\begin{eqnarray}
	\lefteqn{ \lim_{n\rightarrow\infty} {\rm Pr}\left\{ \frac{\sqrt{n_2}}{A}(\tilde{Y}^{\rm IS}- {\rm CoVaR}_{\alpha,\beta}) \leq t \right\} 
		\ =\ \lim_{n\rightarrow\infty} {\rm Pr}\left\{ \tilde{F}_{n_2} (y_{n_2} ,\tilde{v}_\alpha) \geq \beta \right\} } \nonumber \\
	& = & \lim_{n\rightarrow\infty} {\rm Pr}\left\{ \frac1{n_2} \sum_{k=1}^{n_2} Q_k(y_{n_2}, \tilde{v}_\alpha)   \geq \beta {\rm E}[ Q( v_\alpha) ] \right\}  \nonumber \\
	& = & \lim_{n\rightarrow\infty} {\rm Pr} \left\{ \frac{\sqrt{n_2} }{\tilde\sigma(y_{n_2}, \tilde{v}_\alpha)}\left(\frac1{n_2} \sum_{k=1}^{n_2} Q_k(y_{n_2}, \tilde{v}_\alpha)  - {\rm E}[ Q(y_{n_2}, \tilde{v}_\alpha) |\tilde{v}_\alpha] \right) \geq c_{n_1,n_2}(t) \right\}. \nonumber
\end{eqnarray}
Notice that, by Lemma \ref{AsymAB4}, we have 
$$
 {\rm Pr} \left\{ \frac{\sqrt{n_2} }{\tilde\sigma(y_{n_2}, \tilde{v}_\alpha)}\left(\frac1{n_2} \sum_{k=1}^{n_2} Q_k(y_{n_2}, \tilde{v}_\alpha)  - {\rm E}[ Q(y_{n_2}, \tilde{v}_\alpha) |\tilde{v}_\alpha] \right) \geq c_{n_1,n_2}(t) \right\} \ =\ {\rm E}\left[ \Phi(-c_{n_1,n_2}(t)) \right] + O(n_2^{-1/2}). 
$$
Then, we have 
\begin{equation}
	\label{eqn:BEthm2}
	{\rm Pr}\left\{ \frac{\sqrt{n_2}}{A}(\tilde{Y}^{\rm IS}- {\rm CoVaR}_{\alpha,\beta}) \leq t \right\} \ =\ {\rm E}\left[ \Phi(-c_{n_1,n_2}(t)) \right] + O(n_2^{-1/2}) +o(1). 
\end{equation}
By the definition of $F_{Y|X}$, we have 
\begin{eqnarray}
	\lefteqn{c_{n_1,n_2}(t)\ =\ \frac{tA}{\tilde\sigma(y_{n_2}, \tilde{v}_\alpha)}\frac{\beta {\rm E}[ Q(v_\alpha) ] - {\rm E}[ Q(y_{n_2}, \tilde{v}_\alpha) |\tilde{v}_\alpha] }{t A n_2^{-1/2}}}  \nonumber \\
	&=& \frac{tA{\rm E}[ Q( v_\alpha) ]}{\tilde\sigma(y_{n_2}, \tilde{v}_\alpha)}\frac{F_{Y|X}({\rm CoVaR}_{\alpha,\beta}|v_\alpha) - F_{Y|X}(y_{n_2} |v_\alpha)  }{tA n_2^{-1/2}} + \frac{tA}{\tilde\sigma(y_{n_2}, \tilde{v}_\alpha)}\frac{ {\rm E}[ Q(y_{n_2}, v_\alpha) ]  - {\rm E}[ Q(y_{n_2}, \tilde{v}_\alpha) ] }{tA n_2^{-1/2} }.~~~~~ \label{eqn:cc}  
\end{eqnarray}
Furthermore, notice that $y_{n_2}\rightarrow{\rm CoVaR}_{\alpha,\beta}$ as $n_2\rightarrow\infty$, so when $n_2$ is large enough, we have $y_{n_2} \in\mathcal{Y}$. Then, by the assumption $\tilde\sigma(y,x)$ is a continuous function of $(x,y)$ in $\mathbb{R}\times\mathcal{Y}$ and the continuous-mapping theorem, we have $ \tilde\sigma(y_{n_2}, \tilde{v}_\alpha) \rightarrow \tilde\sigma({\rm CoVaR}_{\alpha,\beta}, v_\alpha)$ w.p.1 as $n\rightarrow\infty$. 
Notice that \[
\frac{F_{Y|X}({\rm CoVaR}_{\alpha,\beta}|v_\alpha) -  F_{Y|X}(y_{n_2} |v_\alpha) }{tAn_2^{-1/2}} \ \rightarrow\ - f_{Y|X} ({\rm CoVaR}_{\alpha,\beta}|v_\alpha)
\]
as $n\to\infty$.
Then, the first term of Equation \eqref{eqn:cc} converges to $-t$ w.p.1 as $n\rightarrow\infty$. 
By Lemma \ref{RCWB4}, the second term is $O(\sqrt{n_2}/n_1)$ w.p.1 as $n\rightarrow\infty$. 
Therefore, we have $c_{n_1,n_2}(t) =-t+o(1)+O(\sqrt{n_2}/n_1)$ w.p.1 as $n\rightarrow\infty$. 

When $\sqrt{n_2}/n_1 \rightarrow 0$, as $n\rightarrow\infty$, it is clear that $c_{n_1,n_2}(t)\rightarrow -t $ w.p.1 as $n\rightarrow\infty$. 
By the continuous mapping theorem, we have $\Phi(-c_{n_1,n_2}(t))\rightarrow \Phi(t)$ w.p.1 as $n\rightarrow\infty$. Because $\Phi(-c_{n_1,n_2}(t))$ is bounded by $1$, so, by the dominated convergence theorem, we have ${\rm E}[\Phi(-c_{n_1,n_2}(t))] \rightarrow \Phi(t)$ as $n\rightarrow\infty$. 
By Equation \eqref{eqn:BEthm2}, 
$$
\lim_{n\rightarrow\infty} {\rm Pr}\left\{ \frac{\sqrt{n_2}}{A}(\tilde{Y}^{\rm IS}- {\rm CoVaR}_{\alpha,\beta}) \leq t \right\} \ =\ \Phi(t) 
$$
for any $t\in\mathbb{R}$. Therefore, $\frac{\sqrt{n_2}}{A}(\tilde{Y}^{\rm IS}- {\rm CoVaR}_{\alpha,\beta}) \Rightarrow N(0,1)$ as $n\rightarrow\infty$. 

When $\sqrt{n_2}/n_1 \to c$ as $n\to\infty$ for some constant $c\ne 0$, it is clear that $c_{n_1,n_2}(t)=-t+O(1)$ w.p.1. Therefore, there exists a constant $M>0$ such that $c_{n_1,n_2}(t)\in(-t-M,-t+M)$ w.p.1. Then, by Equation \eqref{eqn:BEthm2},
\begin{eqnarray*}
	\liminf_{n\to\infty} {\rm Pr}\left\{ \frac{\sqrt{n_2}}{A} \left(\tilde{Y}^{\rm IS}- {\rm CoVaR}_{\alpha,\beta}\right)\leq t \right\} \ \geq\ \Phi(t-M),\\
	\limsup_{n\to\infty} {\rm Pr}\left\{ \frac{\sqrt{n_2}}{A} \left(\tilde{Y}^{\rm IS}- {\rm CoVaR}_{\alpha,\beta}\right)\leq t \right\} \ \leq\ \Phi(t+M),
\end{eqnarray*}
and
\[
\limsup_{n\to\infty} {\rm Pr}\left\{ \left|\frac{\sqrt{n_2}}{A} \left(\tilde{Y}^{\rm IS}- {\rm CoVaR}_{\alpha,\beta}\right)\right|\ge t \right\} \ \leq\ 2\Phi(-t+M)
\]
for any $t\in\mathbb{R}$. Therefore, for any $\varepsilon>0$, there exists $t\in\mathbb{R}$ such that $
{\rm Pr}\left\{ \left|\frac{\sqrt{n_2}}{A} \left(\tilde{Y}^{\rm IS}- {\rm CoVaR}_{\alpha,\beta}\right)\right|\ge t \right\}\leq \varepsilon
$
for $n$ is large enough, i.e., $\tilde{Y}^{\rm IS}- {\rm CoVaR}_{\alpha,\beta}=O_{\rm Pr}(n_2^{-1/2})= O_{\rm Pr}(n^{-1/2})$. 
\end{proof}

\section{Parameter Setting in Section \ref{experiment3} }\label{ex3_parameters}

In the experiment of Section \ref{experiment3}, we assume that the portfolio losses, i.e., $X$ and $Y$, have 50 correlated risk factors, denoted by ${\rm \Delta S}$. Suppose ${\rm \Delta S}$ follows a multivariate normal distribution with mean vector $\bm{\mu}$ and covariance matrix $\Sigma$. Furthermore, $X$ and $Y$ can be approximated by a quadratic function with respect to ${\rm \Delta S}$, see Section \ref{subsec:IS:app}. We denote the initial parameters of the delta-gamma approximation as
\begin{align*}
	X &\ =\  -\bar\Theta_1 \Delta  t - \bar{\rm \delta}_1^{\top} {\rm \Delta S} - \frac{1}{2} {\rm \Delta S}^{\top}  \bar{\rm \Gamma}_1 {\rm \Delta S}, \\
	Y &\ =\  -\bar\Theta_2 \Delta  t - \bar{\rm \delta}_2^{\top} {\rm \Delta S} - \frac{1}{2} {\rm \Delta S}^{\top}  \bar{\rm \Gamma}_2 {\rm \Delta S}.
\end{align*}
Following the procedures in Section \ref{subsec:IS:app}, we derive a simpler form of $X$ and $Y$ as \eqref{eqn:is:X} and \eqref{eqn:is:Y} with respect to the standard normal random variable $Z_j$, for $j=1,\ldots, 50$. We denote the parameters of the simplified delta-gamma approximation as $\delta_{1j}$, $\gamma_{1j}$, $\delta_{2j}$, $\gamma_{2j}$, $j=1,\ldots, 50$. In our experiment, the parameters of ${\rm \Delta S}$ and initial delta-gamma approximation are generated randomly fixing the random seed. We provide the details of the parameters as follows.
\bigskip 

\noindent 1. The parameters of ${\rm \Delta S}$
\begin{itemize}
	\item The mean vector of ${\rm \Delta S}$ is $\bm 0$
	
	\item Generate covariance matrix of ${\rm \Delta S}$.
	
	$-$ Firstly, generate the standard deviation vector of ${\rm \Delta S}$. Generate 50 random variables from ${\rm Unif}[0,1]$. Sort from them smallest to largest and denote them as $\sigma_j$, $j=1,\ldots,50$. Let standard deviation vector $\bm{\sigma}=[\sigma_1,\ldots, \sigma_{50}]$. 
	
	$-$ Secondly, generate a vector of eigenvalues and then the correlation matrix. Generate 25 random variables $e_j$, $j=1,\ldots,25$, from ${\rm Unif}[0,2]$ and let $e_{50-j} = 2-e_j$. Thus, we obtain the vector of eigenvalues $\bm{e}=[e_1,\ldots,e_{50}]$. Given $\bm{e}$, generate the correlation matrix $A$ using ${\tt scipy.stats.random\_correlation}$ function in Python.
	
	$-$ Thirdly, compute the covariance matrix by $\Sigma = \bm{\sigma}^\top \bm{\sigma}\odot A$, where $\odot$ denote element-wise product.
\end{itemize}

\noindent 2. The parameters of initial delta-gamma approximation  
\begin{itemize}
    \item  $\bar\Theta_1=\bar\Theta_2=0$. 
    
	\item $\bar{\delta}_1$: 50-dimensional vector whose components are generated from ${\rm Unif}[-0.005,0.005]$.
	
	\item $\bar{\Gamma}_1$: $50\times 50$ matrix  generated by $(\bar{G}_1+\bar{G}_1^\top)/2$ where $\bar{G}_1[50,50]=0.8$(to denote heavy-weighted financial asset in the portfolio) and the other elements of $\bar{G}_1$ are generated from ${\rm Unif}[-0.02,0.02]$. 
	
	\item $\bar{\delta}_2$: 50-dimensional vector whose components are generated from ${\rm Unif}[-0.005,0.005]$.
	
	\item $\bar{\Gamma}_2$: $50\times 50$ matrix  generated by $(\bar{G}_2+\bar{G}_2^\top)/2$ where $\bar{G}_2[49,49]=0.1$, $\bar{G}_2[50,50]=0.05$ and the other elements of $\bar{G}_2$ are generated from ${\rm Unif}[-0.04,0.04]$.
\end{itemize}

\noindent 3. The parameters of the simplified delta-gamma approximations

Based on the parameters of ${\rm \Delta S}$ and initial delta-gamma approximation, we can derive the parameters of the simplified delta-gamma approximation:

\vspace{-11pt}
{\footnotesize
\begin{flalign*}
	&\delta_{1}=\\
	&
	\begin{array}{rrrrrrr}
		\big[\num{-2.04E-03},&\num{-5.56E-04},&\num{-3.06E-04},&\num{1.94E-03},&\num{7.03E-04},&\num{1.32E-04},&\num{-1.75E-03},\\
		\num{-6.79E-04},&\num{9.27E-04},&\num{2.14E-03},&\num{8.05E-04},&\num{-1.58E-03},&\num{4.74E-04},&\num{-2.06E-04},\\
		\num{-1.67E-03},&\num{-4.66E-04},&\num{6.03E-05},&\num{1.46E-03},&\num{3.17E-04},&\num{1.32E-03},&\num{1.96E-03},\\
		\num{-2.95E-03},&\num{-1.13E-03},&\num{-7.05E-04},&\num{-1.14E-03},&\num{-2.91E-03},&\num{-9.88E-04},&\num{5.80E-04},\\
		\num{2.81E-04},&\num{2.67E-03},&\num{2.86E-03},&\num{3.13E-03},&\num{-1.04E-04},&\num{1.03E-03},&\num{5.53E-04},\\
		\num{-1.01E-03},&\num{-3.17E-03},&\num{1.16E-03},&\num{-2.26E-04},&\num{1.87E-03},& \num{6.50E-04},&\num{3.38E-03}\\
		\num{1.88E-03},&\num{-3.42E-04},&\num{-3.97E-03},&\num{1.94E-03},&\num{-1.61E-03},&\num{-6.50E-04},&\num{-3.70E-04},\\
		\num{1.15E-03} \big],& & & & & &
	\end{array}
\end{flalign*}

\vspace{-11pt}
\begin{flalign*}
	&\gamma_{1}=\\
	&
	\begin{array}{rrrrrrr}
		\big[\num{-2.70E-02},&\num{-1.84E-02},&\num{-1.68E-02},&\num{-1.25E-02},&\num{-1.15E-02},&\num{-7.74E-03},&\num{-6.71E-03},\\
		\num{-5.80E-03},&\num{-5.08E-03},&\num{-4.45E-03},&\num{-3.77E-03},&\num{-3.18E-03},&\num{-2.35E-03},&\num{-1.98E-03},\\
		\num{-1.81E-03},&\num{-1.17E-03},&\num{-1.02E-03},&\num{-5.46E-04},&\num{-2.82E-04},&\num{-2.56E-04},&\num{-1.18E-04},\\
		\num{-6.98E-05},&\num{-3.71E-05},&\num{-2.54E-05},&\num{-8.86E-07},&\num{8.07E-06},&\num{2.65E-05},&\num{8.39E-05},\\
		\num{1.21E-04},&\num{1.25E-04},&\num{3.03E-04},&\num{5.26E-04},&\num{7.85E-04},&\num{9.46E-04},&\num{1.51E-03}\\
		\num{1.60E-03},&\num{2.02E-03},&\num{3.13E-03},&\num{3.58E-03},&\num{4.21E-03},&
		\num{5.03E-03},&\num{6.53E-03}\\
		\num{7.33E-03},&\num{7.93E-03},&\num{1.18E-02},&\num{1.43E-02},&\num{1.70E-02},&\num{2.20E-02},&\num{3.28E-02},\\
		\num{3.96E-01}\big],& & & & & &
	\end{array}
\end{flalign*}

\vspace{-11pt}
\begin{flalign*}
	&\delta_{2}=\\
	&
	\begin{array}{rrrrrrr}
		\big[\num{-4.27E-04},&\num{7.47E-05},&\num{-2.28E-03},&\num{-6.62E-04},&\num{-1.85E-03},&\num{-2.44E-03},&\num{-3.18E-03}\\
		\num{1.04E-03},&\num{1.55E-03},&\num{-1.54E-03},&\num{1.09E-03},&\num{-1.18E-03},&\num{-1.03E-03},&\num{2.03E-04}\\
		\num{-3.03E-03},&\num{6.99E-04},&\num{-2.17E-03},&\num{-1.46E-03},&\num{1.47E-03},&\num{-6.34E-04},&\num{7.60E-04}\\
		\num{3.49E-05},&\num{-3.45E-04},&\num{-4.75E-04},&\num{-6.02E-04},&\num{-3.13E-04},&\num{-9.54E-04},&\num{1.49E-03}\\
		\num{-1.65E-03},&\num{1.90E-03},&\num{1.01E-03},&\num{-5.71E-05},&\num{3.75E-04},&\num{1.63E-03},&\num{-9.59E-04}\\
		\num{1.67E-03},&\num{3.34E-03},&\num{3.69E-03},&\num{-2.46E-04},&\num{4.85E-03},&\num{9.56E-04},&\num{1.43E-03}\\
		\num{4.48E-03},&\num{3.79E-03},&\num{2.61E-04},&\num{4.53E-04},&\num{3.03E-03},&\num{3.88E-03},&\num{2.92E-03}\\
		\num{2.96E-03}\big].& & & & & &
	\end{array}
\end{flalign*}

\vspace{-11pt}
\begin{flalign*}
	&\gamma_{2}=\\
	&
	\begin{array}{rrrrrrr}
		\big[\num{-5.50E-02},&\num{-3.85E-02},&\num{-2.79E-02},&\num{-2.47E-02},&\num{-2.31E-02},&\num{-1.59E-02},&\num{-1.42E-02}\\
		\num{-1.22E-02},&\num{-1.01E-02},&\num{-8.10E-03},&\num{-6.67E-03},&\num{-4.35E-03},&\num{-3.85E-03},&\num{-3.40E-03}\\
		\num{-2.78E-03},&\num{-1.91E-03},&\num{-1.67E-03},&\num{-1.02E-03},&\num{-8.50E-04},&\num{-4.63E-04},&\num{-2.59E-04}\\
		\num{-2.42E-04},&\num{-3.56E-05},&\num{-1.69E-05},&\num{-9.58E-06},&\num{1.94E-06},&\num{3.31E-05},&\num{4.87E-05}\\
		\num{2.78E-04},&\num{5.07E-04},&\num{1.02E-03},&\num{1.33E-03},&\num{1.65E-03},&\num{2.21E-03},&\num{3.32E-03}\\
		\num{4.18E-03},&\num{5.12E-03},&\num{6.50E-03},&\num{8.09E-03},&\num{8.95E-03},&\num{1.05E-02},&\num{1.33E-02}\\
		\num{1.45E-02},&\num{1.94E-02},&\num{2.87E-02},&\num{3.32E-02},&\num{3.61E-02},&\num{3.93E-02},&\num{5.78E-02}\\
		\num{6.84E-02}\big].& & & & & &
	\end{array}
\end{flalign*}
}

\bibliographystyle{informs2014}  
\bibliography{Ref}  

\begin{thebibliography}{31}
\providecommand{\natexlab}[1]{#1}
\providecommand{\url}[1]{\texttt{#1}}
\providecommand{\urlprefix}{URL }

\bibitem[{Adrian \protect\BIBand{} Brunnermeier(2016)}]{CoVaR2016}
Adrian T, Brunnermeier MK (2016) {CoVaR}. \emph{The American Economic Review}
  106(7):1705--1741.

\bibitem[{Bassamboo et~al.(2008)Bassamboo, Juneja, \protect\BIBand{}
  Zeevi}]{bassamboo2008portfolio}
Bassamboo A, Juneja S, Zeevi A (2008) Portfolio credit risk with extremal
  dependence: Asymptotic analysis and efficient simulation. \emph{Operations
  Research} 56(3):593--606.

\bibitem[{Bernardi et~al.(2013)Bernardi, Maruotti, \protect\BIBand{}
  Petrella}]{bernardi2013multivariate}
Bernardi M, Maruotti A, Petrella L (2013) Multivariate markov-switching models
  and tail risk interdependence
  \urlprefix\url{https://arxiv.org/abs/1312.6407}.

\bibitem[{Bianchi et~al.(2022)Bianchi, De~Luca, \protect\BIBand{}
  Rivieccio}]{bianchi2022non}
Bianchi ML, De~Luca G, Rivieccio G (2022) Non-gaussian models for {CoVaR}
  estimation. \emph{International Journal of Forecasting} forthcoming.

\bibitem[{Bradley \protect\BIBand{} Taqqu(2003)}]{bradley2003financial}
Bradley BO, Taqqu MS (2003) Financial risk and heavy tails. \emph{Handbook of
  Heavy Tailed Distributions in Finance}, 35--103 (Elsevier).

\bibitem[{Cao(2013)}]{cao2013multi}
Cao Z (2013) Multi-{CoVaR} and shapley value: A systemic risk measure.
  \emph{Banq. France Work. Pap} online.

\bibitem[{Chu \protect\BIBand{} Nakayama(2012)}]{chu2012confidence}
Chu F, Nakayama MK (2012) Confidence intervals for quantiles when applying
  variance-reduction techniques. \emph{ACM Transactions on Modeling and
  Computer Simulation} 22(2):1--25.

\bibitem[{Duffie \protect\BIBand{} Pan(1997)}]{DuffiePan}
Duffie D, Pan J (1997) An overview of value at risk. \emph{Journal of
  Derivatives} 4(3):7--49.

\bibitem[{Durrett(2019)}]{durrett2019probability}
Durrett R (2019) \emph{Probability: Theory and Examples, 5th Edition}
  (Cambridge University Press).

\bibitem[{Girardi \protect\BIBand{} Erg{\"u}n(2013)}]{girardi2013systemic}
Girardi G, Erg{\"u}n AT (2013) Systemic risk measurement: Multivariate garch
  estimation of {CoVaR}. \emph{Journal of Banking \& Finance} 37(8):3169--3180.

\bibitem[{Glasserman(2004)}]{glasserman2004monte}
Glasserman P (2004) \emph{Monte Carlo Methods in Financial Engineering}
  (Springer).

\bibitem[{Glasserman et~al.(2000)Glasserman, Heidelberger, \protect\BIBand{}
  Shahabuddin}]{glasserman2000variance}
Glasserman P, Heidelberger P, Shahabuddin P (2000) Variance reduction
  techniques for estimating value-at-risk. \emph{Management Science}
  46(10):1349--1364.

\bibitem[{Glasserman et~al.(2002)Glasserman, Heidelberger, \protect\BIBand{}
  Shahabuddin}]{glasserman2002portfolio}
Glasserman P, Heidelberger P, Shahabuddin P (2002) Portfolio value-at-risk with
  heavy-tailed risk factors. \emph{Mathematical Finance} 12(3):239--269.

\bibitem[{Gordy \protect\BIBand{} Juneja(2010)}]{gordy2010nested}
Gordy MB, Juneja S (2010) Nested simulation in portfolio risk measurement.
  \emph{Management Science} 56(10):1833--1848.

\bibitem[{Hong(2009)}]{Hong09}
Hong LJ (2009) Estimating quantile sensitivities. \emph{Operations Research}
  57(1):118--130.

\bibitem[{Hong et~al.(2014)Hong, Hu, \protect\BIBand{} Liu}]{Hong2014}
Hong LJ, Hu Z, Liu G (2014) Monte carlo methods for value-at-risk and
  conditional value-at-risk: A review. \emph{ACM Transactions on Modeling and
  Computer Simulation} 24(4):1--37.

\bibitem[{Hong et~al.(2017)Hong, Juneja, \protect\BIBand{} Liu}]{Hong2017}
Hong LJ, Juneja S, Liu G (2017) Kernel smoothing for nested estimation with
  application to portfolio risk measurement. \emph{Operations Research}
  65(3):657--673.

\bibitem[{Hull(2012)}]{hull2012risk}
Hull J (2012) \emph{Risk Management and Financial Institutions, 3rd Edition}
  (John Wiley \& Sons).

\bibitem[{Jorion(2000)}]{Jorion}
Jorion P (2000) \emph{Value at Risk: The New Benchmark for Managing Financial
  Risk, 3rd Edition} (McGraw-Hill).

\bibitem[{Karimalis \protect\BIBand{} Nomikos(2018)}]{karimalis2018measuring}
Karimalis EN, Nomikos NK (2018) Measuring systemic risk in the european banking
  sector: A copula {CoVaR} approach. \emph{The European Journal of Finance}
  24(11):944--975.

\bibitem[{Law(2015)}]{Law2015}
Law AM (2015) \emph{Simulation Modeling and Analysis, 5th Edition}
  (McGraw-Hill).

\bibitem[{Liu(2015)}]{liu2015simulating}
Liu G (2015) Simulating risk contributions of credit portfolios.
  \emph{Operations Research} 63(1):104--121.

\bibitem[{Mainik \protect\BIBand{} Schaanning(2014)}]{mainik2014dependence}
Mainik G, Schaanning E (2014) On dependence consistency of {CoVaR} and some
  other systemic risk measures. \emph{Statistics \& Risk Modeling}
  31(1):49--77.

\bibitem[{Nakayama(2014)}]{Nakayama2014}
Nakayama MK (2014) Confidence intervals for quantiles using sectioning when
  applying variance-reduction techniques. \emph{ACM Transactions on Modeling
  and Computer Simulation} 24(4):1--21.

\bibitem[{Nolde et~al.(2022)Nolde, Zhou, \protect\BIBand{}
  Zhou}]{nolde2022extreme}
Nolde N, Zhou C, Zhou M (2022) An extreme value approach to {CoVaR} estimation
  \urlprefix\url{https://arxiv.org/abs/2201.00892}.

\bibitem[{Oh \protect\BIBand{} Patton(2018)}]{oh2018time}
Oh DH, Patton AJ (2018) Time-varying systemic risk: Evidence from a dynamic
  copula model of cds spreads. \emph{Journal of Business \& Economic
  Statistics} 36(2):181--195.

\bibitem[{Serfling(1980)}]{Serfling1980}
Serfling RJ (1980) \emph{Approximation Theorems of Mathematical Statistics}
  (John Wiley and Sons).

\bibitem[{Sun \protect\BIBand{} Hong(2010)}]{sun2010asymptotic}
Sun L, Hong LJ (2010) Asymptotic representations for importance-sampling
  estimators of value-at-risk and conditional value-at-risk. \emph{Operations
  Research Letters} 38(4):246--251.

\bibitem[{Van~der Vaart(2000)}]{van2000}
Van~der Vaart AW (2000) \emph{Asymptotic Statistics} (Cambridge University
  Press).

\bibitem[{Vo{\v{s}}vrda et~al.(2004)Vo{\v{s}}vrda, {\v{Z}}ike{\v{s}}
  et~al.}]{vovsvrda2004application}
Vo{\v{s}}vrda M, {\v{Z}}ike{\v{s}} F, et~al. (2004) An application of the
  garch-t model on central european stock returns. \emph{Prague Economic
  Papers} 1:26--39.

\bibitem[{Wilhelmsson(2006)}]{wilhelmsson2006garch}
Wilhelmsson A (2006) Garch forecasting performance under different distribution
  assumptions. \emph{Journal of Forecasting} 25(8):561--578.

\end{thebibliography}

\end{document}